\pdfoutput=1

\documentclass[sigconf,authorversion,nonacm,balance=false]{acmart}

\usepackage[lambda, operators, sets, primitives, keys, adversary, ff, asymptotics, complexity]{cryptocode}
\usepackage{xspace}
\usepackage{multirow}
\usepackage{makecell}
\usepackage{url}
\usepackage{amsthm}
\usepackage{enumitem}
\usepackage[capitalise, nameinlink, noabbrev]{cleveref}
\pcfixcleveref  

\newtheorem{theorem}{Theorem}

\newtheorem{lemma}{Lemma}

\theoremstyle{definition}
\newtheorem{definition}{Definition}

\let\originalleft\left
\let\originalright\right
\renewcommand{\left}{\mathopen{}\mathclose\bgroup\originalleft}
\renewcommand{\right}{\aftergroup\egroup\originalright}

\allowdisplaybreaks[1]

\let\originalpcalgostyle\pcalgostyle
\renewcommand{\pcalgostyle}[1]{\originalpcalgostyle{#1}\xspace}


\newcommand*{\comm}{\pcalgostyle{Comm}\xspace}
\newcommand*{\crh}{\pcalgostyle{CRH}\xspace}
\newcommand*{\setup}{\pcalgostyle{Setup}\xspace}
\newcommand*{\nizkpk}{\ensuremath{\pcalgostyle{NIZK}\text{-}\pcalgostyle{PK}}\xspace}

\newcommand*{\ldp}{\pcalgostyle{LDP}}
\newcommand*{\apply}{\pcalgostyle{Apply}}
\renewcommand{\kgen}{\pcalgostyle{KeyGen}}
\newcommand*{\prove}{\pcalgostyle{Prove}}
\renewcommand{\verify}{\pcalgostyle{Verify}}
\newcommand*{\merkletree}{\pcalgostyle{MerkleTree}}

\renewcommand*{\adv}{\pcadvstyle{A}}

\newcommand*{\cm}{\pckeystyle{cm}\xspace}
\newcommand*{\ek}{\key[ek]\xspace}
\renewcommand*{\pp}{\pckeystyle{pp}\xspace}
\newcommand*{\out}{\pckeystyle{out}\xspace}
\newcommand*{\trap}{\pckeystyle{trap}\xspace}
\newcommand*{\rt}{\pckeystyle{rt}\xspace}

\newcommand{\relation}{\mathcal{R}\xspace}

\begin{document}

\title{Efficient Verifiable Differential Privacy with Input Authenticity in the Local and Shuffle Model}

\author{Tariq Bontekoe}
\orcid{0000-0002-5331-4033}
\affiliation{%
    \institution{University of Groningen}
    \city{Groningen}
    \country{The Netherlands}}
\email{t.h.bontekoe@rug.nl}

\author{Hassan Jameel Asghar}
\orcid{0000-0001-6168-6497}
\affiliation{%
    \institution{Macquarie University}
    \city{Sydney}
    \country{Australia}}
\email{hassan.asghar@mq.edu.au}

\author{Fatih Turkmen}
\orcid{0000-0002-6262-4869}
\affiliation{%
    \institution{University of Groningen}
    \city{Groningen}
    \country{The Netherlands}}
\email{f.turkmen@rug.nl}

\begin{abstract}
    Local differential privacy (LDP) enables the efficient release of aggregate statistics without having to trust the central server (aggregator), as in the central model of differential privacy, and simultaneously protects a client's sensitive data.
    The shuffle model with LDP provides an additional layer of privacy, by disconnecting the link between clients and the aggregator.
    However, LDP has been shown to be vulnerable to malicious clients who can perform both input and output manipulation attacks, i.e., before and after applying the LDP mechanism, to skew the aggregator's results.
    
    In this work, we show how to prevent malicious clients from compromising LDP schemes.
    Our only realistic assumption is that the initial raw input is authenticated; the rest of the processing pipeline, e.g., formatting the input and applying the LDP mechanism, may be under adversarial control.
    We give several real-world examples where this assumption is justified.
    Our proposed schemes for verifiable LDP (VLDP), \emph{prevent both input and output manipulation attacks} against generic LDP mechanisms, requiring \emph{only one-time interaction between client and server}, unlike existing alternatives~\cite{movsowitzdavidowPrivacyPreservingTransactionsVerifiable2023, katoPreventingManipulationAttack2021}.
    Most importantly, we are the first to provide an efficient scheme for \emph{VLDP in the shuffle model}.
    We describe, and prove security of, two schemes for VLDP in the local model, and one in the shuffle model.
    We show that all schemes are \emph{highly practical}, with client run times of less than 2 seconds, and server run times of 5--7 milliseconds per client.
\end{abstract}

\keywords{differential privacy, shuffle model, verifiable computing}

\maketitle

\section{Introduction}\label{sec:introduction}
Most distributed data sharing applications either assume that the data obtained from a source is honestly obtained via the true input, or deviates arbitrarily from it.
Accordingly, one abstracts these sources as either honest or malicious, with the received data inheriting corresponding labels.
However, we argue that in many practical scenarios, the data processing pipeline at the source, from gathering raw input to formatted data to be sent to a central server, has more structure, which is not captured by such a simple abstraction~\cite{bontekoeVerifiablePrivacyPreservingComputing2024}.
The pipeline consists of several sequential components, that pass information to one another culminating in the final formatted data.
In this case, we may realistically assume that the adversary only controls some components of the source, rather than the entire pipeline.
This makes it possible to verify the validity of the claimed raw input and any subsequent processing on it, if the component receiving the raw input is outside adversarial reach.
Our target scenario is collection and release of data from multiple distributed sources by a central server using differential privacy (DP)~\cite{dworkCalibratingNoiseSensitivity2006}.
More specifically, we focus on the \emph{local}~\cite{kasiviswanathan2011ldp} and \emph{shuffle}~\cite{cheuDistributedDifferentialPrivacy2019} models of DP where the distributed nodes (data holders) do not trust the server (aggregator) with their formatted inputs in the clear.
On the other hand, the server needs to ensure that the data received from the clients is correctly obtained from the true, raw input.
The following \emph{use cases} further motivate the aforementioned threat model.

In \emph{sensor networks}, the main program of a sensor device decides which sensors to read data from and what data to send in a prescribed format to the server.
The sensor device contains various sensors, which are individual hardware components, e.g., chips (see for example~\cite{mbedNAMote72Mbed2018}).
An adversary could (indirectly) take control of the sensor device by replacing this main program with its own malicious program, which may influence the local data processing pipeline.
But, such a program does not corrupt the physical sensor itself, nor the raw data it produces.
    
Consider the case of energy companies obtaining the total (or average) power consumption of a group of households fitted with \emph{smart meters} at regular time intervals.
This data may reveal private information such as the sleeping patterns of house occupants~\cite{memoirs-smart-meter}.
Privacy to individual households can be provided by applying LDP to smart meter readings.
Smart meters do not currently support such functionality, and implementing it in all of them is not cost-efficient.
Fortunately, LDP could be applied via an app outside the smart meter.
However, as this app is outside the trusted environment, it can be manipulated to serve a malicious purpose, i.e., it might output completely different data, and thereby skew statistics.
    
\emph{Smartphones} and \emph{wearables} share their location via GPS sensors, which can be used for crowd estimation to identify hotspots or for crowd control.
Crowd estimation does not require exact GPS coordinates of individual devices; a coarse-grained aggregate location distribution often suffices.
This is an excellent use case of LDP to release a location histogram.
But, naive use of LDP may allow an attacker to send arbitrary locations, skewing the distribution.
Here again we can assume raw GPS coordinates from the sensors as being true (operating system space), but the application sending location information to the server may be malicious (user space).

Many \emph{other applications} fit this narrative, such as smartphone (e.g., accelerometer) or smart home (e.g., temperature) sensors.
Raw values (events) from such sensors are collected in the operating system (OS) space, before the applications running in the user space can process them further for a given task, e.g., gesture detection.
Last but not least, the raw inputs may be generated within a trusted computing module.
We give several concrete examples of trusted components in \cref{subsec:threat-model}.
Thus, we assume a setup where raw data is processed securely and correctly via one component (e.g., a secure enclave, OS space, a hardware module) before another program, possibly adversarial, further processes it before sending the formatted input to a server.

Assuming that the raw data is produced by a \emph{trusted component} at a client enables us to verify it using \emph{digital signatures}.
However, we also need to verify that the same input is used in the rest of the processing pipeline; all without revealing the raw input!

\subsubsection*{Our Contributions} 
\begin{itemize}[leftmargin=*]
    \item We propose three schemes for efficient verifiable LDP (VLDP). Our first (baseline) scheme requires multiple interactive rounds between client and server, similar to comparable works~\cite{katoPreventingManipulationAttack2021,movsowitzdavidowPrivacyPreservingTransactionsVerifiable2023}.
    However, unlike these works, our second scheme reduces the server load by requiring only one such round through randomness expansion techniques.
    Our final scheme further decreases the load on the server while simultaneously offering security in the shuffle model and in the presence of colluding clients.
    
    \item We present the \emph{first} scheme for VLDP in the \emph{shuffle model}~\cite{cheuDistributedDifferentialPrivacy2019}.
    In this model, a trusted shuffler shuffles the locally randomized inputs from the nodes, with the net effect of privacy amplification~\cite{balle2019shuffle}.
    The shuffle model scheme cannot be straightforwardly constructed through our LDP schemes, since the requirements for verifiability and input authenticity on the one hand, and unlinkability (necessary for the shuffle model) on the other hand oppose one another.
    Our VLDP scheme in the shuffle model only adds marginal overhead for the client: approximately 1.8 seconds versus 0.6 and 1.1 seconds for our other schemes.
    
    \item All our schemes protect against both \textit{input} and \textit{output manipulation attacks} as defined in~\cite{katoPreventingManipulationAttack2021}.
    In fact, we are the first to deal with input manipulation attacks, i.e., where an attacker can arbitrarily change the initial input while carrying out the rest of the LDP algorithm faithfully.
    We do this by relying on digital signatures created by a trusted component.
    Similarly, we ensure protection against output manipulation, in which the attacker tries to send arbitrary outputs to the server, by using verifiable randomness and zero-knowledge proofs. 
    
    \item We implement our VLDP schemes for the $k$-ary randomized response ($k$-RR) mechanisms for both histograms and real-valued data as described in~\cite{balle2019shuffle}.
    Unlike existing works, which only target specific randomizers, our schemes can be applied to generic randomizers\@.
    In fact, as long as the randomization used in the LDP mechanism can be approximated using a fixed number of uniformly random bits, our protocol can accommodate it.
    Hence, our solutions can be \emph{extended to many other LDP mechanisms}, e.g., Laplace or Staircase RR~\cite{dworkCalibratingNoiseSensitivity2006,wangLSRRLocalDifferential2022}, as detailed in \cref{subsec:ldp-inside-nizk}.
    \item We implement and evaluate our protocols on two real-world datasets: a smart meter dataset to get the approximate energy consumption per household, and a GPS dataset to obtain the histogram of locations.
    Our results show that the protocols are highly practical and scalable.
    Each client takes a maximum of 2 seconds for a single LDP message, and the server takes less than 7 milliseconds per client.
    Furthermore, the communication cost is only 200--485 bytes per client value (plus a small additional one-time message), as we show in \cref{subsec:conc-apps}.
\end{itemize}

\section{Related Work}\label{sec:related-work}
In alignment with our scope, we restrict our discussion to protocols for verifiable differential privacy~\cite{bontekoeVerifiablePrivacyPreservingComputing2024}, while observing that, to the best of our knowledge, no constructions for verifiable DP in the shuffle model can be found in existing academic literature.

\subsubsection*{Local Model} The earliest work in this area appears to be on cryptographic $k$-RR from Ambainis et al.~\cite{ambainis2004cryptographicRR}.
They do mention an even earlier work by Kikuchi et al.~\cite{kikuchi1999stochastic}, who reinvented the notion of RR for voting and provided cryptographic constructions to protect against cheating voters.
According to~\cite{ambainis2004cryptographicRR}, the schemes from~\cite{kikuchi1999stochastic} are less efficient and provide weaker security guarantees than theirs.

The main security concern addressed by~\cite{ambainis2004cryptographicRR} is the privacy of the server (interviewer).
Namely, the client (respondent) should not know the randomized outcome of her true input, because otherwise the respondent may end the protocol.
To ensure this and to verify that the RR mechanism is correctly computed, they propose several protocols based on oblivious transfer, Pedersen commitments and zero-knowledge proofs.
Randomness in the protocol is guaranteed by ensuring that the commitment parameters evaluate exactly to the probability of the correct or wrong response, requiring this probability to be a rational number.
The communication cost therefore suffers for high precision.
Their threat model is different from ours since we do not require privacy of the outcome of the LDP mechanism, and furthermore, it is not clear how their protocol can be extended to the shuffle model.

Kato et al.~\cite{katoPreventingManipulationAttack2021} extend~\cite{ambainis2004cryptographicRR} to several other variants of LDP ($k$-RR, unary encoding (OUE), local hashing (OLH)). However, their techniques are similar and once again assume that only the output can be manipulated, and the user otherwise uses the true value.
Therefore, they do not ensure that the correct input is being used, which, in our case, can be verified through digital signatures.

The constructions of~\cite{ambainis2004cryptographicRR} and~\cite{katoPreventingManipulationAttack2021} are improved by Song et al.~\cite{songEfficientDefensesOutput2023}, who also use an approach based on random sampling.
A client commits to a vector of entries that corresponds to the distribution of the randomized raw input.
Subsequently, the server requests openings of a subset of these commitments, to obtain a perturbed sample, and obtain sufficient guarantees about the correctness of the received commitment vector.
The authors show how to implement their techniques for $k$-RR and OUE randomizers.
The communication and computation costs are clearly an improvement over prior work, however due to the used approach, the communication costs still suffer for high precision.
Additionally, we do not see a clear way to efficiently extend their work to other randomizers.

In~\cite{munilla2022verDpPolling}, the authors propose LDP with verifiable computing to extract binary attributes from anonymous credentials (e.g., older than 18).
These binary attributes are certified through a third party using signatures, e.g., government or bank issued anonymous credentials.
They do not give details on how this signature verification is done.
They do provide detailed verifiable algorithms for binary RR and the exponential mechanism~\cite{mcsherry2007expmech} to sample attributes in a range (e.g., the age).
Unlike us, they do not provide protocols for $k$-RR, the shuffle model, and their protocols are not scrutinized using rigorous security definitions.

The closest work to ours is from~\cite{movsowitzdavidowPrivacyPreservingTransactionsVerifiable2023} who tackle the problem of releasing an attribute associated with a transaction in a differentially private manner while maintaining the anonymity of a transaction in a blockchain system for cryptocurrencies.
They demonstrate their scheme using binary RR, although they do mention that the scheme can be generalized to non-binary attributes, without giving further details (as we show in \cref{subsec:ldp-inside-nizk}, this is not trivial).
The private attribute is signed by a registration authority, so that it can be verified if the correct input was used in the RR mechanism.
They also check if the random coins used for RR are unbiased, through an interactive protocol between client and server.
Finally, the client provides a \nizk{} proof as a proof of correct application of the RR mechanism.
Our protocol has one element inspired by~\cite{movsowitzdavidowPrivacyPreservingTransactionsVerifiable2023}, namely generating joint randomness.
But, unlike them, our protocols apply to generic LDP randomizers, protect against input manipulation, and provide LDP in the shuffle model.
Moreover, apart from our baseline protocol, we only require a single round of interaction between client and server, which greatly reduces the communication load of both.
This does come at the cost of a more expensive \nizk{} proof for each client, but this trade-off pays off quickly (see \cref{sec:evaluation}).
Additionally, we do not suffer from the latency and scalability drawbacks caused by their dependence on blockchain.

\subsubsection*{Central Model}
The construction from~\cite{narayanVerifiableDifferentialPrivacy2015} is for verifying DP in the central model as opposed to the local model.
The main threat model tackled in their paper is a dishonest analyst who may publish wrong results, banking on the inherent noise in DP, which is different than ours.
The work from~\cite{tsaloliDifferentialPrivacyMeets2023} uses a similar threat model to~\cite{narayanVerifiableDifferentialPrivacy2015}.
Randomness, however, is generated interactively between the curator and a ``reader'' (an entity interested in verifying the claims of the analyst).
The work from~\cite{biswasVerifiableDifferentialPrivacy2023} tackles the problem of verifiable DP in the single curator and multiparty setting.
In the former, one server collects all client inputs.
The server then provides differentially private answers to a third party, the analyst.
In the multiparty setting, multiple servers receive inputs (secret shared) from clients and then compute the differentially private answer to a query that is then presented to the analyst.
Input verification is limited to range checks, rather than verifying their exact values.

Concurrent to our work, Bell et al.~\cite{bellCertifyingPrivateProbabilistic2024} present their construction for verifying private probabilistic mechanisms.
They consider a similar setting to~\cite{biswasVerifiableDifferentialPrivacy2023} but provide stronger security guarantees.
However, their solution only enables the server to answer counting queries and provide differential privacy through additive noise.
Specifically, they show how to use the binomial mechanism, and leave possible extensions to other randomizers for future work.
Contrary to~\cite{bellCertifyingPrivateProbabilistic2024}, we do offer verifiability for generic LDP randomizers.
Our techniques could potentially be extended to their work to support a wider variety of randomizers.

Finally, we note the scheme for confidential proofs of DP training from~\cite{shamsabadiConfidentialDPproofConfidentialProof2024}.
The authors show how to prevent output manipulation in DP-SGD~\cite{abadiDeepLearningDifferential2016} training of machine learning models in the central model.
They use zero-knowledge proofs in combination with commitments to the entire dataset.
However, due to enormous circuit size of a zero-knowledge proof for DP-SGD training, their method relies on interactive schemes with a heavy communication load.
This makes their approach unsuitable for the local model.

\subsubsection*{Other Works}
Another related work to ours, but which does not consider DP, is the ADSNARK system~\cite{backes2015adsnark} for proving computation on authenticated data while maintaining privacy.
Like us, they assume a trusted source that can provide authenticated initial data.
The client is required to compute a function of this data and send the result to the server.
To verify that the client has done the computation correctly, they propose their ADSNARK protocol based on Succinct Non-Interactive Arguments of Knowledge (SNARKs). Unlike us however, their setting is not distributed, and does not consider DP client inputs.
Finally, we would like to point out several works showing the susceptibility of LDP to input manipulation (or data poisoning) attacks~\cite{cheuManipulationAttacksLocal2021, wu2022ldp-poison-key-value, li2023ldp-poison-mean-var, caoDataPoisoningAttacks2021}, which highlight the need for cryptographic solutions for the integrity of initial data and their subsequent processing like our schemes.

\section{Preliminaries and Building Blocks}
\label{sec:preliminaries}
We describe the building blocks used in our protocols with
specific attention to zero-knowledge proofs and differential privacy.

\subsubsection*{PRGs from PRFs}
A pseudo-random function (PRF) family is defined as a family of functions implemented by a key $k \in \mathcal{K}$, where $\mathcal{K}$ is the key space.
A function $\prf(k, x)$ from this family deterministically maps an input $x \in \mathcal{X}$ to an output $y \in \mathcal{Y}$.
For a randomly chosen $k$, $\prf(k, \cdot)$ is indistinguishable from a true random function.
We can use PRFs to construct \emph{pseudo-random number generators (PRGs)}~\cite[Section 4.4]{bonehGraduateCourseApplied2023}: if we wish to sample a random bitstring from $\mathcal{Y}^\ell$, we define $\ell$ distinct, fixed input values $x_1, \ldots, x_\ell \in \mathcal{X}$.
Then, we can define a \prg with seed $k \in \mathcal{K}$ as $\prg(k) = \prf(k, x_1) || \ldots || \prf(k, x_\ell).$
We use this \prg construction to realize our schemes.

\subsubsection*{Commitment Schemes} A commitment scheme $C(x,r)$ takes as input a value $x$ and randomness $r$, and outputs a commitment $\cm$.
The pair $(x,r)$ is called the opening of the commitment.
A secure commitment scheme should be both hiding and binding.
\emph{Binding} means that, given a commitment $C(x,r)$, it should be hard to output a pair $(x',r')$, with $x' \neq x$, such that $C(x',r') = C(x,r)$.
\emph{Hiding} implies that, given two commitments to distinct input values, it should be hard to determine which commitment belongs to which input value, i.e., given $x_0 \neq x_1$ and $\cm_b = C(x_b,r)$, for a random secret bit $b$ and random secret $r$, it should be hard to determine $b$.

\subsubsection*{Digital Signature Schemes} A signature scheme \sig is a 3-tuple of p.p.t.\ algorithms $(\kgen, \sign, \verify)$, where \kgen generates the keys, \sign creates a signature $\sigma$ for a message $m$, and \verify asserts whether $\sigma$ is valid for $m$.
We only require that the signature scheme be secure against \emph{existential forgeries under a chosen message attack ($\textsf{EUF-CMA}$)}, i.e., an adversary \adv{} should not be able to create a valid message-signature pair $(m', \sigma')$ for a \emph{new} message $m' \neq m$.

\subsection{Zero-Knowledge Proofs}\label{subsec:zero-knowledge-proofs}
In our constructions, we rely upon \emph{non-interactive zero-knowledge proofs (NIZKs)}.
NIZKs are used to prove the existence of a secret witness $w$ for a given, public \emph{statement} $x$, such that the pair satisfies some \npol-relation $\relation$, i.e., $(x,w) \in \relation$.
Specifically, we consider NIZKs in the common reference string (CRS) model~\cite{blumNoninteractiveZeroknowledgeIts1988,desantisRobustNoninteractiveZero2001,grothSizePairingBasedNoninteractive2016}, which can be defined as a 4-tuple of p.p.t.\ algorithms $(\setup, \prove, \verify, \simulator)$.
The \setup algorithm generates the evaluation $\ek$ and verification $\vk$ keys (and a simulation trapdoor $\trap$) for a given relation $\relation$.
\prove uses $\ek$ to create a valid proof $\pi$ for a given statement-witness pair $(x,w)$, and \verify uses $\vk$ to assert the correctness of $\pi$ for a given statement $x$.
Finally, \simulator uses the trapdoor $\trap$ and $\ek$ to create a simulated proof for a statement $x$.

A secure NIZK scheme should satisfy the following (informal) properties. \textit{Completeness:} given a true statement, an honest prover should be able to convince an honest verifier. \textit{Soundness:} if the statement is false, no prover should be able to convince the verifier that it is true. \textit{Zero-knowledge:} a proof $\pi$ should reveal no information other than the truth of the public statement $x$, specifically it should leak no information about the witness $w$.
(Note: our constructions only require honest-verifier zero-knowledge.)

However, we are not just interested in knowing that a witness exists, we also want to confirm that the prover \emph{knows} this witness.
Therefore, in the remainder of our work we will look at \emph{NIZK proofs of knowledge (NIZK-PKs)}, which are NIZKs that additionally also satisfy knowledge soundness.
\emph{Knowledge soundness} is a stronger version of soundness that additionally requires the existence of an extractor $\edv_\adv$ that can produce a valid witness given complete access to the adversary $\adv$'s state.
In our implementation, we use zk-SNARKs~\cite{bitanskyExtractableCollisionResistance2012}: \nizkpk{} schemes that are also succinct, i.e., the verifier runs in $\poly[\secpar + |x|]$ time and the proof size is $\poly$.

\subsection{Differential Privacy}\label{subsec:differential-privacy}
Differential privacy (DP)~\cite{dworkCalibratingNoiseSensitivity2006} is a formal way of describing database privacy.
It provides precise measures for how much information about a dataset is leaked by (partial) disclosure through queries on the dataset.
Consider a database $X$ containing $n$ entries from the domain $\mathcal{X}$, i.e., $X \in \mathcal{X}^n$.
We consider two databases $X, X' \in \mathcal{X}^n$ as neighbors, denoted $X \sim X'$, if they differ in exactly one entry.
\begin{definition}[Differential Privacy~\cite{dworkCalibratingNoiseSensitivity2006}]
    A randomized algorithm $\mathcal{M}\colon \mathcal{X}^n \rightarrow \mathcal{Y}$ is $(\epsilon,\delta)$-differentially private, if for all $X \sim X' \in \mathcal{X}^n$ and for all $T \subseteq \mathcal{Y}$, we have $\Pr\left[\mathcal{M}(X) \in T\right] \leq e^\epsilon \Pr\left[M(X') \in T\right] + \delta$.
\end{definition}

Any $(\epsilon,\delta)$-DP randomization algorithm has two highly useful properties, following~\cite{balle2019shuffle}:

\begin{lemma}[Post-processing~\cite{dworkCalibratingNoiseSensitivity2006}]\label{lem:post-processing}
    If $\mathcal{M}$ is $(\epsilon,\delta)$-DP, then for every (deterministic or randomized) $\mathcal{M}'$, $\mathcal{M}' \circ \mathcal{M}$ is also \mbox{$(\epsilon,\delta)$-DP\@.}
\end{lemma}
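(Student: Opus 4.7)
The plan is to reduce the randomized case to the deterministic case, and then handle the deterministic case directly via the definition of $(\epsilon,\delta)$-DP. Fix neighboring databases $X \sim X'$ and an arbitrary (measurable) set $T \subseteq \mathcal{Z}$, where $\mathcal{Z}$ is the range of $\mathcal{M}'$. The goal is to show
\[
\Pr[(\mathcal{M}' \circ \mathcal{M})(X) \in T] \le e^{\epsilon}\Pr[(\mathcal{M}' \circ \mathcal{M})(X') \in T] + \delta.
\]

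First I would handle the deterministic case. If $\mathcal{M}'\colon\mathcal{Y}\to\mathcal{Z}$ is deterministic, define the preimage $S = \{y \in \mathcal{Y} : \mathcal{M}'(y) \in T\} \subseteq \mathcal{Y}$. Then the event $(\mathcal{M}'\circ\mathcal{M})(X)\in T$ is identical to $\mathcal{M}(X)\in S$, so applying the $(\epsilon,\delta)$-DP guarantee of $\mathcal{M}$ with the set $S$ yields the desired inequality immediately.

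Next I would extend to randomized $\mathcal{M}'$ by viewing it as a mixture of deterministic maps. Concretely, the randomness of $\mathcal{M}'$ can be modeled by an independent random variable $R$ taking values in some space $\mathcal{R}$ with distribution $\mu$, so that $\mathcal{M}'(y) = f(y,R)$ for a deterministic $f$. For each fixed $r \in \mathcal{R}$, the map $y \mapsto f(y,r)$ is deterministic, so by the previous paragraph the composition $f(\mathcal{M}(\cdot), r)$ is $(\epsilon,\delta)$-DP. Using independence of $R$ from $\mathcal{M}$'s internal coins and Fubini/Tonelli to swap the integrals, I would write
\[
\Pr[(\mathcal{M}'\!\circ\!\mathcal{M})(X)\in T] = \int_{\mathcal{R}} \Pr[f(\mathcal{M}(X),r)\in T]\, d\mu(r),
\]
apply the per-$r$ DP inequality pointwise inside the integral, and pull the bound back out. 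Since $\int d\mu = 1$, this yields the same $(\epsilon,\delta)$ bound for $\mathcal{M}'\circ\mathcal{M}$.

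The main conceptual obstacle is purely measure-theoretic: justifying the representation of $\mathcal{M}'$ as a mixture of deterministic functions and the interchange of integration with the probability bound. In the discrete setting this is just a finite convex combination and the argument is completely elementary; in general one invokes a standard randomization lemma (every randomized algorithm is a measurable family of deterministic ones parameterized by independent coins). Everything else is a direct unrolling of definitions, so I do not expect any further difficulty.
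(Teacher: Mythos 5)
Your proof is correct: the preimage argument settles the deterministic case, and the convex-combination (mixture-of-deterministic-maps) argument extends it to randomized post-processing, with the key observation that the $\delta$ term survives integration against a probability measure since $\int d\mu = 1$. The paper itself states this lemma without proof, citing Dwork et al., and your argument is essentially the standard one found in that reference (and in Dwork--Roth), so there is nothing to reconcile beyond the measure-theoretic care you already flag.
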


\begin{lemma}[Sequential composition~\cite{dworkBoostingDifferentialPrivacy2010}]\label{lem:composition}
    If $\mathcal{M}_1,\ldots,\mathcal{M}_n$ are $(\epsilon,\delta)$-DP, then the composed algorithm $\mathcal{M}' = \left(\mathcal{M}_1,\ldots,\mathcal{M}_n\right)$ is $(\epsilon', \delta' + n\delta)$-DP for any $\delta' > 0$ and $\epsilon' = \epsilon(e^\epsilon - 1)n + \epsilon\sqrt{2n\log(1/\delta')}$.
\end{lemma}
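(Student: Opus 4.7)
The plan is to control the pointwise privacy loss via a concentration-of-measure argument, following the strategy of Dwork, Rothblum, and Vadhan's advanced composition theorem. For any fixed pair of neighboring databases $X \sim X'$ and each mechanism $\mathcal{M}_i$, define the privacy loss random variable $Z_i(y) = \ln\bigl(\Pr[\mathcal{M}_i(X) = y] / \Pr[\mathcal{M}_i(X') = y]\bigr)$, sampled under $y \sim \mathcal{M}_i(X)$. I will show that with probability at least $1 - (\delta' + n\delta)$ over the joint output $(y_1, \ldots, y_n) \sim \mathcal{M}'(X)$, the accumulated loss $\sum_{i=1}^n Z_i(y_i)$ is at most $\epsilon'$, and then invoke the standard equivalence between a high-probability bound on the privacy loss and $(\epsilon',\delta'+n\delta)$-DP to conclude the claim for the composed mechanism $\mathcal{M}' = (\mathcal{M}_1,\ldots,\mathcal{M}_n)$.

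First I would reduce $(\epsilon,\delta)$-DP to a pointwise statement about $Z_i$: each $\mathcal{M}_i$ can be coupled (up to a $\delta$-probability ``bad'' event on each side) with a mechanism whose output ratio lies in $[e^{-\epsilon}, e^{\epsilon}]$, so that on the good event $|Z_i| \le \epsilon$ almost surely. Union-bounding the bad events over the $n$ mechanisms contributes the $n\delta$ slack in the final privacy parameter. Next I would bound $\mathbb{E}[Z_i]$ on the good event. Writing $r(y) = p_i(y)/q_i(y)$ and using the elementary inequality $(r-1)\ln r \le \epsilon(e^\epsilon - 1)$ valid for $r \in [e^{-\epsilon}, e^\epsilon]$, a short computation gives $\mathbb{E}[Z_i] = D_{\mathrm{KL}}(p_i\|q_i) \le \tfrac{1}{2}\bigl(D_{\mathrm{KL}}(p_i\|q_i) + D_{\mathrm{KL}}(q_i\|p_i)\bigr) \le \epsilon(e^\epsilon - 1)$, which produces the linear ``drift'' term in $\epsilon'$.

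With these two ingredients in hand, on the good event the $Z_i$ are independent, $\epsilon$-bounded in absolute value, and each has mean at most $\mu := \epsilon(e^\epsilon - 1)$. Applying Azuma--Hoeffding to $\sum_{i=1}^n (Z_i - \mathbb{E}[Z_i])$ yields $\Pr\bigl[\,\sum_i Z_i > n\mu + \epsilon\sqrt{2n\log(1/\delta')}\,\bigr] \le \delta'$, which matches the target $\epsilon'$ exactly. Running the same argument with $X$ and $X'$ swapped and union-bounding the $n\delta$ probability of any bad event with the $\delta'$ concentration failure gives the claimed $(\epsilon', \delta' + n\delta)$-DP.

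The main obstacle I anticipate is the expectation bound $\mathbb{E}[Z_i] \le \epsilon(e^\epsilon - 1)$: although elementary, it is precisely the step that separates advanced composition from the naive bound $\epsilon' = n\epsilon$, and it requires the observation that the privacy loss, while bounded by $\epsilon$ pointwise, has expectation quadratically smaller in $\epsilon$ (so that $n$ mechanisms accumulate only $\sqrt{n}\,\epsilon$ rather than $n\epsilon$ of loss, up to lower-order terms). The rest is standard concentration bookkeeping combined with tracking the two $\delta$-contributions separately.
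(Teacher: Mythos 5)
The paper offers no proof of this lemma at all: it is imported directly from \cite{dworkBoostingDifferentialPrivacy2010}, and your proposal reconstructs essentially the proof given in that source --- privacy-loss random variables, a per-mechanism drift bound of $\epsilon(e^\epsilon-1)$ on the expected loss, Azuma/Hoeffding concentration producing the $\epsilon\sqrt{2n\log(1/\delta')}$ term, and a per-mechanism $\delta$-coupling whose bad events are union-bounded to give the $n\delta$ slack, finished off by the standard translation from a high-probability privacy-loss bound to $(\epsilon',\delta'+n\delta)$-DP. So in structure you are exactly aligned with the cited argument, and since the lemma as stated is non-adaptive, independence of the $Z_i$ even spares you the martingale machinery.

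Two steps need repair, though only one is a genuine error. The chain $\mathbb{E}[Z_i]=D_{\mathrm{KL}}(p_i\|q_i)\le\tfrac12\bigl(D_{\mathrm{KL}}(p_i\|q_i)+D_{\mathrm{KL}}(q_i\|p_i)\bigr)$ is false in general: it is equivalent to $D_{\mathrm{KL}}(p_i\|q_i)\le D_{\mathrm{KL}}(q_i\|p_i)$, which need not hold. The fix is simply to drop the factor $\tfrac12$: by nonnegativity of KL divergence, $D_{\mathrm{KL}}(p_i\|q_i)\le D_{\mathrm{KL}}(p_i\|q_i)+D_{\mathrm{KL}}(q_i\|p_i)=\sum_y q_i(y)\,(r(y)-1)\ln r(y)\le\epsilon(e^\epsilon-1)$, using your own pointwise bound $(r-1)\ln r\le\epsilon(e^\epsilon-1)$ for $r\in[e^{-\epsilon},e^\epsilon]$; this recovers Lemma III.2 of \cite{dworkBoostingDifferentialPrivacy2010}. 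Separately, to land on the stated constant you must apply the Hoeffding/Azuma form for increments confined to an interval of length $2\epsilon$ (i.e., $Z_i\in[-\epsilon,\epsilon]$), which yields the tail $\exp\bigl(-t^2/(2n\epsilon^2)\bigr)$ and hence $t=\epsilon\sqrt{2n\log(1/\delta')}$; centering first and using only $|Z_i-\mathbb{E}[Z_i]|\le 2\epsilon$ would degrade the bound to $\epsilon\sqrt{8n\log(1/\delta')}$. With these local adjustments, and a careful (but standard) treatment of the $\delta$-coupling so that the DP guarantee transfers from the coupled $(\epsilon,0)$-mechanisms back to the real ones, your argument goes through and matches the cited proof.
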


\subsubsection*{Shuffle Model}
In the shuffle model, there are $n$ clients, each of whom holds a data entry $x_i \in \mathcal{X}$.
The shuffle model considers three algorithms, following the definitions of~\cite{cheuDistributedDifferentialPrivacy2019}:
\begin{itemize}[leftmargin=*]
\item A randomizer $\mathcal{R}\colon \mathcal{X} \rightarrow \mathcal{Y}$ that takes as input a data entry $x_i$ and outputs a value $\tilde{x}_i \in \mathcal{Y}$.\footnote{We only consider the single-message shuffle model. The more general shuffle model allows for an array of $m$ messages to be output by $\mathcal{M}$.}
\item A shuffler $\mathcal{S}\colon \mathcal{Y}^n \rightarrow \mathcal{Y}^n$ that takes as input a vector of $n$ messages and outputs these in a random order.
Specifically, on input $(\tilde{x}_1, \ldots, \tilde{x}_n)$, $\mathcal{S}$, outputs $(\tilde{x}_{\pi_1},\ldots, \tilde{x}_{\pi_n})$, where $\pi$ is a uniform random permutation of $[n]$.
\item An aggregator, or analyst, $\mathcal{C}\colon \mathcal{Y}^n \rightarrow \mathcal{Z}$, that takes as input a vector of $n$ messages $(\tilde{x}_{\pi_1},\ldots, \tilde{x}_{\pi_n})$ and outputs an estimation of $f(x_1,\ldots,x_n)$.
\end{itemize}

\begin{definition}[DP in the Shuffle Model~\cite{cheuDistributedDifferentialPrivacy2019}] A protocol $(\mathcal{R}, \mathcal{S}, \mathcal{C})$ is $(\epsilon,\delta)$-DP if the protocol $\mathcal{S}(\mathcal{R}(x_1),\ldots,\mathcal{R}(x_n))$ is $(\epsilon,\delta)$-DP\@.
\end{definition}

As a consequence of \cref{lem:composition} and \cref{lem:post-processing}, there is a composition property equivalent to \cref{lem:composition} for the shuffle model~\cite{cheuDistributedDifferentialPrivacy2019}.

\subsubsection*{Local Differential Privacy (LDP)}
When one replaces the shuffler $\mathcal{S}$ by an identity function, i.e., the vector of messages is not shuffled, we are left with the well-known model for LDP~\cite{kasiviswanathan2011ldp}:
\begin{definition}[Local Differential Privacy] A randomized algorithm $\mathcal{R}: \mathcal{X} \rightarrow \mathcal{Y}$ is $(\epsilon,\delta)-LDP$, if for all pairs $x,x' \in \mathcal{X}$, and for all $T \subseteq \mathcal{Y}$, we have $\Pr\left[\mathcal{R}(x) \in T\right] \leq e^\epsilon \Pr\left[\mathcal{R}(x') \in T\right] + \delta$.    
\end{definition}
The purpose of the shuffle mechanism is to amplify the privacy achievable via LDP. We give a concrete example of this in \cref{sec:dp-algorithms}.
In the remainder of this work, when we refer to an LDP algorithm, we will only denote the local randomizer, unless stated otherwise.

\section{DP Algorithms}\label{sec:dp-algorithms}

We consider two LDP algorithms, both of which appear in~\cite{balle2019shuffle}.
The first locally randomizes a real-number input $x \in [0, 1]$.
The goal of the aggregator is to output the sum of these inputs from $n$ users.
The second algorithm takes as input an integer $x \in [k]$ for $k \ge 2$, and locally randomizes it.
The application in this case is a histogram of values in $[k]$.
The algorithms are shown in \cref{fig:ldp-algos}.

\begin{figure}
    \begin{pchstack}[boxed,center]
        \scriptsize
        \pseudocode[
            head={LDP Algorithm for Reals~\cite{balle2019shuffle}},
            codesize=\scriptsize
        ]{
            \text{\textbf{input: }} k \in \mathbb{N}, x \in [0, 1], \gamma \in [0, 1]\\
            \overline{x} \gets \lfloor xk \rfloor + \text{Ber}(xk - \lfloor xk \rfloor)\\
            b \gets \text{Ber}(\gamma) \\
            \pcif b = 0 \pcdo \\
            \t \tilde{x} \gets \overline{x}\\
            \pcelse \\
            \t \tilde{x} \sample \{0, 1, \ldots, k\}\\
            \pcreturn \tilde{x}
        }
        \;\;
        \pseudocode[
            head={LDP Algorithm for Histograms~\cite{balle2019shuffle}},
            codesize=\scriptsize
        ]{
            \text{\textbf{input: }} k \in \mathbb{N}, x \in [k], \gamma \in [0, 1]\\
            b \gets \text{Ber}(\gamma) \\
            \pcif b = 0 \pcdo \\
            \t \tilde{x} \gets x\\
            \pcelse \\
            \t \tilde{x} \sample \{1, \ldots, k\}\\
            \pcreturn \tilde{x}
        }
    \end{pchstack}
    \Description{Details of the LDP algorithms for reals and histograms.}
    \caption{LDP randomizers for reals and histograms.}
    \label{fig:ldp-algos}
\end{figure}

In the LDP algorithm for reals, without loss of generality, we assume $x \in [0, 1]$.
For a precision level $k$, we first encode $x$ as an integer as $\overline{x} = \lfloor xk \rfloor + \text{Ber}(xk - \lfloor xk \rfloor)$~\cite{balle2019shuffle}.
It is easily verified that this encoding ensures that $\mathbb{E}(\overline{x}/k)$, which is the expected value of the decoded $\overline{x}$, is exactly $\mathbb{E}(x)$.
This makes the range of $\overline{x}$ equal to $\{0, 1, \ldots, k\}$.
This algorithm is $\epsilon$-DP, as long as $\frac{1 - k\gamma/(k+1)}{\gamma/(k+1)} \leq e^{\epsilon}$.
Equating the left hand side to the right hand side, we get
$\gamma = \frac{k + 1}{e^{\epsilon} + k}$.

Thus, we can set $\gamma$ to this value given $\epsilon$ and $k$.
If $\mathcal{R}$ is $(\epsilon, \delta)$-LDP, then the mechanism $\mathcal{M}: \mathcal{X}^n \rightarrow \mathcal{Y}^n$ defined as $\mathcal{M}(x_1, \ldots, x_n) = \mathcal{R}^n = (\mathcal{R}(x_1), \ldots, \mathcal{R}(x_n))$ is also $(\epsilon, \delta)$-DP\@.

For the LDP algorithm for histograms we can determine $\gamma$ using the same equation as above. However, we need to replace each occurrence of $k$ by $k-1$, due to the different range for $x$.

\subsubsection*{Aggregator} The aggregator for the LDP histogram algorithm simply outputs the histogram, i.e., the number of inputs for each $i \in [k]$.
For the LDP algorithm for reals, the aggregator should de-bias first.
Let $x_i$ be the $i$-th user's input, let $\overline{x}_i$ be the same input with precision $k$, and $\tilde{x}_i$ the $i$-th user's output of the LDP algorithm.
Then, as shown in \cref{subsec:de-bias}, the aggregator outputs $
\frac{1}{1-\gamma} \left( \frac{\sum_{i = 1}^n \tilde{x}_i}{k} - \frac{\gamma n}{2} \right)$, as estimate of $\frac{1}{k}\sum_{i = 1}^n \overline{x}_i$~\cite{balle2019shuffle}, which itself estimates $\sum_{i = 1}^n x_i$.

\subsubsection*{Shuffle Model} In the shuffle model, the inputs from all parties are first shuffled randomly, and then given to the aggregator.
This results in privacy amplification, as the aggregator now does not know which input belongs to which user.
The shuffle model of DP employs a shuffler $\mathcal{S}\colon \mathcal{Y}^n \rightarrow \mathcal{Y}^n$, which is a random permutation of its inputs.
The algorithm $\mathcal{M} \coloneq \mathcal{S} \circ \mathcal{R}^n\colon \mathcal{X}^n \rightarrow \mathcal{Y}^n$ then provides $(\epsilon, \delta)$-DP against the curator, but with the advantage that the local randomizer $\mathcal{R}$ need only be $(\epsilon_0, 0)$-LDP, with $\epsilon_0$ greater than $\epsilon$.
Ignoring logarithmic terms, $\epsilon_0$ is proportional to $n$ and inversely proportional to $\delta$.
Given a value of $\epsilon, \delta$ and $n$, we can use the script provided by~\cite{balle2019shuffle} to obtain a value of $\epsilon_0$ which uses a tighter analysis than given by the implicit bounds in the paper.
For instance, for the LDP histogram mechanism described above with $k=10$, i.e., $k$-ary RR, with $n = 100$ participants, $\delta = 10^{-6}$ and $\epsilon = 0.1$, we get $\epsilon_0 \approx 1.0032$ through the Bennett bound.
Thus, we can use the mechanism \emph{10 times more} than the LDP mechanism alone.

\subsection{LDP inside NIZK}\label{subsec:ldp-inside-nizk}
To verify the above LDP algorithms inside a \nizk circuit, we need to define how we evaluate an LDP algorithm in a \emph{deterministic} fashion, given a \emph{fixed} number of uniform random bits.
It must be deterministic in the sense that we need to be able to `recreate' random sampling inside the \nizk circuit.
Moreover, we observe that the \nizk proof is computed over a given, fixed, agreed upon relation $\relation$.
Therefore, the (maximum) number of random bits used should be fixed and known up front.
This has the downside that we cannot sample exactly from each distribution, but rather need to sample from \emph{approximate} distributions.
We tackle both issues simultaneously, by defining how to use a uniform random bitstring $\rho$ of length $\ell$, such that the distribution of $\ldp.\apply(x;\rho)$ is statistically close to the true randomized LDP algorithm.

\subsubsection*{Our Approximate Sampling Methods}
Approximations for NIZK encoding of LDP randomizers can be designed for most well-known distributions.
For the LDP randomizers defined in \cref{fig:ldp-algos}, we only need to approximate the Bernoulli distribution and the Discrete Uniform distribution.
\cref{fig:approximate-distributions} shows two algorithms for sampling from these distributions.
We give an additional example in \cref{subsec:example-ldp-nizk}.

These sampling methods clearly match our requirements, and are also statistically close to the true distributions, with the statistical distance decreasing exponentially in $\ell$.
These approximate sampling algorithms replace the random sampling steps in \cref{fig:ldp-algos}.
We give an exact specification of the resulting algorithms in \cref{subsec:approximate-ldp-randomizers}.
For a sufficiently large bitsize of $\rho$, these algorithms are statistically close approximations of the true LDP algorithms.
Moreover, the approximation error decreases exponentially in $|\rho|$.

\begin{figure}
    \begin{pchstack}[boxed,center]
        \scriptsize
        \pseudocode[
        head={$\widetilde{\text{Ber}}(\gamma; \rho)$},
        codesize=\scriptsize
        ]{
            \text{\textbf{input: }} \gamma \in [0, 1], \rho \in \bin^\ell \\
            \text{Interpret $\rho$ as an integer} \\
            \pcif \rho \leq \lfloor \gamma \cdot (2^\ell - 1) \rfloor \pcdo \\
            \t b \gets 1 \\
            \pcelse \\
            \t b \gets 0 \\
            \pcreturn b
        }
        \;\;
        \pseudocode[
        head={$\widetilde{\text{Unif}}([lb, ub]; \rho)$},
        codesize=\scriptsize
        ]{
            \text{\textbf{input: }} lb, ub \in \mathbb{Z}: lb < ub, \rho \in \bin^\ell \\
            \text{Interpret $\rho$ as an integer} \\
            \Delta \gets \lfloor 2^\ell/(ub - lb + 1) \rfloor \\
            \pcfor j \pcin \{0,\ldots,ub - lb - 1\} \\
            \t \pcif j \cdot \Delta \leq \rho < (j + 1) \cdot  \Delta \pcdo \\
            \t \t \pcreturn lb + j \\
            \pcreturn ub
        }
    \end{pchstack}
    \Description{Details of the approximate sampling distributions.}
    \caption{Algorithms for approximately sampling from the Bernoulli and Discrete Uniform distribution.}
    \label{fig:approximate-distributions}
\end{figure}

\subsubsection*{Generalization to Other Randomizers}
The previous construction can be generalized to many other distributions, thereby supporting our claim that our schemes are applicable to a wide class of local randomizers.
The essential difference in the construction across randomizers is showing how to efficiently (approximately) sample from the underlying distributions used by the randomizer.
The rest of the adaptations to the \nizk proof are straightforward.
Thus, below we discuss (at a high level) approaches for approximate sampling from other representative or state-of-the-art LDP randomizers to provide further evidence for the feasibility of encoding them inside \nizk circuits.
Additionally, we discuss estimates for $|\rho|$, since this dominates the computation cost (see also \cref{sec:evaluation}).

\begin{itemize}[leftmargin=*]
    \item \emph{Laplace noise} ~\cite{dworkCalibratingNoiseSensitivity2006} is often used in LDP to perturb continuous input values.
    For approximate Laplace sampling, one can first sample a sign bit (by taking the first bit of $\rho$) and then sample an exponentially distributed `distance' $\ell$.
    The latter can be approximated arbitrarily closely by sampling $\ell$ from a Poisson distribution as described in~\cite{dworkOurDataOurselves2006}: one samples $\ell$ in binary, where the $i$-th bit has a predefined bias $\gamma_i$, i.e., it is sampled from $\widetilde{\text{Ber}}(\gamma_i)$.
    The size of $\rho$ for this approach is $|\rho| \approx |\ell| \cdot \text{precision}(\widetilde{\text{Ber}})$, e.g., for 64-bit precision, we have $|\rho| \approx 512$ bytes.

    \item \emph{RAPPOR}~\cite{erlingssonRAPPORRandomizedAggregatable2014} was developed by Google and used as part of the Chrome browser.
    It adds LDP noise to users' responses to questions.
    First it encodes a raw input using a Bloom filter of size $\ell_B$, i.e., by hashing the input to a bit vector using different hash functions for each vector entry.
    Bloom filters generally use simple hash functions, that are evaluated at little cost inside a \nizk circuit.
    The resulting bit-vector is subsequently transformed using Bernoulli random sampling.
    Thus, we can use $\widetilde{\text{Ber}}$ (\cref{fig:approximate-distributions}) to approximate this efficiently.
    The amount of random bits required is $|\rho| \approx \ell_B \cdot \text{precision}(\widetilde{\text{Ber}})$, e.g., for 64-bit precision and a 20-bit Bloom filter, we have $|\rho| \approx 160$ bytes.
    
    \item \emph{Staircase Randomized Response (SRR)}~\cite{wangLSRRLocalDifferential2022} is a recent LDP algorithm for perturbing location data.
    It requires sampling from a `staircase' shaped distribution, where locations close to the true one are more likely to be sampled than locations further away.
    SRR relies on a specific bit-string encoding of locations with length $\ell$ (no more than 46 bits in practice), where the common prefix length is inversely proportional to the distance between two locations.
    Thus, given a true location $x$, we compute three values from $\rho$; the length of the common prefix, the value of the first different bit-pair, and the remaining bits.
    The first two can be sampled using $\widetilde{\text{Unif}}$ (\cref{fig:approximate-distributions}) and the latter can be taken directly from $\rho$.
    The resulting LDP value $\tilde{x}$ is then computed using simple if-else statements.
    We need $|\rho| \approx \ell + 2 \cdot \text{precision}(\widetilde{\text{Unif}})$ for this approach, e.g., for 64-bits precision, we have $|\rho| \approx 22$ bytes.
\end{itemize}

\begin{figure*}
    \centering
    \includegraphics[width=.92\textwidth]{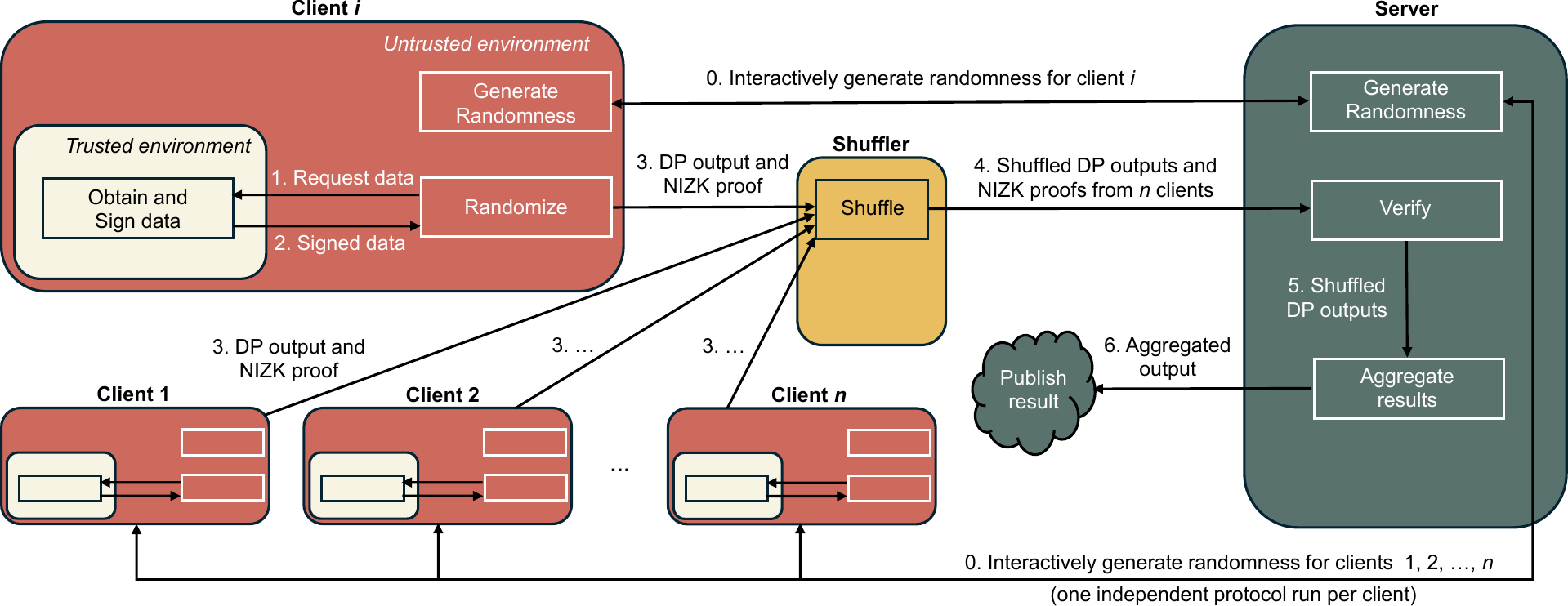}
    \caption{System model for the \pcalgostyle{VLDPPipeline}. For multiple time steps $j$, the clients reiterate the steps as explained further on. When using the `regular' local model, the shuffler is removed and the messages of step 3 are sent directly to the server instead.}
    \label{fig:system_model}
    \Description{System model for the \pcalgostyle{VLDPPipeline} in the shuffle model.}
\end{figure*}

Many more randomizers can be approximated similarly. \emph{Gaussian noise} can be approximated by repeated Bernoulli random sampling~\cite{dworkOurDataOurselves2006}.
Similarly, generic \emph{Randomized Response (RR)}~\cite{warnerRandomizedResponseSurvey1965,kairouzDiscreteDistributionEstimation2016} or \emph{Subset Selection mechanisms}~\cite{yeOptimalSchemesDiscrete2018,wangMutualInformationOptimally2016} are easily computed using (repeated) Bernoulli or Uniform random sampling.
\cite{wangLocallyDifferentiallyPrivate2017} presents a general framework for LDP randomizers for frequency estimation.
This framework splits randomizers in an encoding and a perturbation step.
The listed encodings --- direct, histogram, unary, and local hashing (like in RAPPOR) --- are all easily expressed in \nizk circuit constraints and will have small to negligible overhead.
The listed perturbations are either based on RR or Laplace noise, which we have discussed above. 

Finally, we observe that stateful LDP randomizers (such as~\cite{erlingssonAmplificationShufflingLocal2019}), i.e., randomizers whose behavior depends on previous calls to it, would require the \nizk proof to additionally verify the state update.
Whilst verifying this update is not a problem per se, it would require the previous state as input to the proof, which is not directly supported by our constructions (see \cref{sec:constructions}).
Since the overwhelming majority of LDP randomizers is not stateful, we leave a solution to this challenge as future work.

\section{Verifiable DP in the Local and Shuffle Model}\label{sec:vldp-description}
First, we describe the threat model.
Next, we sketch our system model and give a formal definition for a VLDP scheme that is applicable to both the local and the shuffle model.
Finally, we present formal security definitions.

\subsection{Threat Model}\label{subsec:threat-model}
There are three types of actors in the shuffle model: clients, shuffler and server (see also \cref{fig:system_model}).
The shuffler can be ignored for the local model.
We describe the threat model according to each actor.

\subsubsection{Clients} We assume that all client programs may potentially behave \emph{maliciously}, or collude with other clients, meaning that they could deviate from our scheme arbitrarily, or attempt to use false input data.
However, client programs have no control over the trusted environment and can only obtain signed input data $x$ from it, with signature $\sigma_x = \sig.\sign_{\sk_i}\left(x||t_x\right)$.
Each client $i$'s trusted environment contains a unique secret key $\sk_i$, which cannot be accessed by the potentially malicious client program.
Recall that we use the term trusted environment to denote any controlled environment outside adversarial reach, e.g., secure enclave, OS space, or hardware module.

\paragraph*{Examples of Trusted Environments}
Our model is applicable in situations where a trusted environment is viable.
Given that the presence of trusted environments in consumer hardware is increasingly strong, this is a reasonable requirement.
A concrete example is Apple's Secure Enclave~\cite{appleSecureEnclave}; implemented on iPhones and wearables like Apple watches and HomePods.
The enclave supports EdDSA and ECDSA signatures (see our discussion in \cref{subsec:implementation}). Our protocol minimizes processing within the enclave to input signing only.
Thus, the trusted module can be small.
The rest of the pipeline is executed outside the enclave, and is not assumed to be trustworthy.

Another example is kernel-space vs user-space in Linux-based OSes. 
Apps can only access user-space memory.
Thus any malicious app on a victim phone can not directly access hardware; only indirectly via the kernel~\cite{appleKernelArchitectureOverview}.
We recognize that attacks on trusted environments or kernels (jailbreaks) exist.
Yet, they would also apply to any work based on trusted execution environments (TEEs). Additionally, it is not straightforwardly clear if there is an approach for significantly reducing the reliance on some sort of trusted environment. The idea of loosening or removing this assumption is left to future work.

\subsubsection{Server} We assume the server to be \emph{semi-honest}, i.e., it will not deviate from the scheme, but does try to obtain as much information as possible whilst following the scheme.
Moreover, the server is assumed to be a non-colluding entity.
Finally, we assume that the server can verify which $\pk_i$'s are known/trusted public keys belonging to a trusted environment, e.g., by means of a public key infrastructure or whitelist of trusted keys.

\subsubsection{Shuffler} For the scheme in the shuffle model, we assume that the shuffler is an \emph{honest-but-curious}, independent, \emph{non-colluding} party.
In this work, for the sake of clarity, we will assume that the shuffler is a trusted third party.
In practice, different methods exist for implementing a shuffler, e.g., using mixnets.
We discuss these in \cref{sec:implement-shuffle}.
The actual choice of implementation for the shuffler is out of scope for this work, as our focus lies on constructing efficient, implementation-agnostic, secure VLDP schemes for the local and shuffle model.
Note that it is common for works in the shuffle model to leave this discussion out of scope~\cite{cheuDistributedDifferentialPrivacy2019, balle2019shuffle}.

\subsection{System Model}\label{subsec:system-model}
Let \pcalgostyle{VLDPPipeline} denote the high-level structure of a VLDP scheme, which describes the workings of the VLDP scheme with 1 server and $n$ clients for $T$ time steps (one for each message).
A schematic overview is shown in \cref{fig:system_model}.
First, \pcalgostyle{GenRand} ensures that the client has the necessary inputs to construct verifiably true random values later on.
It can be seen as a sort of preprocessing, where client and server together generate client-specific randomness to be used in the $j$-th time interval $(t_{j-1}, t_j]$, for $j \in [T]$.

Client $i$ generates a VLDP value $\tilde{x}_{i,j}$ for the $j$-th time interval by first requesting a fresh raw input value $x_{i,j}$ from the trusted environment at time $t_x^{i,j} \in (t_{j-1}, t_j]$.
In response, the trusted environment returns a signed input value $x$ with signature $\sigma_x^{i,j} = \sig.\sign_{\sk_i}(x_{i,j}||t_x^{i,j})$, where $\sk_i$ is the secret key of $i$'s trusted environment, which we assume has been generated beforehand.
This signature can be verified using the corresponding $\pk_i$.
Subsequently, $i$ calls \pcalgostyle{Randomize} to verifiably perturb $x_{i,j}$ and obtain $\tilde{x}_{i,j}$ and a correctness proof $\pi_{i,j}$, both of which are sent to the shuffler (or directly to the server in the local model).
The shuffler collects all messages for $(t_{j-1}, t_j]$: $((\tilde{x}_{1,j}, \pi_{1,j}),(\tilde{x}_{2,j}, \pi_{2,j}),\ldots,(\tilde{x}_{n,j}, \pi_{n,j}))$ and forwards these in random order $((\tilde{x}_{?_1, j}, \pi_{?_1, j}),(\tilde{x}_{?_2, j}, \pi_{?_2, j}),\ldots,(\tilde{x}_{?_n, j}, \pi_{?_n, j}))$ to the server, thus ensuring that the server cannot determine which message belongs to which client.

For each received $(\tilde{x}_{?_i, j},\pi_{?_i, j})$, the server runs \pcalgostyle{Verify}, to ensure that $\tilde{x}_{?_i, j}$ is correctly randomized from a value $x$, with $t_x \in (t_{j-1}, t_j]$ signed by a valid $\pk_{?_i}$.
Finally, the server uses all valid values to evaluate and publish its desired output $f(\tilde{x}_{?_1,j}, \ldots, \tilde{x}_{?_n,j})$.

\begin{definition}[VLDP Scheme]
    A VLDP scheme for an LDP algorithm $\ldp.\apply\colon\mathcal{X} \rightarrow \mathcal{Y}$ is a 5-tuple of p.p.t.\ algorithms $\mathcal{VLDP}$ for any number $n \geq 1$ of clients and one prover:
    \begin{itemize}[leftmargin=*]
        \item $\setup(1^\secpar) \rightarrow \pp$: Given the security parameter \secpar, this algorithm returns public parameters \pp.
        This is a tuple containing the \nizk relation $\relation$, parameters of a public key signature scheme $\pp_\text{sig}$ and a commitment scheme $\pp_\text{comm}$.
        Optionally, it also returns a vector $\vec{s}$ of $T$ $\prf$ seeds.
        \item $\kgen(\pp) \rightarrow (\ek, \vk, \pk_s, \sk_s, L)$: Given the public parameters $\pp$, this algorithm returns the evaluation \ek{} and verification key \vk{} for the \nizk proof, and the server's public and secret signature keys $(\pk_s$, $\sk_s)$ along with a list $L$.
        The list $L$ is populated with the identities of clients that have already been processed in a given time interval.
        \item $\pcalgostyle{GenRand}(\pp, \textsf{aux}) \rightarrow \out_c^i$: This interactive protocol between a single client and server takes as input the public parameters \pp{} and optional auxiliary information \textsf{aux}.
        The output of the client is defined as $\out_c^i$, which contains client-generated randomness, commitment to this randomness, server generated randomness, and a server signed signature binding the server generated randomness with the commitment to client's randomness.
        Depending on the scheme's instantiation,  $\out_c^i$ can be used for multiple time intervals or only for one.
        \item $\pcalgostyle{Randomize}(\pp, \ek, t_j, \out_c^i, x, t_x, \sigma_x) \rightarrow (\tilde{x}, \pi, \tau_x)$: Client $i$ uses \pp, $\ek$, a timestamp $t_j$, its output $\out_c^i$ from $\pcalgostyle{GenRand}$, the true input value $x$ with timestamp $t_x$ and signature $\sigma_x$ to compute an LDP value $\tilde{x}$, a \nizk proof $\pi$, and a vector of public values $\tau_x$.
        \footnote{We leave out the index-pair $(i,j)$ for $x, t_x,\sigma_x, \tilde{x}, \pi, \tau_x$ to improve legibility.\label{footnote:left_out_indices}}
        \item $\pcalgostyle{Verify}(\pp, \vk, t_j, \tilde{x}, \pi, \tau_x) \rightarrow \tilde{x} \cup \bot$: The server uses \pp, \vk, a timestamp $t_j$, $\tilde{x}$, $\pi$, and $\tau_x$ to verify whether $\tilde{x}$ was computed honestly.
        If so, it returns $\tilde{x}$, and $\bot$ otherwise.\footref{footnote:left_out_indices}
    \end{itemize}
\end{definition}

\subsection{Security definitions}
\label{subsec:security-defs}
A VLDP scheme should satisfy at least \emph{completeness}, \emph{soundness}, and \emph{zero-knowledgeness}.
Below, we provide the formal definitions of all these properties.
The experiments used in the definitions are detailed in \cref{subsec:experiments}, together with our formal security proofs.

Completeness guarantees that for any authenticated input $x$, created in the right time interval, the output of an honest client will be accepted by an honest server with probability 1.
\begin{definition}[Completeness]
    A scheme $\mathcal{VLDP}$ for an LDP method $\ldp.\apply\colon\mathcal{X} \rightarrow \mathcal{Y}$ with security parameter $\secpar$ is complete if for any $n = \poly$, any $T = \poly$, and for all p.p.t.\ \adv, we have
    $\Pr\left[\bot \in \textbf{\textsf{Exp}}^\text{Comp}_{\adv}(1^\lambda, n, T) \right] \leq \negl$,
    with $\textbf{\textsf{Exp}}^\text{Comp}_{\adv}$ as defined in \cref{fig:completeness-experiment}.
\end{definition}

On the other hand, soundness guarantees that no dishonest client can make a server accept an output, that is not an honest randomization of an authentic input $x$, except with negligible probability.
\begin{definition}[Soundness]\label{def:soundness}
    A scheme $\mathcal{VLDP}$ for an LDP method $\ldp.\apply\colon\mathcal{X} \rightarrow \mathcal{Y}$ with security parameter $\secpar$ is sound if, for any $n = \poly$, any $T = \poly$, for all p.p.t.\ \adv, and $\forall (x_{i,j}, y_{i,j}) \in \mathcal{X} \times \mathcal{Y}$, we have
    \begin{multline*}
    \left\vert\Pr\left[\ldp.\apply(x_{i,j}; \rho_{i,j}) = \{y_{i,j}\}_{i,j} \middle\vert \rho_{i,j} \sample \{0,1\}^* \right]\right. \\
    - \Pr\left[
        \textbf{\textsf{Exp}}^\text{Snd-Real}_{\adv, S^*}(1^\lambda, n, T, \{x_{i,j}\}_{i,j}) = \{y_{i,j}\}_{i,j}
        \middle\vert \right. \\
    \left.\left. \bot \not\in \textbf{\textsf{Exp}}^\text{Snd-Real}_{\adv, S^*}(1^\lambda, n, T, \{x_{i,j}\}_{i,j})
    \right] \right\vert \leq \negl,
    \end{multline*}
    with $\textbf{\textsf{Exp}}^\text{Snd-Real}_\adv$ as defined in \cref{fig:soundness-experiment}, where $S^*$ denotes an honest server that the adversary can interact with.
\end{definition}

The zero-knowledge property guarantees that the server learns nothing about the original input value $x$, other than what could already be learned from its randomization $\tilde{x}$.
\begin{definition}[Zero-knowledge]
    A scheme $\mathcal{VLDP}$ for an LDP method $\ldp.\apply\colon\mathcal{X} \rightarrow \mathcal{Y}$ with security parameter $\secpar$ is zero-knowledge if for any $n = \poly$, any $T = \poly$, there exists a p.p.t.\ simulator $\sdv=(\sdv_1, \sdv_2)$, such that for all p.p.t.\ adversaries \adv, and $\forall (x_{i,j}, y_{i,j}) \in \mathcal{X} \times \mathcal{Y}$, we have
    \begin{multline*}
    \left\{\textbf{\textsf{Exp}}^\text{Zk-Real}_{\adv}(1^\lambda, n, T, \{x_{i,j}\}_{i,j}) = (\cdot ,\{y_{i,j}\}_{i,j})\right\} \\
    \overset{c}{\equiv} \left\{\textbf{\textsf{Exp}}^\text{Zk-Sim}_{\adv,\sdv}(1^\lambda, n, T, \{y_{i,j}\}_{i,j})\right\},
    \end{multline*}
    with $\textbf{\textsf{Exp}}^\text{Zk-Real}_{\adv}$ and $\textbf{\textsf{Exp}}^\text{Zk-Sim}_{\adv,\sdv}$ as defined in \cref{fig:zk-experiment}.
\end{definition}

Additionally, for a VLDP scheme to be secure in the shuffle model, we require \emph{shuffle indistinguishability}, i.e., the server cannot discern an output sent by client $i$ from an output sent by client $i'$.

\begin{definition}[Shuffle indistinguishability]\label{def:shuffle-indistinguishability}
     A scheme $\mathcal{VLDP}$ for an LDP method $\ldp.\apply\colon\mathcal{X} \rightarrow \mathcal{Y}$ with security parameter $\secpar$ has shuffle indistinguishability if for every p.p.t.\ adversary \adv: $2|\Pr[\textbf{\textsf{Exp}}_{\adv}^\text{Ind}(\secpar) = 1 | \textbf{\textsf{Exp}}_{\adv}^\text{Ind}(\secpar) \neq \bot] - \frac{1}{2}| \leq \negl$,
    with $\textbf{\textsf{Exp}}_{\adv}^\text{Ind}$ as defined in \cref{fig:shuffle-ind-experiment}.
\end{definition}

\section{Our Constructions for VLDP}\label{sec:constructions}
In this section, we present our three VLDP schemes and explain the components that together form their respective construction of $\pcalgostyle{VLDPPipeline}$.
Each scheme improves upon the previous one, culminating in an efficient VLDP scheme that can be applied in the shuffle model.
\cref{subsec:proofs} contains formal security analyses.

\begin{enumerate}[leftmargin=*]
    \item The \pcalgostyle{Base} scheme achieves verifiable LDP in the local model.
    Its \pcalgostyle{GenRand} protocol is loosely inspired by the $\pcalgostyle{VerRR}$ algorithm in~\cite{movsowitzdavidowPrivacyPreservingTransactionsVerifiable2023} and should be run once per time interval (and client).
    The other algorithms are novel constructions, which together form a scheme that, unlike~\cite{movsowitzdavidowPrivacyPreservingTransactionsVerifiable2023}, also provides security against input manipulation attacks for authenticated data, supports generic LDP algorithms, and does not require a blockchain.
    \item The \pcalgostyle{Expand} scheme provides the same guarantees, but enables clients to reuse their output $\out_c^i$ of \pcalgostyle{GenRand} for every time interval.
    This significantly decreases the computation and communication load of the server, making the scheme suitable for sequential composition of DP\@.
    \item The \pcalgostyle{Shuffle} scheme has the same communication efficiency as \pcalgostyle{Expand}, but also achieves VLDP in the shuffle model.
\end{enumerate}

\renewcommand{\pclnstyle}[1]{\hspace*{-3pt}\text{\fontsize{6}{7}\selectfont#1}}

\begin{figure}
    
\fontsize{6}{7}\selectfont
\begin{pcvstack}[boxed,center,space=0.1cm]

\procedure[codesize=\fontsize{6}{7}\selectfont]{$\pcalgostyle{VLDPPipeline}_\text{base}$}{
    \pcln \pp \leftarrow \setup_\text{base}(1^\secpar) \\
    \pcln \text{Server computes }(\ek, \vk, \pk_s, \sk_s, L) \leftarrow \kgen_\text{base}(\pp) \\
    \pcln \pcfor \text{Each client $i$ \pcin $\{1,\ldots,n\}$ (in parallel)} \\
    \pcln \t \pcfor j \pcin~\{1,\ldots,T\} \\
    \pcln \t \t \text{Client $i$ obtains } \out_c^{i,j}= \pcalgostyle{GenRand}_\text{base}(\pp, t_j) \\
    \pcln \t \t \text{Client $i$ obtains fresh } (x_{i,j}, t_x^{i,j}, \sigma_x^{i,j}=\sig.\sign_{\sk_i}(x||t_x)) \\
    \pcln \t \t \text{Client $i$ runs } (\tilde{x}_{i,j}, \pi_{i,j}, \tau_{i,j}) = \pcalgostyle{Randomize}_\text{b.}(\pp, \ek, t_j, \out_c^{i,j}, x_{i,j}, t_x^{i,j}, \sigma_x^{i,j}) \\
    \pcln \t \t \text{Server obtains } \tilde{x}_{i,j} = \pcalgostyle{Verify}_\text{base}(\pp, \vk, t_j, \tilde{x}_{i,j}, \pi_{i,j}, \tau_{i,j}) \\
    \pcln \text{Server computes result from all $\tilde{x}_{i,j}$}
}

\begin{pchstack}[space=0.1cm]
\begin{pcvstack}[space=0.1cm]
\procedure[linenumbering,codesize=\fontsize{6}{7}\selectfont]{$\kgen_\text{base}(\pp)$}{
    (\ek, \vk) \leftarrow \nizkpk.\kgen(\relation_\text{base}) \\
    (\sk_s, \pk_s) \leftarrow \sig.\kgen(\pp_\text{sig}) \\
    L \leftarrow \emptyset \\
    \pcreturn (\ek, \vk, \pk_s, \sk_s, L)
}

\procedure[linenumbering,codesize=\fontsize{6}{7}\selectfont]{$\setup_\text{base}(1^\secpar)$}{
    \pp_\text{sig} \leftarrow \sig.\setup(1^\secpar) \\
    \pp_\text{comm} \leftarrow \comm.\setup(1^\secpar) \\
    \vec{t} = (t_0, \ldots, t_T) \\
    \pp = (\relation_\text{base}, \pp_\text{sig}, \pp_\text{comm}, \vec{t}) \\
    \pcreturn \pp
}
\end{pcvstack}

\procedure[codesize=\fontsize{6}{7}\selectfont]{$\relation_\text{base}$}{
    \text{Given $(t_{j-1}, t_j, \pk_i, \cm_{\rho_c}^{i,j}, \rho_s^{i,j}, \tilde{x}^{i,j})$, the} \\
    \text{prover knows } (t_x^{i,j}, x_{i,j}, \sigma_x^{i,j}, \rho_c^{i,j}, r_{\rho_c}^{i,j}) \text{ s.t.:} \\
    \pcln t_x^{i,j} \in (t_{j-1}, t_j] \\
    \pcln \sig.\verify_{\pk_i}(\sigma_x^{i,j}, x_{i,j} || t_x^{i,j}) = 1 \\
    \pcln \cm_{\rho_c}^{i,j} = \comm(\rho_c^{i,j}; r_{\rho_c}^{i,j} ) \\
    \pcln \rho_{i,j} = \rho_c^{i,j} \oplus \rho_s^{i,j} \\
    \pcln \tilde{x}_{i,j} = \ldp.\apply(x_{i,j}; \rho_{i,j})
}
\end{pchstack}

\begin{pchstack}[space=0.1cm]
\procedure[codesize=\fontsize{6}{7}\selectfont]{$\pcalgostyle{GenRand}_\text{base}(\pp, t_j)$ --- \textbf{Client $i$}}{
    \pcln k_c^{i,j} \sample \bin^* \\
    \pcln \rho_c^{i,j} = \prf(k_c^{i,j}, 0) \\
    \pcln r_{\rho_c}^{i,j} \sample \bin^* \\
    \pcln \cm_{\rho_c}^{i,j} = \comm(\rho_c^{i,j}; r_{\rho_c}^{i,j}) \\
    \pcln \text{Send $(\pk_i, \cm_{\rho_c}^{i,j}, t_j)$ to server} \\
    \pcln \text{Receive $(k_s^{i,j}, \sigma_s^{i,j})$ from server} \\
    \pcln \text{If } \sig.\verify_{\pk_s}(\sigma_s^{i,j},\pk_i || \cm_{k_c}^{i,j} || k_s^{i,j} || t_j) \\
    \t \t \t \t \t \t \t \t \t \t \t \t \t \neq 1 \text{, abort} \\
    \pcln \pcreturn \out_c^{i,j} = (\rho_c^{i,j}, r_{\rho_c}^{i,j}, \cm_{\rho_c}^{i,j}, k_s^{i,j}, \sigma_s^{i,j})
}

\procedure[codesize=\fontsize{6}{7}\selectfont]{$\pcalgostyle{GenRand}_\text{base}(\pp, t_j)$ --- \textbf{Server}}{
    \pcln \text{Receive } (\pk_i, \cm_{\rho_c}^{i,j}, t_j) \text{ from client $i$} \\
    \pcln \text{If $\pk_i$ does not belong to $i$, abort} \\
    \pcln \text{If $(i, t_j) \in L$, abort} \\
    \pcln L \leftarrow L \cup \{i, t_j\} \\
    \pcln k_s \sample \bin^* \\
    \pcln \text{Send $(k_s^{i,j}, \sigma_s^{i,j})$ to client $i$} \\
}
\end{pchstack}

\begin{pchstack}[space=0.05cm]
\procedure[codesize=\fontsize{6}{7}\selectfont]{$\pcalgostyle{Randomize}_\text{b.}(\pp, \ek, t_j, \out_c^{i,j}, x_{i,j}, t_x^{i,j}, \sigma_x^{i,j})$}{
    \pcln \rho_s^{i,j} = \prf(k_s^{i,j}, 0) \\
    \pcln \rho_{i,j} = \rho_c^{i,j} \oplus \rho_s^{i,j} \\
    \pcln \tilde{x}_{i,j} = \ldp.\apply(x_{i,j}; \rho_{i,j}) \\
    \pcln \vec{\phi}_{i,j} = (t_{j-1}, t_j, \pk_i, \cm_{\rho_c}^{i,j}, \rho_s^{i,j}, \tilde{x}_{i,j}) \\
    \pcln \vec{w}_{i,j} = (t_x^{i,j}, x_{i,j}, \sigma_x^{i,j}, \rho_c^{i,j}, r_{\rho_c}^{i,j}) \\
    \pcln \pi_{i,j} = \nizkpk.\prove_\ek(\vec{\phi}_{i,j}; \vec{w}_{i,j}) \\
    \pcln \tau_{i,j} = (\pk_i, \cm_{\rho_c}^{i,j}, k_s^{i,j}, \sigma_s^{i,j}) \\
    \pcln \text{Send } (\tilde{x}_{i,j}, \tau_{i,j}) \text{ to server}
}
\procedure[codesize=\fontsize{6}{7}\selectfont]{$\pcalgostyle{Verify}_\text{base}(\pp, \vk, t_j, \tilde{x}_{i,j}, \pi_{i,j}, \tau_{i,j})$}{
    \pcln \text{Parse } \tau_{i,j} = (\pk_i, \cm_{\rho_c}^{i,j}, k_s^{i,j}, \sigma_s^{i,j}) \\
    \pcln \text{If $\pk_i$ does not belong to $i$, abort} \\
    \pcln \text{If } \sig.\verify_{\pk_s}(\sigma_s^{i,j}, \\
    \t \t \t \t \pk_i || \cm_{k_c}^{i,j} || k_s^{i,j} || t_j) \neq 1 \text{, abort} \\
    \pcln \rho_s^{i,j} = \prf(k_s^{i,j}, 0) \\
    \pcln \vec{\phi}_{i,j} = (t_{j-1}, t_j, \pk_i, \cm_{\rho_c}^{i,j}, \rho_s^{i,j}, \tilde{x}_{i,j}) \\
    \pcln \text{If } \nizkpk.\pcalgostyle{Vfy}_\vk(\pi_{i,j}, \vec{\phi}_{i,j}) \neq 1\text{, abort} \\
    \pcln \pcreturn \tilde{x}_{i,j}
}
\end{pchstack}

\end{pcvstack}
\caption{\pcalgostyle{Base} scheme: VLDP with one server and $n$ clients.}
\label{fig:scheme-base}
\Description{Details of the \pcalgostyle{Base} scheme.}
\end{figure}

\renewcommand{\pclnstyle}[1]{\text{\scriptsize#1}}

\subsection{Base Scheme}\label{subsec:base-scheme}
\cref{fig:scheme-base} describes the \textsf{Base} scheme in detail.
Each client $i$ obtains fresh randomness for time interval $j$ by running an independent instance of $\pcalgostyle{GenRand_\text{base}}$ with the server. 
Together, they compute the necessary values to construct a true random value $\rho_{i,j}$ for later use in $\pcalgostyle{Randomize_\text{base}}$.
The bit length of $\rho$ will be equal to the output of the $\prf$ used to generate $\rho$, and is denoted by $|\rho|$.
In case the required number of bits $\ell$ needed to evaluate $\ldp.\apply()$ is lower than $|\rho|$, we can simply ignore the unused bits.
However, in case $\ell > |\rho|$, we need to evaluate the \prf on one or more additional inputs, depending on $\ell$, and concatenate the results.
For clarity, we assume that $\ell \leq |\rho|$ in our scheme definitions, since it can be extended easily using this method.
In our experimental evaluations (\cref{sec:evaluation}), we evaluate the influence of $\ell$ on the performance.

In an instance of $\pcalgostyle{GenRand_\text{base}}$, client $i$ first generates its own random bits $\rho_c^{i,j}$ (we explicitly show the use of a \prf in step 1 and 2 of \cref{fig:scheme-base} to resemble the later schemes).
Subsequently, $i$ computes a commitment $\cm_{\rho_c}^{i,j}$ to $\rho_c^{i,j}$, and shares it along with its trusted environment's public key $\pk_i$, and a time interval marker $t_j$ with the server.
The eventual randomness is then also bound to $t_j$, such that the client cannot create a large batch of random values, and then pick a specific value from this batch.
That would clearly violate the requirements for verifiable randomization.

The server first checks whether $\pk_i$ indeed belongs to $i$, and that $i$ did not previously construct a random value for $t_j$, i.e., whether $(i,t_j) \not\in L$.
Next, the server generates a valid \prf seed $k_s^{i,j}$ and computes $\sigma_s^{i,j} = \sig.\sign_{\sk_s}(\pk_i || \cm_{\rho_c}^{i,j} || k_s^{i,j} || t_j)$.
The server then sends $(k_s^{i,j}, \sigma_s^{i,j})$ to $i$, who verifies $\sigma_s^{i,j}$.
Note that, rather than using a signature, the server could instead maintain a list of $(\pk_i, \cm_{\rho_c}^{i,j})$ for each client and compare this state in $\pcalgostyle{Verify}_\text{base}$ later.
We, however, choose this approach to minimize the server's storage load.

In $\pcalgostyle{Randomize}_\text{base}$, the client computes the server part of the randomness $\rho_s^{i,j}$ from $k_s^{i,j}$, and combines the client and server parts to obtain a true random value $\rho_{i,j} = \rho_c^{i,j} \oplus \rho_s^{i,j}$.
Subsequently, the client uses $\rho_{i,j}$ to transform $x_{i,j}$ into a differentially private value $\tilde{x}_{i,j}$ using $\ldp.\apply()$.
Finally, the client computes the $\nizkpk$ for $\relation_\text{base}$ to attest to a number of statements: (1) the true value $x_{i,j}$ was signed using $\pk_i$ and obtained at a time $t_x^{i,j}$, such that $t_x^{i,j} \in (t_{j-1}, t_j]$; (2) $\rho_{i,j}$ is a true random value, i.e., $\cm_{\rho_c}^{i,j} = \comm(\rho_c^{i,j}; r_{\rho_c}^{i,j})$ and $\rho_{i,j} = \rho_c^{i,j} \oplus \rho_s^{i,j}$; and (3) $\tilde{x}_{i,j}$ is the result of $\ldp.\apply(x_{i,j};\rho_{i,j})$.

When the server receives an LDP value $\tilde{x}^{i,j}$, proof $\pi^{i,j}$ and public values $(\pk_i, \cm_{\rho_c}^{i,j}, k_s^{i,j}, \sigma_s^{i,j})$ from the $i$-th client, it verifies correctness of $\rho_s^{i,j}$, $\sigma_s^{i,j}$ and $\pi_{i,j}$ for $t_j$.
If all hold, it knows that $\tilde{x}^{i,j}$ is a correct DP version of an authentic input.

\renewcommand{\pclnstyle}[1]{\hspace*{-3pt}\text{\fontsize{6}{7}\selectfont#1}}

\begin{figure}
\fontsize{6}{7}\selectfont
\begin{pcvstack}[boxed,center,space=0.1cm]
    
\procedure[codesize=\fontsize{6}{7}\selectfont]{$\pcalgostyle{VLDPPipeline}_\text{expand}$}{
    \pcln \pp \leftarrow \setup_\text{expand}(1^\lambda) \\
    \pcln \text{Server computes }(\ek, \vk, \pk_s, \sk_s, L) \leftarrow \kgen_\text{expand}(\pp) \\
    \pcln \pcfor \text{Each client $i$ (in parallel)} \\
    \pcln \t \text{Client obtains } \out_c^i= \pcalgostyle{GenRand}_\text{expand}(\pp) \\
    \pcln \t \pcfor j \pcin~\{1,\ldots,T\} \\
    \pcln \t \t \text{Client $i$ obtains fresh } (x_{i,j}, t_x^{i,j}, \sigma_x^{i,j}=\sig.\sign_{\sk_i}(x_{i,j}||t_x^{i,j})) \\
    \pcln \t \t \text{Client $i$ runs } (t_j, \tilde{x}_{i,j}, \pi_{i,j}, \tau_{i,j}) = \pcalgostyle{Randomize}_\text{e.}(\pp, \ek, t_j, \out_c^i, x_{i,j}, t_x^{i,j}, \sigma_x^{i,j}) \\
    \pcln \t \t \text{Server obtains } \tilde{x}_{i,j} = \pcalgostyle{Verify}_\text{expand}(\pp, \vk, t_j, \tilde{x}_{i,j}, \pi_{i,j}, \tau_{i,j}) \\
    \pcln \text{Server computes result from all $\tilde{x}_{i,j}$}
}

\begin{pchstack}[space=0.1cm]
\begin{pcvstack}[space=0.1cm]
\procedure[linenumbering,codesize=\fontsize{6}{7}\selectfont]{$\kgen_\text{expand}(\pp)$}{
    (\ek, \vk) \leftarrow \nizkpk.\kgen(\relation_\text{expand}) \\
    (\sk_s, \pk_s) \leftarrow \sig.\kgen(\pp_\text{sig}) \\
    \pp = (\pk_s, \pp_\text{nizk}, \pp_\text{sig}, \pp_\text{comm}, \vec{s}) \\
    L \leftarrow \emptyset \\
    \pcreturn (\ek, \vk, \pk_s, \sk_s, L)
}

\procedure[linenumbering,codesize=\fontsize{6}{7}\selectfont]{$\setup_\text{expand}(1^\secpar)$}{
    \pp_\text{sig} \leftarrow \sig.\setup(1^\secpar) \\
    \pp_\text{comm} \leftarrow \comm.\setup(1^\secpar) \\
    \vec{s} = (s_1, s_2, \ldots, s_T) \sample \bin^{\lambda \times T}\\
    \vec{t} = (t_0, \ldots, t_T) \\
    \pp = (\relation_\text{expand}, \pp_\text{sig}, \pp_\text{comm}, \vec{s}, \vec{t}) \\
    \pcreturn \pp
}
\end{pcvstack}

\procedure[codesize=\fontsize{6}{7}\selectfont]{$\relation_\text{expand}$}{
    \text{Given $(t_{j-1}, t_j, \pk_i, \rt_i, \rho_s^{i,j}, \tilde{x}_{i,j})$,} \\
    \text{the prover knows $(t_x^{i,j}, x_{i,j}, \sigma_x^{i,j},$} \\
    \rho_c^{i,j}, r_{\rho_c}^{i,j}, \cm_{\rho_c}^{i,j}) \text{ such that:} \\
    \pcln t_x^{i,j} \in (t_{j-1}, t_j] \\
    \pcln \sig.\verify_{\pk_i}(\sigma_x^{i,j}, x_{i,j}||t_x^{i,j}) = 1 \\
    \pcln \cm_{\rho_c}^{i,j} = \comm(\rho_c^{i,j}; r_{\rho_c}^{i,j} ) \\
    \pcln \cm_{\rho_c}^{i,j} \text{ is leaf $j$ in \merkletree with root $\rt_i$} \\
    \pcln \rho_{i,j} = \rho_c^{i,j} \oplus \rho_s^{i,j} \\
    \pcln \tilde{x}_{i,j} = \ldp.\apply(x_{i,j}; \rho_{i,j})
}
\end{pchstack}

\begin{pchstack}[space=0.1cm]
\procedure[codesize=\fontsize{6}{7}\selectfont]{$\pcalgostyle{GenRand}_\text{expand}(\pp)$ --- \textbf{Client $i$}}{
    \pcln k_c^i \sample \bin^* \\
    \pcln \vec{\rho}_c^i = ( \prf(k_c^i, 1), \ldots, \prf(k_c^i, T) ) \\
    \pcln \vec{r}_{\rho_c}^i \sample \bin^{T \times *} \\
    \pcln \vec{\cm}_{\rho_c}^i = (\comm(\rho_c^1; r_{\rho_c}^1), \ldots,\comm(\rho_c^T; r_{\rho_c}^T)) \\
    \pcln \rt_i = \merkletree(\vec{\cm}_{\rho_c}^i) \\
    \pcln \text{Send $(\pk_i, \rt_i)$ to server} \\
    \pcln \text{Receive $(k_s^i, \sigma_s^i)$ from server} \\
    \pcln \text{If } \sig.\verify_{\pk_s}(\sigma_s^i, \pk_i || \rt_i || k_s^i) \neq 1 \text{, abort} \\
    \pcln \pcreturn \out_c^i =  (\vec{\rho}_c^i, \vec{r}_{\rho_c}^i, \vec{\cm}_{\rho_c}^i, \rt_i, k_s^i, \sigma_s^i)
}

\procedure[linenumbering, codesize=\fontsize{6}{7}\selectfont]{$\pcalgostyle{GenRand}_\text{expand}(\pp)$ --- \textbf{Server}}{
    \text{Receive $(\pk_i, \rt_i)$ from client $i$} \\
    \text{If $\pk_i$ does not belong to $i$, abort} \\
    \text{If $i \in L$, abort} \\
    L \leftarrow L \cup \{i\} \\
    k_s^i \sample \bin^* \\
    \sigma_s^i = \sig.\sign_{\sk_s}(\pk_i || \rt_i || k_s^i) \\
    \text{Send $(k_s^i, \sigma_s^i)$ to client $i$}
}
\end{pchstack}

\begin{pchstack}[space=0.1cm]
\procedure[linenumbering,codesize=\fontsize{6}{7}\selectfont]{$\pcalgostyle{Randomize}_\text{e.}(\pp, \ek, t_j, \out_c^i, x_{i,j}, t_x^{i,j}, \sigma_x^{i,j})$}{
    \rho_s^{i,j} = \prf(k_s^i||s_j) \\
    \rho_{i,j} = \rho_c^{i,j} \oplus \rho_s^{i,j} \\
    \tilde{x}_{i,j} = \ldp.\apply(x_{i,j}, \rho_{i,j}) \\
    \vec{\phi}_{i,j} = (t_{j-1}, t_j, \pk_i, \rt_i, \rho_s^{i,j}, \tilde{x}_{i,j}) \\
    \vec{w}_{i,j} = (t_x^{i,j}, x_{i,j}, \sigma_x^{i,j}, \rho_c^{i,j}, r_{\rho_c}^{i,j}, \cm_{\rho_c}^{i,j}) \\
    \pi_{i,j} = \nizkpk.\prove_{\ek}(\vec{\phi}_{i,j}; \vec{w}_{i,j}) \\
    \text{Send $( \tilde{x}_{i,j}, \pi_{i,j}, (\pk_i, \rt, k_s^i, \sigma_s))$ to server}
}
\procedure[codesize=\fontsize{6}{7}\selectfont]{$\pcalgostyle{Verify}_\text{expand}(\pp, \vk, t_j, \tilde{x}_{i,j}, \pi_{i,j}, \tau_i)$}{
    \pcln \text{Parse $\tau_i = (\pk_i, \rt_i, k_s^i, \sigma_s^i)$} \\
    \pcln \text{If $\pk_i$ does not belong to $i$, abort} \\
    \pcln \text{If } \sig.\verify_{\sk_s}(\sigma_s^i, \pk_i || \rt_i || k_s^i)\neq 1,\\
    \t \text{abort}\\
    \pcln \rho_s^{i,j} = \prf(k_s^i||s_j) \\
    \pcln \vec{\phi}_{i,j} = (t_{j-1}, t_j, \pk_i, \rt_i, \rho_s^{i,j}, \tilde{x}_{i,j}) \\
    \pcln \text{If } \nizkpk.\pcalgostyle{Vfy}_\vk(\pi_{i,j}, \vec{\phi}_{i,j}) \neq 1 \text{, abort} \\
    \pcln \pcreturn \tilde{x}_{i,j}
}
\end{pchstack}

\end{pcvstack}
\caption{\pcalgostyle{Expand} scheme: only one call to \pcalgostyle{GenRand} per client.}
\label{fig:scheme-expand}
\Description{Details of the \pcalgostyle{Expand} scheme.}
\end{figure}

\renewcommand{\pclnstyle}[1]{\text{\scriptsize#1}}

\subsection{Randomness Expansion (Expand) Scheme}\label{subsec:randomness-expansion-expand-scheme}
The \pcalgostyle{Base} scheme requires one execution of \pcalgostyle{GenRand} for each call to \pcalgostyle{Randomize}, i.e., one per client, per time interval.
Due to the interactive nature of \pcalgostyle{GenRand}, this becomes impractical when the number of clients increases.
The \pcalgostyle{Expand} scheme (\cref{fig:scheme-expand}) uses Merkle trees as compact commitments to multiple random values, to reduce the number of \pcalgostyle{GenRand} executions to only one per client.
Specifically, we update steps 2--4 of \pcalgostyle{GenRand} by creating $T$ commitments to $T$ randomly generated values $\rho_c^{i,j}$, for $j \in [T]$.
Subsequently, we encode all these commitments inside a Merkle tree with root $\rt$ to keep the message size constant and equal to that of $\pcalgostyle{GenRand}_\text{base}$.
The main advantage is that we can now generate $T$ random values with only one round of communication, with communication and server-side cost independent of $T$.

This improvement requires some changes and additional computations for the client in $\pcalgostyle{Randomize}_\text{expand}$.
Following \cref{sec:preliminaries}, given an array of distinct, public values $\vec{s} = (s_1, \ldots, s_T)$, we can define a secure \prg as $\prg(k)\coloneqq \prf(k||s_1)||\ldots||\prf(k||s_T)$.
Thus, if we consider the $j$-th call to $\pcalgostyle{Randomize}_\text{expand}$, we can compute the server part of the randomness (line 1) as $\rho_s^{i,j} = \prf(k_s^i || s_j)$, where $k_s^i$ is the server seed for client $i$.
Observe that the vector $s$ is identical for all clients.
The remainder of $\pcalgostyle{Randomize}$ follows the same structure as in $\textsf{Base}$.
However, we do have to add an additional statement to our \nizkpk{} for $\relation_\text{expand}$, verifying that the client randomness used in the $j$-th call of $\pcalgostyle{Randomize}_\text{expand}$ is indeed the $j$-th entry of the Merkle tree with root $\rt_i$.
This ensures not only that the $i$-th client uses a random value that was committed to before seeing $k_s^i$, but also ensures that $i$ has no choice in which random value in the Merkle tree it uses.
Allowing the client to choose which value it uses could make it possible to influence the value of $\tilde{x}^{i,j}$ for at least one $j$, by cleverly constructing $\vec{\rho}_c$.

\subsection{Shuffle Model Scheme}\label{subsec:shuffle-model}
In both the \pcalgostyle{Base} and \pcalgostyle{Expand} scheme, we consider the regular LDP model. In this model, at time step $j$ the server receives $n$ differentially private values $\tilde{x}_{i,j}$, each of which is directly linkable to the client who sent it.
However, in the shuffle model at time step $j$, the server instead receives a vector $\bf{\tilde{x}}_j$ of $n$ differentially private values, which are not linkable to a particular client.
The server at most knows the group of clients that is collectively responsible for sending this vector of differentially private data.
For simplicity, we assume that there is a trusted shuffler who first collects all the clients' messages and then sends them in random order to the server.
In practice, this may be implemented using, e.g., mixnets (see \cref{sec:implement-shuffle}).

In other words, rather than receiving $n$ differentially private values from $n$ identified parties, the server now only receives a differentially private histogram representing the collective response of a (known) group of $n$ clients.
Clearly, in this situation, the client could answer more server queries within the same privacy budget, since the budget decreases less quickly (see \cref{sec:dp-algorithms}).
One can determine the influence on the privacy budget decrease for a particular randomizer in the shuffle model following, e.g.,~\cite{balle2019shuffle}.

An interesting question to ask is whether either of the previous schemes can be transformed to work in the shuffle model.
Whilst we assume that the shuffler properly randomizes all messages and does not provide the server with any other information, the messages themselves might still be linkable to the client who sent them.
In fact, we observe that neither the \pcalgostyle{Base} nor the \pcalgostyle{Expand} scheme can be applied directly in the shuffle model, since the public values $\pk_i, \sigma_x^{i,j}, \cm_{\rho_c}^{i,j}/\rt_i, k_s^{i,j}$, and $\sigma_s^{i,j}/\sigma_s^i$ are the same for different runs of $\pcalgostyle{Randomize}$.
This would allow the server to easily link several messages to the same client by simply comparing these public values.
Even worse, $\pk_i$ is directly linkable to the $i$-th client.

Fortunately, we can solve this, by moving these values to the witness part of the \nizkpk{} statement and include the verification statements on line 3 and 4 into $\relation_\text{shuffle}$.\footnote{The statement on line 2 is implicitly guaranteed by the check in \pcalgostyle{GenRand}.}
This transformation clearly guarantees unlinkability of different $\pcalgostyle{Randomize}$ messages of the same client.
Also, verifiable correctness is still guaranteed, which can be seen intuitively as follows.
First, observe that, since $\pk_i$ is included in $\sigma_s^i$, and we verify $\sigma_s^i$ inside the \nizkpk for $\relation_\text{shuffle}$, the client has to use a pair $(x_{i,j}, t_x^{i,j})$, signed by its own trusted environment.
Second, the inclusion of $\pk_i$ inside $\sigma_s^i$ also guarantees that the randomness is bound to a specific client, and thus a set of colluding clients could not interchange their random values.

In summary, this gives us the following high-level protocol execution. At time step $j$ each client sends a message containing a differentially private value $\tilde{x}_{i,j}$ and a proof $\pi_{i,j}$ to the shuffler.
Next, the shuffler collects all these messages and forwards them in random order to the server.
I.e., the server receives $n$ value-proof pairs. Finally, the server verifies the proof of each value, and accepts the values with a correct proof.
We note that for proof verification, the server only requires the corresponding $\tilde{x}$, public parameters \pp, it's own public key $\pk_s$ and the verification key \vk.
Since $\tilde{x}$ is the requested value and all other values are identical for all messages, we are certain that we do not compromise the unlinkability that is required in the shuffle model.

However, we do not only want a secure protocol.
The above construction still requires careful consideration to keep the client-side performance practical.
We note that by moving the verification of $\rho_s^{i,j} = \prf(k_s^i||s_j)$ to the \nizkpk, we can remove the Merkle tree.
This is done by having both the server and client $i$ generate a random value for the \prf seed, respectively $k_s^i$ and $k_c^i$.
The full $\prf$ seed is defined as $k_i = k_c^i \oplus k_s^i$.
Next, we compute a random value $\rho = \prf(k_i, s_j)$ and verify this inside the \nizkpk.
By doing this, we only require one verification of a $\prf$, rather than requiring both a $\prf$ verification and verifying the presence of a commitment in a Merkle tree.
We observe that we could also have used this construction in our \pcalgostyle{Expand} scheme, however the \nizkpk for practically sized Merkle trees is more efficient than that for a secure \prf evaluation~\cite{hopwoodZcashProtocolSpecification2023}.
A precise specification of the \pcalgostyle{Shuffle} scheme is given in \cref{fig:scheme-shuffle}.

\renewcommand{\pclnstyle}[1]{\hspace*{-3pt}\text{\fontsize{6}{7}\selectfont#1}}

\begin{figure}
\fontsize{6}{7}\selectfont
\begin{pcvstack}[boxed,center,space=0.1cm]

\procedure[codesize=\fontsize{6}{7}\selectfont]{$\pcalgostyle{VLDPPipeline}_\text{shuffle}$}{
    \pcln \pp \leftarrow \setup_\text{shuffle}(1^\lambda) \\
    \pcln \text{Server computes }(\ek, \vk, \pk_s, \sk_s, L) \leftarrow \kgen_\text{shuffle}(\pp) \\
    \pcln \pcfor \text{Each client $i$ (in parallel)} \\
    \pcln \t \text{Client obtains } \out_c^i= \pcalgostyle{GenRand}_\text{shuffle}(\pp) \\
    \pcln \t \pcfor j \pcin~\{1,\ldots,T\} \\
    \pcln \t \t \text{Client $i$ obtains fresh } (x_{i,j}, t_x^{i,j}, \sigma_x^{i,j}=\sig.\sign_{\sk_i}(x_{i,j}||t_x^{i,j})) \\
    \pcln \t \t \text{Client $i$ runs } (\tilde{x}_{i,j}, \pi_{i,j}) = \pcalgostyle{Randomize}_\text{shuffle}(\pp, \ek, t_j, \out_c^i, x^{i,j}, t_x^{i,j}, \sigma_x^{i,j}) \\
    \pcln \t \t \text{Shuffler forwards messages in random order} \\
    \pcln \t \t \text{Server obtains } \tilde{x}_{?_i,j} = \pcalgostyle{Verify}_\text{shuffle}(\pp, \vk, t_j, \tilde{x}^{?_i,j}, \pi_{?_i,j}) \\
    \pcln \text{Server computes result from all $\tilde{x}_{?_i,j}$}
}

\begin{pchstack}[space=0.1cm]
\begin{pcvstack}[space=0.1cm]
\procedure[linenumbering,codesize=\fontsize{6}{7}\selectfont]{$\kgen_\text{shuffle}(\pp)$}{
    (\ek, \vk) \leftarrow \nizkpk.\kgen(\relation_\text{shuffle}) \\
    (\sk_s, \pk_s) \leftarrow \sig.\kgen(\pp_\text{sig}) \\
    L \leftarrow \emptyset \\
    \pcreturn (\ek, \vk, \pk_s, \sk_s, L)
}

\procedure[linenumbering,codesize=\fontsize{6}{7}\selectfont]{$\setup_\text{shuffle}(1^\secpar)$}{
    \pp_\text{sig} \leftarrow \sig.\setup(1^\secpar) \\
    \pp_\text{comm} \leftarrow \comm.\setup(1^\secpar) \\
    \vec{s} = (s_1, \ldots, s_T) \sample \bin^{\lambda \times T}\\
    \vec{t} = (t_0, \ldots, t_T) \\
    \pp = (\relation_\text{shuffle}, \pp_\text{sig}, \pp_\text{comm}, \vec{s}, \vec{t}) \\
    \pcreturn \pp
}
\end{pcvstack}

\procedure[codesize=\fontsize{6}{7}\selectfont]{$\relation_\text{shuffle}$}{
    \text{Given $(t_{j-1}, t_j, \pk_s, s_j, \tilde{x}_{i,j})$,} \\
    \text{the prover knows} (t_x^{i,j}, x^{i,j}, \pk_i, \sigma_x^{i,j}, k_c^i, \\
    \t r_{k_c}^i, \cm_{k_c}^i, k_s^i, \sigma_s^i) \text{ such that:} \\
    \pcln t_x^{i,j} \in (t_{j-1}, t_j] \\
    \pcln \sig.\verify_{\pk_i}(\sigma_x^{i,j},x_{i,j}||t_x^{i,j}) = 1 \\
    \pcln \cm_{k_c}^i = \comm(k_c^i; r_{k_c}^i) \\
    \pcln k_i = k_c^i \oplus k_s^i \\
    \pcln \sig.\verify_{\pk_s}(\sigma_s^i,\pk_i || \cm_{k_c}^i || k_s^i) = 1 \\
    \pcln \rho_{i,j} = \prf(k_i, s_j) \\
    \pcln \tilde{x}_{i,j} = \ldp.\apply(x_{i,j}; \rho_{i,j})
}
\end{pchstack}

\begin{pchstack}[space=0.1cm]
\procedure[codesize=\fontsize{6}{7}\selectfont]{$\pcalgostyle{GenRand}_\text{shuffle}(\pp)$ --- \textbf{Client $i$}}{
    \pcln k_c^i \sample \bin^* \> \> \\
    \pcln r_{k_c}^i \sample \bin^* \> \> \\
    \pcln \cm_{k_c}^i = \comm(k_c^i; r_{k_c}^i) \\
    \pcln \text{Send $(\pk_i, \cm_{k_c}^i)$ to server} \\
    \pcln \text{Receive $(k_s^i, \sigma_s^i)$ from server} \\
    \pcln \text{If } \sig.\verify_{\pk_s}(\sigma_s^i,\pk_i || \cm_{k_c} || k_s^i) \neq 1,\text{ abort} \\
    \pcln \pcreturn \out_c^i = (k_c^i, r_{k_c}^i, \cm_{k_c}^i, k_s^i, \sigma_s^i)
}

\procedure[codesize=\fontsize{6}{7}\selectfont]{$\pcalgostyle{GenRand}_\text{shuffle}(\pp)$ --- \textbf{Server}}{
    \pcln \text{Receive $(\pk_i, \cm_{k_c}^i)$ from client $i$} \\
    \pcln \text{If $\pk_i$ does not belong to $i$, abort} \\
    \pcln \text{If $i \in L$, abort} \\
    \pcln L \leftarrow L \cup \{i\} \\
    \pcln k_s^i \sample \bin^* \\
    \pcln \sigma_s^i = \sig.\sign_{\sk_s}(\pk_i || \cm_{k_c}^i || k_s^i) \\
    \pcln \text{Send $(k_s^i, \sigma_s^i)$ to client $i$}
}
\end{pchstack}

\begin{pchstack}[space=0.1cm]
\procedure[codesize=\fontsize{6}{7}\selectfont]{$\pcalgostyle{Randomize}_\text{shuffle}(\pp, \ek, t_j, \out_c^i, x_{i,j}, t_x^{i,j}, \sigma_x^{i,j})$}{
    \pcln k_i = k_c^i \oplus k_s^i \\
    \pcln \rho_{i,j} = \prf(k_i, s_j) \\
    \pcln \tilde{x}_{i,j} = \ldp.\apply(x; \rho) \\
    \pcln \vec{\phi}_{i,j} = (t_{j-1}, t_j, \pk_s, s_j, \tilde{x}_{i,j}) \\
    \pcln \vec{w}_{i,j} = (t_x^{i,j}, x_{i,j}, \pk_i, \sigma_x^{i,j}, k_c^i,r_{k_c}^i, \cm_{k_c}^i, k_s^i, \sigma_s^i) \\
    \pcln \pi = \nizkpk_\text{shuffle}.\prove(\vec{\phi}_{i,j}; \vec{w}_{i,j}) \\
    \pcln \text{ Send $(\pi_{i,j}, \tilde{x}_{i,j})$ to shuffler}
}
\procedure[codesize=\fontsize{6}{7}\selectfont]{$\pcalgostyle{Verify}_\text{shuffle}(\pp, \vk, t_j, \tilde{x}_{i,j}, \pi_{i,j})$}{
    \pcln \vec{\phi} = (t_{j-1}, t_j, \pk_s, s_j, \tilde{x}_{i,j}) \\
    \pcln \text{If } \nizkpk.\verify( \\
    \t \t \t \pi^i, \tilde{x}_{i,j},\pk_s, s_j) \neq 1 \text{, abort} \\
    \pcln \pcreturn \tilde{x}_{i,j}
}
\end{pchstack}

\end{pcvstack}
\caption{\pcalgostyle{Shuffle} scheme: efficient VLDP in the shuffle model.}
\label{fig:scheme-shuffle}
\Description{Details of the \pcalgostyle{Shuffle} scheme.}
\end{figure}

\renewcommand{\pclnstyle}[1]{\text{\scriptsize#1}}

\section{Experimental Evaluation}
\label{sec:evaluation}
To assess the practical performance of our constructions and to compare different versions, we conducted various experiments on synthetic and real data, and report the communication costs and computation times.
We first describe our implementation of the schemes, including how the different building blocks were instantiated.
This is followed by a description of our experiments and their results to support our efficiency and practicality claims.

\subsection{Implementation}\label{subsec:implementation}
Each scheme was implemented in Rust using the Arkworks v0.4 library~\cite{arkworks2022arkworks}.
It provides efficient implementations for zk-SNARKs and other cryptographic primitives with gadgets to evaluate these primitives inside a zk-SNARK circuit.
Our cryptographic building blocks are instantiated as follows, targeting 128-bit security:
\begin{itemize}[leftmargin=*]
\item \nizkpk: The \emph{Groth16} zk-SNARK~\cite{grothSizePairingBasedNoninteractive2016} is used to generate the NIZK-PKs.
This specific pairing-based, circuit zk-SNARK scheme has gained widespread adoption in real-world applications due to its efficiency and constant proof size.
This scheme does rely on a trusted setup, which, if broken, would allow anyone to create false proofs.
However, this is not an issue in our constructions, since the server can execute this trusted setup by itself.
Furthermore, the server is assumed to behave semi-honestly and to be non-colluding.
The zk-SNARK elements are chosen to be on the \emph{BLS12-381} elliptic curve (EC)~\cite{boweBLS12381NewZkSNARK2017}, which is a known pairing-friendly curve with good (estimated 128-bit) security.
Moreover, there is a known embedded curve for BLS12-381, called \emph{Jubjub}~\cite{boweZkcryptoJubjub2024}, which allows for efficient and secure evaluation of EC-primitives inside zk-SNARK circuits.
\item \sig: The signature schemes used by client and server are both implemented using Schnorr signatures~\cite{schnorrEfficientIdentificationSignatures1990}.
Specifically, we use EC-Schnorr signatures over the Jubjub curve, due to its efficient verification inside a zk-SNARK circuit~\cite{steidtmannBenchmarkingZKCircuitsCircom2023}.
Moreover, this scheme is often used in practice, and other popular schemes, such as EdDSA (${\sim}$1,000 more constraints) and ECDSA (${\sim}$10,000 more) would only have a small to negligible performance impact~\cite{steidtmannBenchmarkingZKCircuitsCircom2023}.
Additionally, we use the Blake2s-256 collision resistant hash (CRH)~\cite{saarinenBLAKE2CryptographicHash2015} to hash the input message to a fixed length digest.
This CRH was chosen for its good security (128 bits against collision attacks), and efficiency inside a zk-SNARK\@.
\item $\comm$: Our commitment scheme is instantiated using Pedersen vector commitments~\cite{pedersenNonInteractiveInformationTheoreticSecure1992} (with 4-bit windows) over the Jubjub curve.
This instantiation is very efficient inside a zk-SNARK circuit, is information-theoretically hiding, and targets the required bit security for the binding property.
\item \prf: We construct a \prf using keyed Blake2s-256~\cite{saarinenBLAKE2CryptographicHash2015}, which gives a \prf output of 256 bits, or 32 bytes.
Also here, Blake2s was chosen to fit the targeted security level, whilst still being practical inside a zk-SNARK circuit.
\item \merkletree: This primitive is only used inside the \textsf{Expand} scheme.
By using Pedersen commitments to instantiate $\comm$, we can use these commitments directly as the leaves of the Merkle tree due to their fixed size (which is no more than 256 bits in our case).
To compute the higher level nodes and root, we use the Pedersen hash function~\cite{hopwoodZcashProtocolSpecification2023} to hash the concatenation of both its children.
We use a Pedersen hash rather than Blake2s here, since it is significantly more efficient inside a zk-SNARK circuit, and has security guarantees similar to that of Groth16, thereby not decreasing the security of our scheme.
Finally, we note that the tree depth $d$ has to be the smallest power of 2 such that $2^{d-1} \geq T$, where $T$ is the total number of time steps we wish to run.
\end{itemize}

Our open-source code\footnote{The source code of our implementation is available at \url{https://github.com/xQiratNL/VLDP}.} was written in such a way that it is simple to replace a specific building block by another.
In \cref{subsec:optimizations-alternatives}, we discuss alternative building block choices and their impact on security, efficiency, and practicality.

\begin{table*}
\centering
\scriptsize
\begin{tabular}{@{}llrrrrrrrrrrr@{}}
\toprule
  \multirow{2}{*}{Dataset} &
  \multirow{2}{*}{Scheme} &
  \multicolumn{3}{c}{Client} &
  \multicolumn{2}{c}{Server} &
  \multicolumn{3}{c}{Communication} & 
  \multicolumn{3}{c}{\nizkpk} \\ \cmidrule(lr){3-5} \cmidrule(lr){6-7} \cmidrule(lr){8-10} \cmidrule(l){11-13}
  &
  &
  \multicolumn{1}{c}{GenRand-1} &
  \multicolumn{1}{c}{GenRand-2} &
  \multicolumn{1}{c}{Randomize} &
  \multicolumn{1}{c}{GenRand} &
  \multicolumn{1}{c}{Verify} &
  \multicolumn{1}{c}{GenRand-1} &
  \multicolumn{1}{c}{GenRand-2} &
  \multicolumn{1}{c}{Randomize} &
  \multicolumn{1}{c}{$|\ek|$} &
  \multicolumn{1}{c}{$|\vk|$} &
  \multicolumn{1}{c}{\# constraints} \\ \midrule
  \multirowcell{3}{Geolife \\ GPS} & \textsf{Base}    & 0.218 ms & 0.302 ms & 0.610 s & 2.494 ms & 3.454 ms & 65 B & 96 B & 360 B & 16.1 MB & 776 B & 55 884 \\
                                   & \textsf{Expand}  & 3.033 ms & 0.267 ms & 1.213 s & 2.005 ms & 4.425 ms & 64 B & 96 B & 360 B & 23.4 MB & 824 B & 74 322 \\
                                   & \textsf{Shuffle} & 0.230 ms & 0.305 ms & 1.798 s & 2.413 ms & 2.680 ms & 64 B & 96 B & 200 B & 53.2 MB & 728 B & 173 460\\ \addlinespace[1em]
  \multirowcell{3}{Smart \\ meter} & \textsf{Base}    & 0.223 ms & 0.213 ms & 0.619 s & 2.146 ms & 3.484 ms & 65 B & 96 B & 360 B & 16.3 MB & 776 B & 56 903 \\
                                   & \textsf{Expand}  & 2.475 ms & 0.211 ms & 1.106 s & 1.271 ms & 3.540 ms & 64 B & 96 B & 360 B & 23.7 MB & 824 B & 75 341 \\
                                   & \textsf{Shuffle} & 0.225 ms & 0.293 ms & 1.821 s & 2.119 ms & 2.659 ms & 64 B & 96 B & 200 B & 53.3 MB & 728 B & 174 095\\ \bottomrule \addlinespace
\end{tabular}
\caption{Performance of all schemes for two real-world applications: client and server computation and communication costs of a \underline{single} evaluation of an algorithm/protocol, byte size of $\ek$ and $\vk$, and number of constraints in the \nizkpk.}
\label{tab:metrics_single}
\end{table*}
\begin{table}
    \centering
    \scriptsize
    \begin{tabular}{@{}llrrrrr@{}}
        \toprule
        \multirow{2}{*}{Dataset} &
        \multirow{2}{*}{Scheme} &
        \multicolumn{3}{c}{Client} &
        \multicolumn{2}{c}{Server} \\ \cmidrule(lr){3-5} \cmidrule(l){6-7}
        &
        &
        \multicolumn{1}{c}{GenRand-1} &
        \multicolumn{1}{c}{GenRand-2} &
        \multicolumn{1}{c}{Rand.} &
        \multicolumn{1}{c}{GenRand} &
        \multicolumn{1}{c}{Verify} \\ \midrule
        \multirowcell{3}{Geolife\\GPS} & \textsf{Base}    & 1.092 ms & 1.508 ms & 3.032 s &  2.392 s &  3.316 s \\
        & \textsf{Expand}  & 3.033 ms & 0.267 ms & 6.045 s &  0.385 s &  4.425 s \\
        & \textsf{Shuffle} & 0.230 ms & 0.305 ms & 8.973 s &  0.463 s &  2.572 s \\ \addlinespace[1em]
        \multirowcell{3}{Smart\\meter} & \textsf{Base}    & 1.113 ms & 1.064 ms & 3.078 s & 59.734 s & 96.977 s \\
        & \textsf{Expand}  & 2.475 ms & 0.211 ms & 5.510 s &  7.076 s & 98.536 s \\
        & \textsf{Shuffle} & 0.225 ms & 0.293 ms & 9.090 s & 11.796 s & 74.999 s \\ \bottomrule \addlinespace
    \end{tabular}
    \caption{\underline{Total} computation time for both use cases, over all time steps $(T=5)$ (for the server also over all clients).}
    \label{tab:runtime_total}
\end{table}

\subsection{Experimental Setup}\label{subsec:experiment-setup}
We perform two sets of experiments to evaluate and compare the practical performance of our constructions.
The first set uses two real datasets to evaluate and validate the performance of each scheme in a real-world setting.
The second set of experiments uses synthetic data to evaluate the scalability of our schemes.

We use two datasets in our experiments, and consider $T=5$ days of readings from each.
The first dataset (\emph{Geolife GPS Trajectory}) is a location dataset of 182 users, which after pre-processing, gave us 8 potential postcode locations per day for the subject group.
With respect to the algorithm for histograms (see \cref{fig:ldp-algos}), we thus have $k=8$, where $\{1,\ldots,k\}$ represent the respective postcodes.

The second dataset (\emph{Smart meter}) contains smart meter energy readings (floating points) of 5,567 households.
We use a precision level of $k = 10$ (see the algorithm for reals in \cref{fig:ldp-algos}) for our experiments.
Both datasets are described in more detail in \cref{subsec:dataset-description}.

\subsubsection*{Experiments}
For our experiments, we determine the median runtime of 100 runs (after discarding three warm-up runs), of each of the algorithms at the client and server side.
Specifically, we look at the computation time for individual clients and the server in the different phases.
Next to this, we also measure the byte size of all (compressed) messages.
The experiments were run on a desktop computer with Windows 10 desktop PC with a Ryzen 3600 CPU with 6 cores and 12 threads @4.0GHz and 16GB dual-channel DDR4 RAM at 3600MHz.
The experiments were run using Rust 1.77.2.

\subsection{Concrete Applications}
\label{subsec:conc-apps}
For both datasets, the timestamp is encoded using one byte.
Next to this, we use 8 bytes (64 bits) for each random value we sample.
This guarantees statistical closeness to the true distributions, since $k \ll 2^{64}$.
Thus, for the Geolife GPS dataset, i.e., histogram, we require 16 bytes of randomness ($|\rho| = 16$). For the smart meter dataset, i.e., real valued data, we require 24 bytes of randomness ($|\rho| = 24$), $2 \cdot 8$ bytes for both Bernoulli samples and another 8 bytes for one uniformly random sample.
Both are below the 32 bytes that we get as output from one \prf evaluation.
In \cref{subsec:general-performance}, we evaluate the computation times and message sizes for larger values of $|\rho|$.
The performance of our schemes is not impacted by any particular value of $\epsilon$ and $\delta$ used in the DP mechanism.
For completeness, we shall use 5 runs of the LDP mechanism, with the privacy budget per run, i.e., $\epsilon_0$, determined as in \cref{sec:dp-algorithms}.
Finally, for the \textsf{Expand} scheme we set the Merkle tree depth to $d_\textsf{MT}=4$, i.e., it has $2^{4-1} = 8$ leaves, since we run both datasets for $T=5$ steps.

\subsubsection*{Results}
The median computation times and message sizes for a single run of each algorithm are shown in \cref{tab:metrics_single}.
This table also includes the size of the \nizkpk{} evaluation/verification key, and the number of constraints.
The evaluation key is relatively large, and needs to be communicated with each client.
Fortunately, its generation is part of the setup and can be communicated as part of the public parameters beforehand.
The number of constraints gives an implementation-independent view on the proof generation and verification costs and is the most fair way to compare different schemes.
Especially, since proof generation and verification are the dominating factors in $\pcalgostyle{Randomize}$ and $\pcalgostyle{Verify}$.
To better understand the cost in both use cases, we report the overall computation time for the server and of a single client in \cref{tab:runtime_total}.

Regarding the communication costs of $\textsf{Base}$, we see that each client sends $(65+360)T=2{,}125$ bytes (B) to the server, and receives $96T=480\text{B}$.
In the \textsf{Expand} scheme, this reduces to $64+360T=1{,}864$ sent and $96$ received bytes.
For the $\textsf{Shuffle}$ scheme, the amount of bytes sent by each client reduces further to only $64+200T=1{,}064\text{B}$.

Moreover, we observe that the $\textsf{Base}$ scheme puts a much higher load on the server, in both computation and communication\footnote{Since $\pcalgostyle{GenRand}$ has to be run once per time step, instead of once overall, $\pcalgostyle{Base}$ has $T\times$ more communication than other schemes for $\pcalgostyle{GenRand}$.} costs, in the $\pcalgostyle{GenRand}$ phase.
This is due to the fact that this phase needs to be run again for each time step.
$\textsf{Expand}$ requires slightly more computational effort from each client, however, this is negligible when compared to the reduction in server computation time, and overall communication costs.
The $\textsf{Shuffle}$ scheme requires the least effort in this phase, but puts clearly higher cost on the client in $\pcalgostyle{Verify}$.
It should be noted, however, that the computational cost for the client is very practical and lies in the 0.5--2 seconds range for all schemes.
Moreover, the computation and communication costs of $\pcalgostyle{Verify}$ are significantly lower for $\textsf{Shuffle}$, which makes it more attractive even in the `regular' local model. 
We remark that we did not implement server-side parallelization.
Hence, the server's runtime (\cref{tab:runtime_total}) could be further reduced by, e.g., distributing the client messages over different processes.

Additionally, we note that the shuffle model does introduce some additional latency, when compared to the `regular' local model. The amount of latency depends on how the shuffler is implemented (see \cref{sec:implement-shuffle}). But, given that the message size is small, we expect this to be of little influence in most applications.
The introduced latency will be in the same order as in the shuffle model without verifiability, i.e., our introduction of verifiability does not in itself restrict the usage of a shuffler.

\subsection{General Performance \& Comparison}\label{subsec:general-performance}
Above, we evaluated our schemes on common LDP algorithms and datasets, thereby reflecting the expected application settings for our protocols.
As discussed, the amount of randomness required ($|\rho|$) for $\ldp.\apply()$ determines the majority of the computation cost, due to the cost of evaluating \prf.
Here, $|\rho|$ depends on two factors:
\begin{itemize}[leftmargin=*]
    \item \emph{The randomizer:} The amount of random variables used, and the number of bits to (accurately) sample them, predominantly determines the size of $|\rho|$. In other words, more random values or sampling with higher accuracy leads to an increase in $|\rho|$.
    \item \emph{Entropy of released data:} The data that is released by the client also has an effect on $|\rho|$, albeit indirectly. Namely, when releasing data with higher entropy (more records, larger domain size) more randomness is needed to ensure differential privacy. For example, privately releasing a bit requires less randomness than releasing an 8-bit integer. Similarly, releasing 10 values requires more randomness than only 1.
    Thus, when considering higher-dimensional datasets one often also releases data with higher entropy and thus indirectly requires more randomness.
    Finally, we note that using higher-dimensional input data could lead to reduced performance due to increasing the signature input size.
    Fortunately, this can be counteracted by using SNARK-friendly signatures (see \cref{subsec:optimizations-alternatives}).
\end{itemize}

To better investigate the performance impact of the amount of randomness used, we vary $|\rho|$ in steps of 32, which is the output size of our choice of \prf, i.e., smaller step sizes will show negligible differences in performance.
This gives insight into the performance of other LDP mechanisms as discussed in \cref{subsec:ldp-inside-nizk}.
Next to this, we investigate the performance of the \textsf{Expand} scheme for different Merkle tree depths.
We vary $d_\textsf{MT}$ between 2 and 11, i.e., from 2 to 1,024 leaves, which should be more than sufficient in realistic settings.
In all experiments, we use randomly generated data, and encode the timestamps and input values as 64-bit values.

\begin{figure*}
    \centering
    \includegraphics[width=0.33\textwidth]{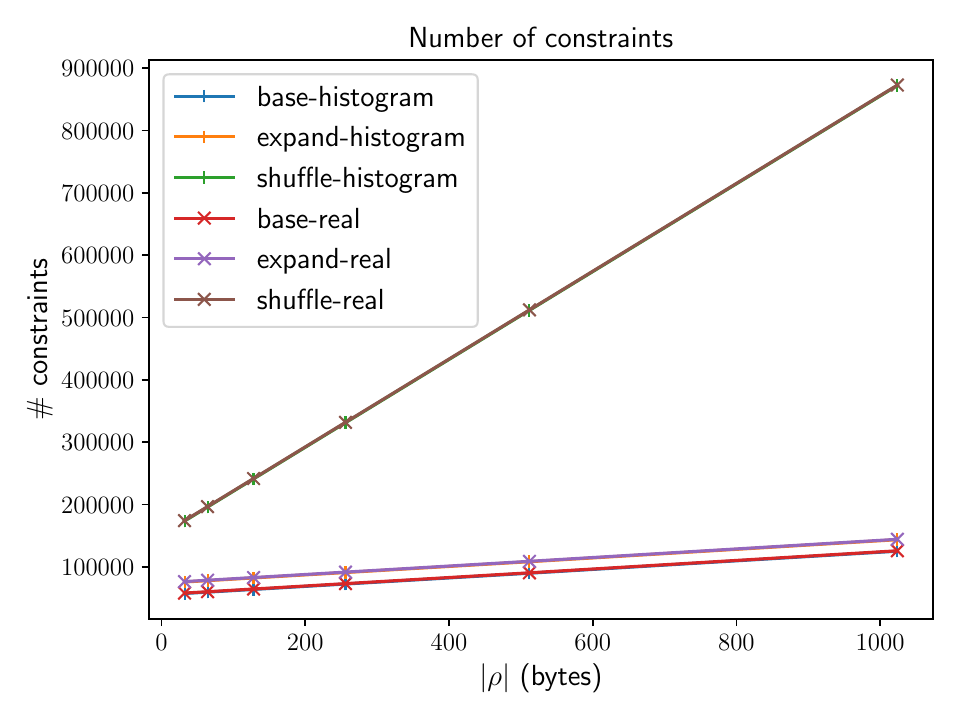}
    \includegraphics[width=0.33\textwidth]{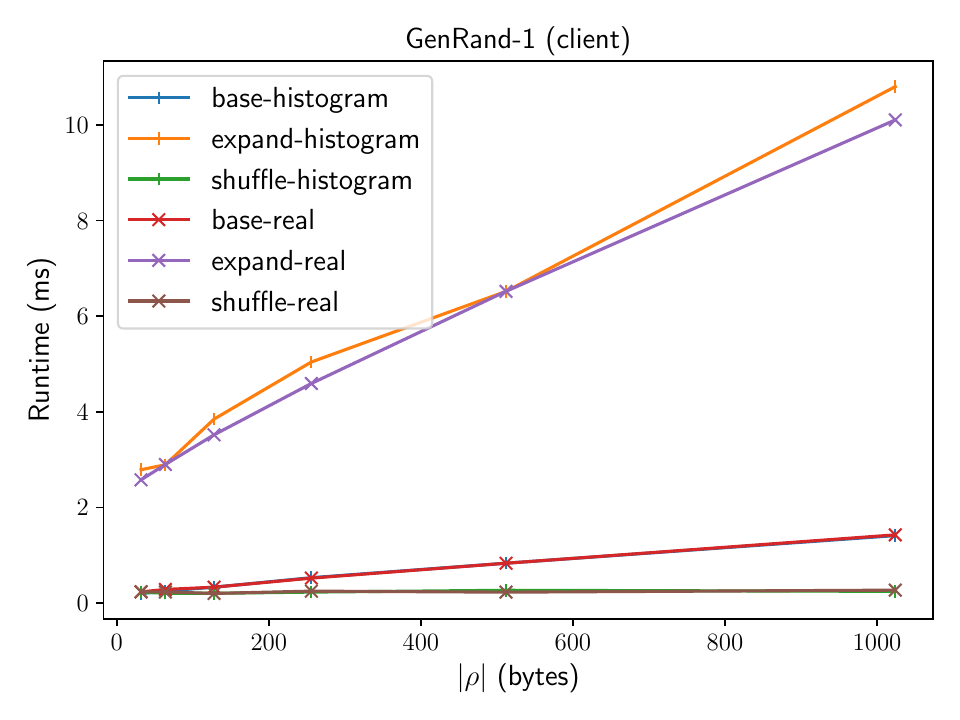}
    \includegraphics[width=0.33\textwidth]{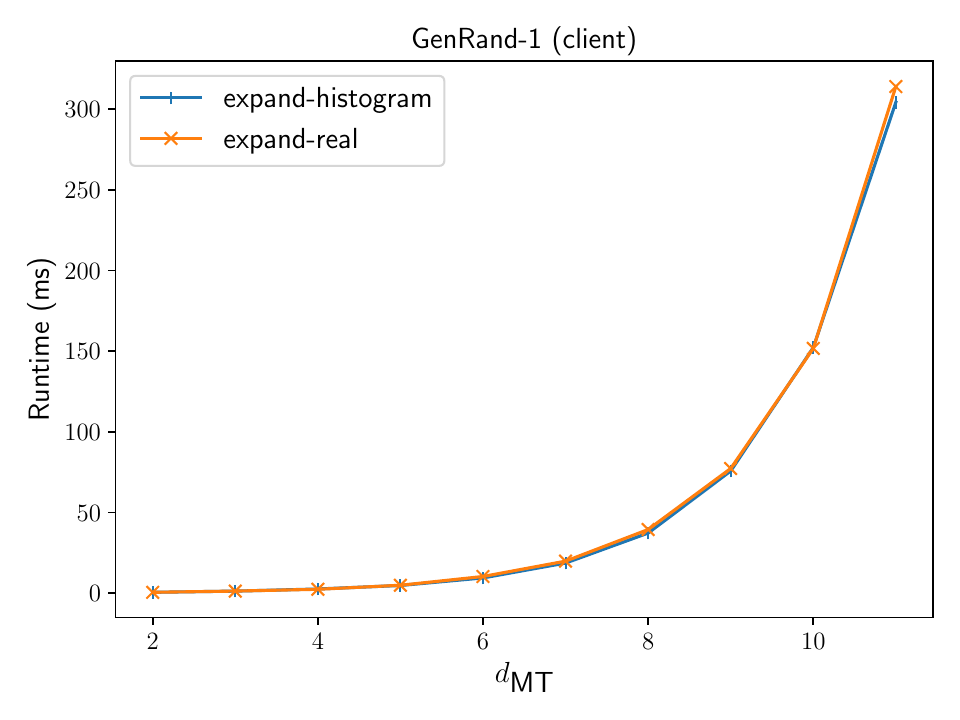}
    \includegraphics[width=0.33\textwidth]{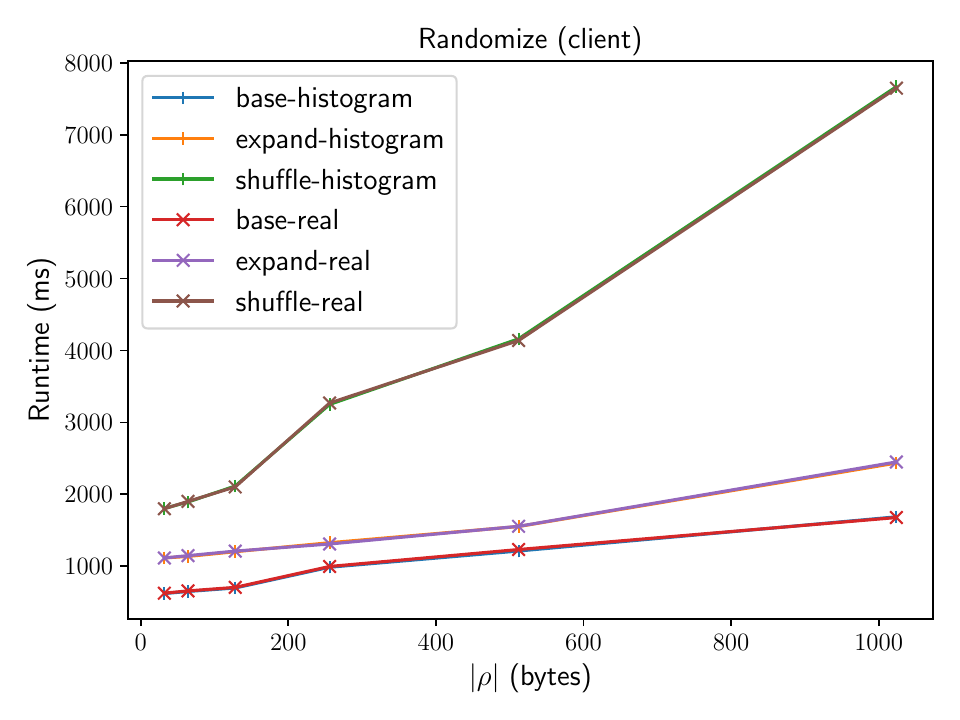}
    \includegraphics[width=0.33\textwidth]{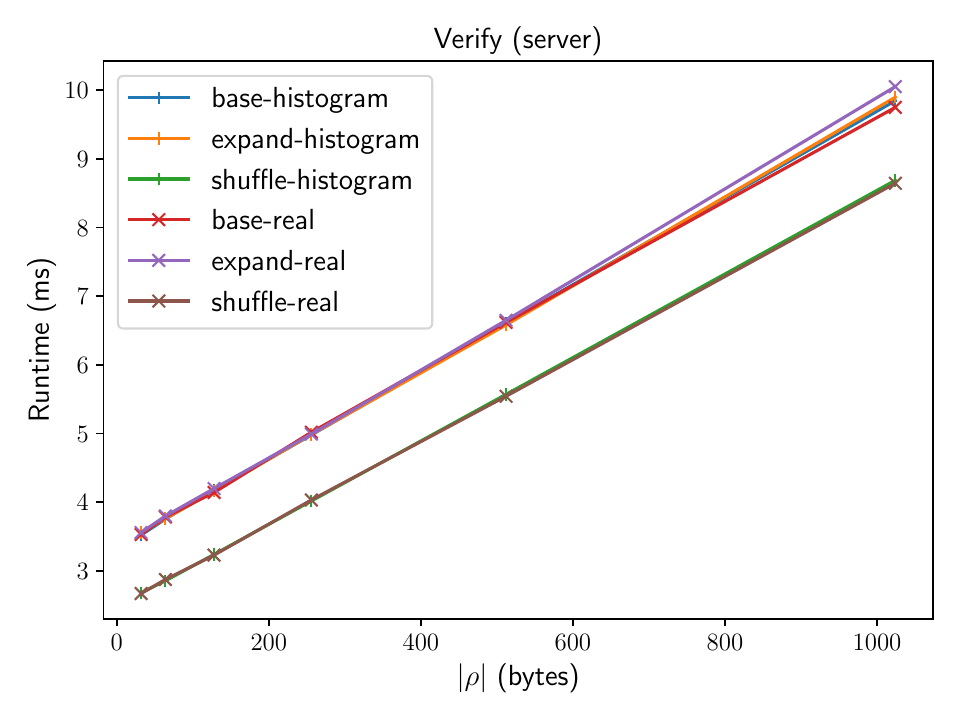}
    \includegraphics[width=0.33\textwidth]{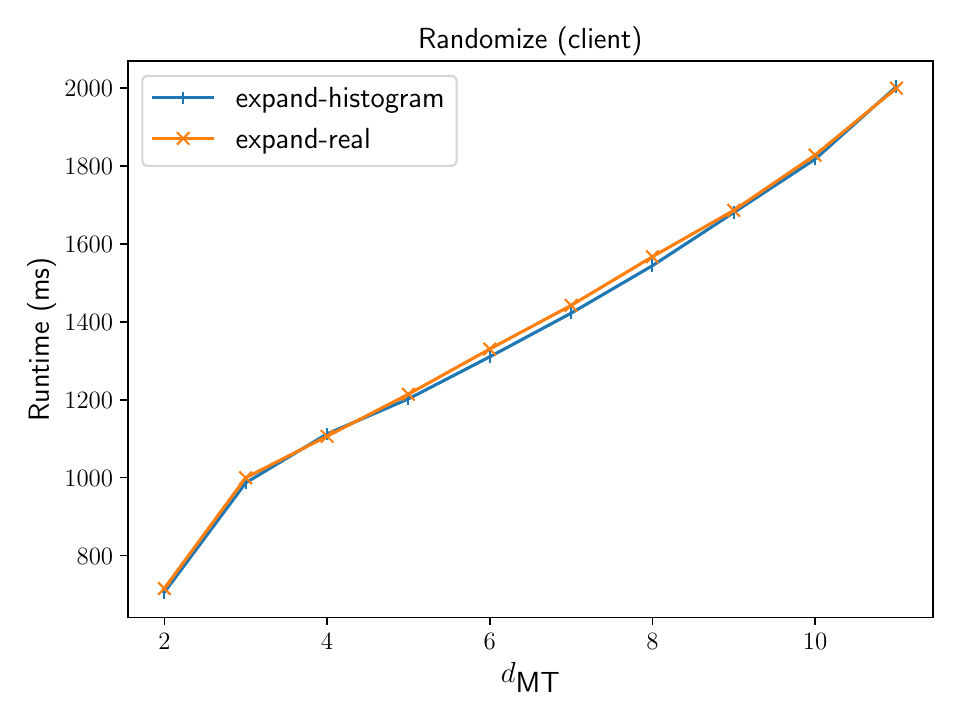}
    \caption{Impact of $|\rho|$ on \# constraints (topleft), client runtime in \pcalgostyle{GenerateRandomness} (topcenter) \& \pcalgostyle{Randomize} (bottomcenter), and server runtime in \pcalgostyle{Verify} (bottomleft); Impact of $d_\textsf{MT}$ on client runtime in \pcalgostyle{GenRand} (topright) and \pcalgostyle{Randomize} (bottomright).}
    \label{fig:graphs-rand}
        \Description{Impact of $|\rho|$ and $d_\textsf{MT}$ on the runtime of the different schemes.}
\end{figure*}

\subsubsection*{Results}
First, we observe that the communication size is independent of both $|\rho|$ and $d_\textsf{MT}$, and thus only look at their influence on the runtime (see \cref{fig:graphs-rand}).
An increase of $|\rho|$ leads to a significant increase in the constraint count and duration of $\pcalgostyle{Randomize}$ for the $\textsf{Shuffle}$ scheme.
However, even for as much as 1,024 bytes of randomness, the computation time remains below 8 seconds, which is still very practical.
As shown in \cref{subsec:ldp-inside-nizk}, 1,024 is more than sufficient for 64-bit precision in many representative and state-of-the-art LDP algorithms.
Additionally, we observe a significant, but approximately linear, increase in computation time for the client in $\pcalgostyle{GenRand}$.
Since the duration is in the millisecond range, this will not cause any practical issues.
Finally, we see a linear increase in the verification time for the server.
However, this verification time is so small that we do not consider it to be an issue.
For the $\pcalgostyle{Expand}$ scheme, we observe a linear increase of the number of constraints (from ${\sim}60,000$ to ${\sim}120,000$) and an exponential increase of the duration of $\pcalgostyle{GenRand}$ in $d_\textsf{MT}$ (\cref{fig:graphs-rand}).
This agrees with the fact that the amount of random values grows exponentially in $d_\textsf{MT}$.
The increase in runtime for $\textsf{Randomize}$ is also approximately linear, and only takes around 2 seconds for a Merkle tree with 1,024 random values.

\subsubsection*{Comparison}
In conclusion, we see that the total runtime of each scheme scales approximately linearly in the amount of randomness required, for both client and server, and that the runtime of each schemes is very practical for realistically sized parameters.
Clearly, the \textsf{Shuffle} scheme has the lowest communication cost and server load, in addition to being secure in the shuffle model.
Conversely, $\textsf{Expand}$ puts a smaller load on the client, and slightly higher on the server, but is not secure in the shuffle model.
Finally, the $\textsf{Base}$ scheme puts a comparatively high communication and computation load on the server, making it less practical than the other schemes.

For comparison with related work, we consider~\cite{movsowitzdavidowPrivacyPreservingTransactionsVerifiable2023} which is the work closest to our construction.
As mentioned, the number of constraints provides a fair comparison for different schemes.
Their scheme requires two \nizkpk proofs for one transfer of LDP values.
For one input their proofs have 9,769 and 12,882 constraints, i.e., the combined number of constraints is around 2.5---7.5 times smaller than our scheme, which means the computational effort for both client and server will also be smaller by a similar factor.
However, the underlying blockchain structure used in~\cite{movsowitzdavidowPrivacyPreservingTransactionsVerifiable2023} will also come with its own latency and scalability issues, which our scheme does not suffer from.
On top of that, it is only evaluated for binary RR, which is much simpler than our construction.
Finally,~\cite{movsowitzdavidowPrivacyPreservingTransactionsVerifiable2023} does not discuss the performance of their approach to \pcalgostyle{GenRand}.
However, as it is similar in nature to $\pcalgostyle{GenRand}_\text{base}$, it will suffer from the same drawbacks when compared to \textsf{Expand} and \textsf{Shuffle}.

\section{Conclusion}\label{sec:conclusion}
We showed how to construct verifiable LDP schemes for both the local and, most interestingly, the shuffle model, which guarantee security against data manipulation attacks.
Experimental evaluation of our schemes on realistic use cases underscores their practicality.
Especially the \textsf{Expand} and \textsf{Shuffle} schemes put a very low load (5--7 ms) on the server, whilst keeping client computation times down to $<2$ seconds.
Moreover, we showed the scalability of our schemes using generic benchmarks.
Finally, we discuss how our schemes can be efficiently adopt a wide variety of LDP algorithms, due to their generic design.

\begin{acks}
We thank the anonymous reviewers for their insightful feedback, which helped improve our work.
This research received no specific grant from any funding agency in the public, commercial, or not-for-profit sectors.
\end{acks}

\bibliographystyle{ACM-Reference-Format}
\bibliography{main}

\appendix

\section{Appendix to Section 4}
\label[appendix]{sec:appendix-to-section-4}

\subsection{De-Biased Output}
\label[appendix]{subsec:de-bias}
Let $X_i$ denote the random variable representing user $i$'s output after running the LDP algorithm for reals.
Let $X = \sum_i^n X_i$.
We are interested in finding:
\[
    \mathbb{E}(X) = \sum_{i=1}^n \mathbb{E}(X_i)
\]
Let $p_j$ be the probability that user $i$ outputs $j \in \{0, 1, \ldots, k\}$.
Let $q_j$ be the true probability of any user having input $j$.
Then,
\begin{align*}
    p_j &= \left( 1 - \gamma + \frac{\gamma}{k+1}\right) q_j + \frac{\gamma}{k+1} (1 - q_j)\\
    &= (1 - \gamma)q_j + \frac{\gamma}{k+1}
\end{align*}
Then
\begin{align*}
    \mathbb{E}(X_i) &= \sum_{j = 0}^k j p_j \\
    &=\sum_{j = 0}^k j \left( (1 - \gamma)q_j + \frac{\gamma}{k+1} \right)\\
    &= (1 - \gamma) \left( \sum_{j = 0}^k j q_j \right) + \frac{\gamma k}{2} \\
    &= (1 - \gamma) \mu + \frac{\gamma k}{2}
\end{align*}
where $\mu = \sum_{j = 0}^k j q_j$ is the true expected input of any user.
Thus,
\begin{align*}
    \mathbb{E}(X) &= n \left( (1 - \gamma) \mu + \frac{\gamma k}{2}\right)\\
    \Rightarrow \frac{n\mu}{k} &= \frac{1}{1 - \gamma}\left( \frac{\mathbb{E}(X)}{k} - \frac{\gamma n}{2} \right).
\end{align*}
Therefore, the expected value of the sum to precision $k$ output by the LDP algorithm, i.e., $\mathbb{E}(X)/k$, gives us the expectation of the sum of true inputs to precision $k$, i.e., $n\mu/k$.
Thus, given the sum of these values for a sample, we can estimate the true sum as above.

\subsection{Additional, Simple Example}
\label[appendix]{subsec:example-ldp-nizk}
To better explain the requirements above, consider the example where we sample a random element from $\{0,1,2\}$ using uniform random bits.
In practice, an often used method to achieve this is to sample two random bits, and map this as follows $00 \rightarrow 0; 01 \rightarrow 1; 10 \rightarrow 2$.
If the random bits are $11$ we sample two new random bits and repeat the process, until we terminate.\footnote{While the actual method might vary in practice, this simple version that we present here, is sufficient to describe the problem in the context of \nizk proofs.}
This process terminates (with probability 1), after a finite number steps.
However, due the variable requirement of random bits, we cannot use this sampling method inside a \nizk proof.
When considering this more closely, it becomes evident that there is no way to sample a random element from $\{0,1,2\}$ using a fixed number of random bits.
This problem occurs in many random sampling problems, but can fortunately easily be solved, by sampling from an approximate distribution that is statistically close to the true distribution.

For this example, we can sample an $\ell$-bit number $\rho$, such that $2^\ell$ is sufficiently large.
Then, we determine 3 intervals: $[0, \lfloor2^\ell/3\rfloor)$, $[2^\ell/3, 2\cdot \lfloor2^\ell/3\rfloor)$, and $[2\cdot \lfloor2^\ell/3\rfloor, 2^\ell - 1]$.
If $\rho$ is part of the $j$-th interval, we return $j$ as our random sample.
Observe, that all but the last interval have the exact same size of $\lfloor2^\ell/3 \rfloor$, and only the final interval contains $2^\ell -3\cdot \lfloor2^\ell/3\rfloor \leq 2$ more elements.
Thus, for sufficiently large $\ell$, the distribution generated by this sampling method is statistically close to the true distribution we wish to sample from.

\subsection{Approximate LDP Randomizers}\label[appendix]{subsec:approximate-ldp-randomizers}
In \cref{fig:approximate-ldp}, we give the precise specification of our approximate LDP randomizers for histograms and reals, based on the algorithms from \cref{sec:dp-algorithms}.
\begin{figure}[H]
    \begin{pchstack}[boxed,center]
        \scriptsize
        \pseudocode[
        head={$\ldp.\apply(x; \rho)$ for Reals},
        codesize=\scriptsize
        ]{
            \text{\textbf{input: }} k \in \mathbb{N}, \gamma \in [0, 1], \\
            \t \t \t \t \t x \in [0, 1], \rho \in \bin^*\\
            \text{Split $\rho$ into $(\rho_1, \rho_2, \rho_3)$}\\
            \overline{x} \gets \lfloor xk \rfloor + \widetilde{\text{Ber}}(xk - \lfloor xk \rfloor; \rho_1)\\
            b \gets \widetilde{\text{Ber}}(\gamma; \rho_2) \\
            \pcif b = 0 \pcdo \\
            \t \tilde{x} \gets \overline{x}\\
            \pcelse \\
            \t \tilde{x} \gets \widetilde{\text{Unif}}([0, k]; \rho_3)\\
            \pcreturn \tilde{x}
        }
        \;\;
        \pseudocode[
        head={$\ldp.\apply(x; \rho)$ for Histograms},
        codesize=\scriptsize
        ]{
            \text{\textbf{input: }} k \in \mathbb{N}, \gamma \in [0, 1], \\
            \t \t \t \t \t x \in [k], \rho \in \bin^*\\
            \text{Split $\rho$ into $(\rho_1, \rho_2)$}\\
            b \gets \widetilde{\text{Ber}}(\gamma; \rho_1) \\
            \pcif b = 0 \pcdo \\
            \t \tilde{x} \gets \overline{x}\\
            \pcelse \\
            \t \tilde{x} \gets \widetilde{\text{Unif}}([1, k]; \rho_2)\\
            \pcreturn \tilde{x}
        }
    \end{pchstack}
    \Description{Details of the approximate LDP algorithms for reals and histograms.}
    \caption{Approximate randomizers for reals and histograms.}
    \label{fig:approximate-ldp}
\end{figure}

\section{Implementing the Shuffler}
\label[appendix]{sec:implement-shuffle}
In practice a trusted shuffler can be implemented in a number of ways.
One way to do this is by using a \emph{mixnet}, or mix network.
A mixnet is a network involving several parties, that takes as input a list of messages and returns the same messages in a randomly permuted order.
Mixnets were first introduced in~\cite{chaumUntraceableElectronicMail1981} to realize untraceable e-mail and can be implemented in a variety of ways.
An overview can be found in, e.g.,~\cite{sampigethayaSurveyMixNetworks2006}.
In its most basic form, mixnets are implemented using a publicly known sequence of servers, whose public encryption keys are also available.
Any client wishing to send a message, encrypts their message in a layered way, i.e., like an onion, using the public keys of the servers in reverse order.
This encrypted `onion' is then sent to the first server, who batches a certain buffer and messages and then forwards this buffer in a random order, stripping one layer of encryption.
The following servers in the sequence repeat this process, until the final server sends the inside of the onion, the real message, to the recipient.
Implementations of mixnets that produces verifiably random permutations also exist.
See for example~\cite{bittauProchloStrongPrivacy2017} for an implementation of verifiable, oblivious shuffling using trusted hardware.

In conclusion, following also the discussion in~\cite{bittauProchloStrongPrivacy2017}, there are three main options for implementing a true, honest-but-curious, non-colluding shuffler.
The shuffler could be (1) a single trusted third-party; (2) a group of parties, in which trust is distributed; (3) one or more parties using trusted hardware.
The schemes as presented in this work are implementation-agnostic, i.e., they work with any choice of implementation.

\section{Security Proofs and Experiments}\label[appendix]{sec:security-proofs-experiments}

In this section, we provide security proofs for the three protocols, according to our definitions in \cref{sec:vldp-description}.
Before we detail the proofs, we provide the explicit experiments in each of the definitions and give some intuition in their construction.

\subsection{Experiments}\label[appendix]{subsec:experiments}
\cref{fig:completeness-experiment,fig:soundness-experiment,fig:zk-experiment,fig:shuffle-ind-experiment} describe the experiments used in the security definitions of \cref{subsec:security-defs}.
In all experiments, we explicitly describe the generation of the secret and public keys of each client's trusted environment on the second line of each experiment.
In reality, this is a step separate from our system, however, for completeness of the experiment definitions, we explicitly define it here.
Below, we give the formal definitions of these experiments, and provide some further intuition regarding their construction.

In the completeness experiment (\cref{fig:completeness-experiment}), we verify that the output $\tilde{x}_{i,j}$, with accompanying \nizkpk{} proof $\pi_{i,j}$, and public values $\tau_x^{i,j}$, generated by an honest client $i$ for time interval $j$, is accepted by an honest server, even when an adversary $\adv$ chooses the client's inputs $(x_{i,j},t_x^{i,j})$.\footnote{Completeness does not guarantee correctness of $\tilde{x}_{i,j}$. Correctness is implicitly guaranteed by soundness, which is defined below.}

\begin{figure}[H]
\centering
\scriptsize
\begin{pchstack}[boxed,center]
        \procedure[linenumbering,codesize=\scriptsize]{$\textbf{\textsf{Exp}}^\text{Comp}_{\adv}(1^\lambda, n, T)$}{
        \pp \leftarrow \setup(1^\secpar) \\
        \{\sk_i, \pk_i\}_i \leftarrow \sig.\kgen(\pp) \\
        (\ek, \vk, \pk_s, \sk_s, L) \leftarrow \kgen(\pp) \\
        \out_c^i \leftarrow \pcalgostyle{GenRand}(\pp, \textsf{aux}) \\
        \{x_{i,j}, t_x^{i,j}\}_{i,j} \leftarrow \adv(\pp, \ek, \vk, \pk_s, \sk_s, L, \{\pk_i\}_i) \\
        \pcif \exists i \in [n], j \in [T]: t_x^{i,j} \leq t_{j-1} \vee t_x^{i,j} > t_j \\
        \t \pcreturn \{\top\}_{i,j} \\
        \sigma_x^{i,j} \leftarrow \sig.\sign_{\sk_i}(x_{i,j}||t_x^{i,j}) \\
        \tilde{x}_{i,j}, \pi_{i,j}, \tau_x^{i,j} \leftarrow \pcalgostyle{Randomize}(\pp, \ek, t_j, \out_c^i, x_{i,j}, t_x^{i,j}, \sigma_x^{i,j}) \\
        \pcreturn \{\verify(\pp, \vk, t_j, \tilde{x}_{i,j}, \pi_{i,j}, \tau_x^{i,j})\}_{i,j}
    }
\end{pchstack}
\caption{Experiment for completeness definition.}
\label{fig:completeness-experiment}
\Description{Details of the experiment for completeness definition.}
\end{figure}

The soundness experiment (\cref{fig:soundness-experiment}) guarantees that no malicious, possibly colluding, clients are able to return a value $\tilde{x}_{i,j}$ that is not an honest evaluation of $\ldp.\apply(x_{i,j};\rho_{i,j})$, for a truly random, independently sampled $\rho_{i,j}$, for some $i$ and $j$.
In this experiment, the adversary is allowed to choose $t_x^{i,j}$ and controls all clients, who may deviate from the protocol arbitrarily.
The goal of the adversary is to let the server accept a tuple $(\tilde{x}_{i,j}, \pi_{i,j}, \tau_x^{i,j})$, where $\tilde{x}_{i,j}$ is not honestly computed.

\begin{figure}[H]
\centering
\scriptsize
\begin{pchstack}[boxed,center]
    \procedure[linenumbering,codesize=\scriptsize]{$\textbf{\textsf{Exp}}^\text{Snd-Real}_{\adv, S^*}(1^\lambda, n, T, \{x_{i,j}\}_{i,j})$}{
        \pp \leftarrow \setup(1^\secpar) \\
        \{\sk_i, \pk_i\}_i \leftarrow \sig.\kgen(\pp) \\
        (\ek, \vk, \pk_s, \sk_s, L, \{\pk_i\}_i) \leftarrow \kgen(\pp) \\
        \{t_x^{i,j}\}_{i,j} \leftarrow \adv(\pp, \ek, \vk, \pk_s, \{x_{i,j}\}_{i,j}) \\
        \pcif \exists i \in [n], j \in [T]: t_x^{i,j} \leq t_{j-1} \vee t_x^{i,j} > t_j \\
        \t \pcreturn \{\bot\}_{i,j} \\
        \sigma_x^{i,j} \leftarrow \sig.\sign_{\sk_i}(x_{i,j}||t_x^{i,j}) \\
        \{(\tilde{x}_{i,j}, \pi_{i,j}, \tau_x^{i,j})\}_{i,j} \leftarrow \adv^{S^*}(\pp, \ek, \vk, \pk_s, \{x_{i,j}, t_x^{i,j}, \sigma_x^{i,j}, \pk_i\}_{i,j})\\
        \pcreturn \{\verify(\pp, \vk, t_j, \tilde{x}_{i,j}, \pi_{i,j}, \tau_x^{i,j})\}_{i,j}
    }
\end{pchstack}
\caption{Experiment for soundness definition.}
\label{fig:soundness-experiment}
\Description{Details of the experiment for soundness definition.}
\end{figure}

The experiments for zero-knowledge (\cref{fig:zk-experiment}), define two different worlds. \textsf{Zk-real} denotes the real world, in which the adversary $\adv$ acts as the server, and interacts with honest clients (emulated by the environment).
In \textsf{Zk-sim}, $\adv$ acts as a server also, but instead interacts with a simulator $\sdv$.
$\sdv$ simulates an honest client, and should be able to generate messages with the same distribution as an actual client would, but without access to the input values $(x_{i,j},t_x^{i,j})$.
The adversary wins this game, if it can distinguish between both worlds.

\begin{figure}[H]
\centering
\scriptsize
\begin{pcvstack}[space=0.1cm,boxed,center]
    \procedure[linenumbering,codesize=\scriptsize]{$\textbf{\textsf{Exp}}^\text{Zk-Real}_{\adv}(1^\lambda, n, T, \{x_{i,j}\}_{i,j})$}{
        \pp \leftarrow \setup(1^\secpar) \\
        \{\sk_i, \pk_i\}_i \leftarrow \sig.\kgen(\pp) \\        (\ek, \vk, \pk_s, \sk_s, L, \trap) \leftarrow \adv(\pp) \\
        \out_c^i \leftarrow \pcalgostyle{GenRand}^{\adv}(\pp, \textsf{aux}) \\
        \{t_x^{i,j}\}_{i,j} \leftarrow \adv(\pp, \ek, \vk, \pk_s, \sk_s, L) \\
        \sigma_x^{i,j} \leftarrow \sig.\sign_{\sk_i}(x_{i,j}||t_x^{i,j}) \\
        \{(\tilde{x}_{i,j}, \pi_{i,j}, \tau_x^{i,j})\}_{i,j} \leftarrow \pcalgostyle{Randomize}(\pp, \ek, t_j, \out_c^i, x_{i,j}, \sigma_x^{i,j}) \\
        \pcif \text{\trap is not valid trapdoor for $(\relation, \ek, \vk)$} \\
        \t \t \t \vee \exists i \in [n], j \in [T]: t_x^{i,j} \leq t_{j-1} \vee t_x^{i,j} > t_j \\
        \t \pcreturn \bot \\
        \adv \leftarrow \{\tilde{x}_{i,j}, \pi_{i,j}, \tau_{i,j}\}_{i,j} \\
        \pcreturn (\textsf{view}_\adv, \{\tilde{x}_{i,j}\}_{i,j})
    }

    \procedure[linenumbering,codesize=\scriptsize]{$\textbf{\textsf{Exp}}^\text{Zk-Sim}_{\adv,\sdv}(1^\lambda, n, T, \{y_{i,j}\}_{i,j}))$}{
        \pp \leftarrow \setup(1^\secpar) \\
        \{\sk_i, \pk_i\}_i \leftarrow \sig.\kgen(\pp, \{\pk_i\}_i) \\
        (\ek, \vk, \pk_s, \sk_s, L, \trap) \leftarrow \adv(\pp, \{\pk_i\}_i) \\
        \out_c^i \leftarrow \sdv_1^{\adv}(\pp, \pk_s) \\
        \{t_x^{i,j}\}_{i,j} \leftarrow \adv(\pp, \ek, \vk, \pk_s, \sk_s, L) \\
        \{(\pi_{i,j}, \tau_x^{i,j})\}_{i,j} \leftarrow \sdv_2(\pp, \ek, \trap, t_j, \out_c^i, y_{i,j}) \\
        \pcif \text{\trap is not valid trapdoor for $(\relation, \ek, \vk)$} \\
        \t \t \t \vee \exists i \in [n], j \in [T]: t_x^{i,j} \leq t_{j-1} \vee t_x^{i,j} > t_j \\
        \t \pcreturn \bot \\
        \adv \leftarrow \{\tilde{y}_{i,j}, \pi_{i,j}, \tau_{i,j}\}_{i,j} \\
        \pcreturn (\textsf{view}_\adv, \{y_{i,j}\}_{i,j})
    }
    \end{pcvstack}
\caption{Experiments for the zero-knowledge definition.}
\label{fig:zk-experiment}
\Description{Details of the experiments for the zero-knowledge definition.}
\end{figure}

Finally, in the shuffle indistinguishability experiment (\cref{fig:shuffle-ind-experiment}), the adversary $\adv$ portrays the server and attempts to distinguish between two honest clients.
These honest clients, get the same input $(x,t_x)$, chosen by $\adv$, after both having executed the $\pcalgostyle{GenRand}$ protocol with $\adv$.
Subsequently, the environment only sends the outputs of $\pcalgostyle{Randomize}$ for one of the clients to $\adv$. $\adv$ wins the game if it is able to successfully determine to which client those outputs belong.

\begin{figure}[H]
    \centering
    \scriptsize
    \begin{pchstack}[boxed, center]
    \procedure[linenumbering,codesize=\scriptsize]{$\textbf{\textsf{Exp}}_\adv^\text{Sh-Ind}(\secpar)$}{
        \pp \leftarrow \setup(1^\secpar) \\
        \{\sk_i, \pk_i\}_i \leftarrow \sig.\kgen(\pp) \\
        (\ek, \vk, \pk_s, \sk_s, L, \trap) \leftarrow \adv(\pp, \{\pk_i\}_i) \\
        \adv(\pp) \rightarrow t_j\\
        \pcfor i \in \bin \\
        \t \pcalgostyle{GenRand}^\adv(\pp, t_j) \rightarrow \out_c^i \\
        \adv(\pp) \rightarrow (x, t_x) \\
        b \sample \bin \\
        \sig.\sign_{\sk_b}(x, t_x) \rightarrow \sigma_x^b \\
        \pcalgostyle{Randomize}(\pp, \ek, t_j, \out_c^b, x, t_x, \sigma_x^b) \rightarrow (\tilde{x}^b, \pi^b, \tau_x^b) \\
        \adv(\tilde{x}^b, \pi^b, \tau_x^b) \rightarrow b' \\
        \pcif \text{\trap is not valid trapdoor for $(\relation, \ek, \vk)$} \\
        \t \t \t \vee \exists i \in [n], j \in [T]: t_x^{i,j} \leq t_{j-1} \vee t_x^{i,j} > t_j \\
        \t \pcreturn \bot \\
        \pcreturn b' = b
    }
    \end{pchstack}
\caption{Experiment for shuffle indistinguishability.}
\label{fig:shuffle-ind-experiment}
\Description{Details of the experiment for shuffle indistinguishability.}
\end{figure}

\subsection{Proofs}\label[appendix]{subsec:proofs}
We only provide a full proof for the \pcalgostyle{Shuffle} scheme, as this is our main result.
Due to space constraints and similarity to the proof for the \pcalgostyle{Shuffle} scheme, we only provide sketches of the security proofs for the other schemes (\cref{thm:base,thm:expand}).

\begin{theorem}\label{thm:shuffle}
    $\mathcal{VLDP}_\text{shuffle}$ satisfies completeness, soundness, zero-knowledgeness and shuffle indistinguishability, given that $\nizkpk$ is secure for $\relation_\text{Base}$, $\comm$ a secure commitment scheme, $\prf$ a secure pseudo-random function, and $\sig$ an $\textsf{EUF-CMA}$ secure digital signature scheme.
\end{theorem}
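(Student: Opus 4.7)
\medskip

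\noindent\textbf{Proof plan.} The plan is to prove the four properties separately, reducing each to a single primitive whenever possible and to a short hybrid argument otherwise. Throughout, the key structural feature that makes the Shuffle scheme work is that the statement of $\relation_\text{shuffle}$ hides $\pk_i$, $\cm_{k_c}^i$, $k_s^i$ and both signatures inside the witness, so that only $\tilde{x}_{i,j}$ and the public parameters leak to the verifier.

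\emph{Completeness} is essentially by construction. First I would check that, under honest execution, the witness produced in $\pcalgostyle{Randomize}_\text{shuffle}$ satisfies every clause of $\relation_\text{shuffle}$ line by line: the timestamp check, the two signature verifications, the Pedersen opening, the XOR, the $\prf$ evaluation and the deterministic $\ldp.\apply$. The only probabilistic failure is abort in $\pcalgostyle{GenRand}_\text{shuffle}$ if the server signature fails to verify, which happens with probability zero for honest parties. Completeness of $\pcalgostyle{Verify}_\text{shuffle}$ then reduces to completeness of the underlying $\nizkpk$.

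\emph{Soundness} is the most intricate step and I expect it to be the main obstacle. I would proceed by a sequence of game hops starting from $\textbf{\textsf{Exp}}^\text{Snd-Real}$. (i) Replace $\nizkpk.\verify$ by the knowledge extractor $\edv_\adv$; by knowledge soundness, with overwhelming probability we obtain a witness $(t_x, x', \pk_i', \sigma_x', k_c', r_{k_c}', \cm_{k_c}', k_s', \sigma_s')$ satisfying $\relation_\text{shuffle}$. (ii) Use $\textsf{EUF-CMA}$ of $\sig$ under $\sk_s$ to argue that $(\pk_i', \cm_{k_c}', k_s')$ must equal a triple actually produced in some prior $\pcalgostyle{GenRand}_\text{shuffle}$ session with an honest server, except with negligible probability; hence $k_s'$ was drawn uniformly \emph{after} $\cm_{k_c}'$ was fixed. (iii) Use binding of $\comm$ to fix $k_c'$ to the unique value committed in that session. (iv) Use $\textsf{EUF-CMA}$ of $\sig$ under $\sk_i$ together with the assumption that raw inputs came from the trusted environment to argue $(x', t_x')$ matches one of the genuine signed inputs $\{(x_{i,j}, t_x^{i,j})\}$; the time-interval check in the relation then pins it to index $j$. (v) Finally, since $k_c'$ was committed before $k_s'$ was sampled uniformly, $k' = k_c' \oplus k_s'$ is uniform, so by PRF security $\rho' = \prf(k', s_j)$ is computationally indistinguishable from a fresh uniform string, which is exactly the distribution used in the ideal experiment. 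Summing the hybrid losses gives the required negligible bound. The delicate point is that the adversary may run many $\pcalgostyle{GenRand}_\text{shuffle}$ sessions with possibly malicious $k_c^i$'s; the commitment binding step is what ensures that the extracted $k_c'$ cannot be chosen adaptively after $k_s'$ is revealed.

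\emph{Zero-knowledge} follows by constructing $\sdv = (\sdv_1, \sdv_2)$ that mimics the honest flow without touching $(x_{i,j}, t_x^{i,j})$. $\sdv_1$ runs $\pcalgostyle{GenRand}_\text{shuffle}$ honestly but commits to a random dummy $k_c^i$; by the hiding property of $\comm$ the adversary's view is indistinguishable from the real one. $\sdv_2$ outputs $\tilde{x}_{i,j} = y_{i,j}$ and invokes the NIZK simulator on statement $(t_{j-1}, t_j, \pk_s, s_j, y_{i,j})$ using $\trap$ to produce $\pi_{i,j}$; since the Shuffle scheme moves $\pk_i$, the signatures, the commitment and $k_s^i$ into the witness, $\tau_x^{i,j}$ is empty/constant across clients and carries no extra information. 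Indistinguishability follows by a standard two-step hybrid: commitment hiding, then NIZK zero-knowledge.

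\emph{Shuffle indistinguishability} reduces to the previous two properties applied symmetrically to the two candidate clients. Since both clients receive the same input $(x, t_x)$ and draw fresh independent randomness in $\pcalgostyle{GenRand}_\text{shuffle}$, their $\rho$'s are identically distributed (uniform by the same PRF argument as in soundness), so the marginal distribution of $\tilde{x}^b$ is independent of $b$. The proof $\pi^b$ is simulatable from $\tilde{x}^b$ alone by the zero-knowledge argument above, and $\tau_x^b$ in the Shuffle scheme contains no client-identifying material. A hybrid replacing $\pi^b$ by a simulated proof collapses the two cases $b=0$ and $b=1$ to the same distribution, giving the required $\negl$ advantage.
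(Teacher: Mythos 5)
Your proposal is correct and follows essentially the same route as the paper's proof: completeness by checking the clauses of $\relation_\text{shuffle}$ and invoking \nizkpk{} completeness; soundness via knowledge extraction, two $\textsf{EUF-CMA}$ reductions (client and server keys), commitment binding, and PRF security; zero-knowledge via a simulator that runs $\pcalgostyle{GenRand}_\text{shuffle}$ honestly and uses the NIZK simulator with the trapdoor; and shuffle indistinguishability via the same hybrids (simulated proof, commitment hiding, PRF) that the paper spells out as $\textbf{\textsf{Exp}}_1$--$\textbf{\textsf{Exp}}_3$. The only cosmetic differences are the ordering of the two signature-forgery steps in the soundness hybrids and an unnecessary (but harmless) appeal to commitment hiding in the zero-knowledge simulator, where the real and simulated $\pcalgostyle{GenRand}$ messages are in fact identically distributed.
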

\begin{proof} We prove the properties one by one:
\subsubsection*{(1) Completeness} We see that the scheme satisfies completeness as long as the $\pcalgostyle{Randomize}_\text{shuffle}$ procedure outputs a valid \nizkpk{} proof $\pi$ for $\relation_\text{shuffle}$.
It therefore suffices to show that the proof for $\relation_\text{shuffle}$ is correctly computed when all parties are honest.

To prove this, fix an arbitrary $i \in [n]$ and an arbitrary $j \in [T]$.
First we observe that statement 1 of $\relation_\text{shuffle}$ has to hold by the condition on $t_x^{i,j}$.
Statement 2 of $\relation_\text{shuffle}$ holds by construction of $\sigma_x^{i,j}$.
Also, statements 3 and 5 hold, because these correspond exactly to the computations done in $\pcalgostyle{GenRand}_\text{shuffle}$ by the honest parties.
Finally, statements 4, 6, and 7 hold by the fact that an honest party will evaluate these correctly in $\pcalgostyle{Randomize}_\text{shuffle}$.

Therefore, we can conclude that the proof $\pi$ will be verified successfully, because our \nizkpk scheme is complete itself, i.e., $\textsf{Verify}_\text{shuffle} \neq \bot$, except for some probability that is $\negl$.
Since, $i$ and $j$ were picked arbitrarily, and both $T$ and $n$ are $\poly$, we can conclude that the total probability is also $\negl$.
        
\subsubsection*{(2) Soundness}
        To prove soundness, we define a series $\textbf{\textsf{Exp}}_0$--$\textbf{\textsf{Exp}}_3$ of hybrid experiments, where $\textbf{\textsf{Exp}}_0$ is $\textbf{\textsf{Exp}}^\text{Snd-Real}_{\adv, S^*}(1^\lambda, n, T, \{x_{i,j}\}_{i,j})$ (from \cref{def:soundness} as defined in \cref{fig:soundness-experiment}) and $\textbf{\textsf{Exp}}_3$ is close to the ideal.
        Recall, that by definition of soundness, the adversary $\adv$, who controls all, potentially malicious, clients, can send messages to and receive corresponding replies from an honest server $\sdv^*$.
        We will show that all these experiments are (computationally) indistinguishable, and therefore $\mathcal{VLDP}_\text{shuffle}$ is sound.

        \begin{itemize}[leftmargin=*]
        \item $\textbf{\textsf{Exp}}_0$: Equal to $\textbf{\textsf{Exp}}^\text{Snd-Real}_{\adv, S^*}(1^\lambda, n, T, \{x_{i,j}\}_{i,j})$.
        
        \item $\textbf{\textsf{Exp}}_1$: This is the same experiment as $\textbf{\textsf{Exp}}_0$, except that now we run a p.p.t.\ knowledge extractor $\edv_\adv$ to obtain the witness $\vec{w}$ that $\adv$ used to generate the proof.
        We know that such an extractor exists, due to knowledge soundness of \nizkpk.
        Now, instead of checking a proof $\pi_{i,j}$, the verifier uses $\vec{w}_{i,j}$ and $\vec{\phi}_{i,j}$ to check the statements of $\relation_\text{shuffle}$ directly.
        Clearly, both games are identical up to the probability that $\adv$ wins the knowledge soundness game for one of the proofs.
        By knowledge soundness of $\nizkpk$ we know that this probability is negligible: \[|\Pr[\textbf{\textsf{Exp}}_0] - \Pr[\textbf{\textsf{Exp}}_1] | \leq \negl.\]
        
        \item $\textbf{\textsf{Exp}}_2$: This is the same experiment as $\textbf{\textsf{Exp}}_1$, except that now the verifier asserts that the values $x$ and $t_x$ in $\vec{w}_{i,j}$ are equal to $x_{i,j}$ and $t_x^{i,j}$.
        If this is not the case, we set $\textsf{fail}_2 = \textsf{true}$.
        Clearly both games are identical up to $\textsf{Fail}_2$:
        \[|\Pr[\textbf{\textsf{Exp}}_1] - \Pr[\textbf{\textsf{Exp}}_2] | \leq \Pr[\textbf{\textsf{Fail}}_2].\]
        Since $\sig$ is an $\textsf{EUF-CMA}$ secure signature scheme that has been used to generate $\sigma_x$ and $\adv$ does not know $\sk_i$, it follows that $\Pr[\textbf{\textsf{Fail}}_2] \leq \negl$.
        
        \item $\textbf{\textsf{Exp}}_3$: This is the same experiment as $\textbf{\textsf{Exp}}_2$, except that the server $S^*$ now additionally maintains a list $R$ containing entries of the form $(i, (\pk_i, \cm_{k_c}^i, k_s^i))$.
        These entries correspond to messages $(\pk_i, \cm_{k_c})$ received during occurrences of the $\pcalgostyle{GenRand}_\text{shuffle}$ protocol with user $i$. $k_s^i$ corresponds to the server seed $k_s$ that was generated by the server during this particular occurrence.
        Note, that each user $i$ can only have one entry in $R$, due to step 4 in $\pcalgostyle{GenRand}_\text{shuffle}$.
        Now, if $\textsf{fail}_2$ has not been set, the server asserts, in $\pcalgostyle{Verify}_\text{shuffle}$, that $R$ indeed has an entry $(\star, (\pk_i, \cm_{k_c}^i, k_s^i))$, where $(\pk_i, \cm_{k_c}^i, k_s^i)$ are the corresponding elements of $\vec{w}_{i,j}$.
        If this is not the case, we set $\textsf{fail}_3 = \textsf{true}$.
        Clearly both games are identical up to $\textsf{Fail}_3$:
        \[|\Pr[\textbf{\textsf{Exp}}_2] - \Pr[\textbf{\textsf{Exp}}_3 | \leq \Pr[\textbf{\textsf{Fail}}_3].\]
        The only way, by which the event $\textbf{\textsf{Fail}}_3$ can occur, is if the client can produce a tuple $(\pk_i, \cm_{k_c}^i, k_s^i)$ with signature $\sigma_s^i$, such that $\sig.\verify_{\pk_s}(\sigma_s^i, \pk_i || \cm_{k_c}^i || k_s^i) = 1$.
        $\sdv^*$ will only have generated one signature for each $\pk_i$, due to step 4 in $\pcalgostyle{GenRand}_\text{shuffle}$, and this tuple is on $R$.
        Therefore, for $\textbf{\textsf{Fail}}_3$ to occur, the client must have forged a signature on a tuple $(\pk_i', \cm_{k_c}^{i\prime}, k_s^{i\prime})$, where at least one element differs from $(\pk_i, \cm_{k_c}^i, k_s^i)$.
        Since $\sig$ is an $\textsf{EUF-CMA}$ secure signature scheme that has been used to generate $\sigma_s^i$ and $\adv$ does not know $\sk_s$, it follows that $\Pr[\textbf{\textsf{Fail}}_3] \leq \negl$.
        \end{itemize}

        Finally, we observe that if $\textsf{Fail}_3$ does not occur, we are essentially at a point where $\tilde{x}_{i,j} = \ldp.\apply(x_{i,j}; \rho_{i,j})$, where $\rho_{i,j} = \prf(k_c^i \oplus k_s^i, s_j)$.
        We observe that $k_s$ is chosen uniformly at random and independently of $x_{i,j}$.
        Moreover, $k_s^i$ is bound to all $x_{i,j}$, since the same public key $\pk_i$ is used inside $\sigma_s$ and for the verification of $\sigma_x$.
        Also, $s_j$ is public and independent of all $x_{i,j}$ and all $x_{i,j}$ are given.
        Next to this, $k_c$ is uniquely determined by $\cm_{k_c}$, except with negligible probability, according to the binding property of $\comm$.
        And, since $\cm_{k_c}^i$ is fixed before $k_s^i$ is chosen uniformly at random, $k_c^i \oplus k_s^i$ is also be a uniform random bitstring, and by the definition of a secure $\prf$, $\rho_{i,j}$ will also be distributed at random, except with negligible probability.
        Therefore, it follows that:
        \begin{multline*}
            \left\vert \Pr[\textbf{\textsf{Exp}}_3] - \Pr\left[\ldp.\apply(x_{i,j}; \rho_{i,j}) = \{y_{i,j}\}_{i,j} \middle\vert \right.\right. \\
            \left.\left. \rho_{i,j} \sample \{0,1\}^* \right]\right\vert \leq \negl.
        \end{multline*}

\subsubsection*{(3) Zero-Knowledge}
To show that $\mathcal{VLDP}_\text{shuffle}$ satisfies the zero-knowledge property, we will show that the joint distribution of the output and all messages received by $\adv$ in the real scheme is indistinguishable from those generated by the simulator $\sdv = (\sdv_1, \sdv_2)$ (see \cref{fig:simulators-shuffle}).
Note, that in our model we assume that the verifier behaves like an honest-but-curious adversary, and thus follows the protocol.
The first simulator algorithm $\sdv_1$, simulates a client in $\pcalgostyle{GenRand}_\text{shuffle}$, thereby also interacting with $\adv$, representing the server.
The second $\sdv_2$, simulates the $\pcalgostyle{Randomize}_\text{shuffle}$ algorithm.

\begin{figure}[H]
\begin{pchstack}[boxed,center,space=0.25cm]
    \scriptsize
    \procedure[linenumbering,codesize=\scriptsize]{$\sdv_1(\pp, \pk_s)$}{
        k_c \sample \bin^* \\
        r_{k_c} \sample \bin^* \\
        \cm_{k_c} = \comm(k_c; r_{k_c} )\\
        \text{Send } (\pk_i, \cm_{k_c}^i) \text{to \adv{} as client $i$.} \\
        \text{Receive $(k_s, \sigma_s)$ from $\adv$.} \\
        \pcreturn (k_c, r_{k_c}, \cm_{k_c}, k_s, \sigma_s)
    }
    \procedure[space=auto,linenumbering]{$\sdv_2(\pp, \ek, \trap, t_j, \out_c^i, y_{i,j})$}{
        \vec{\phi} = (t_{j-1}, t_j, \pk_s, s_j, y_{i,j}) \\
        \pi \leftarrow \simulator_\text{nizk}(\relation, \trap, \vec{\phi})\\
        \pcreturn (\pi, y_{i,j})
    }
\end{pchstack}
\Description{Details of both simulator algorithms.}
\caption{Simulator $\sdv = (\sdv_1, \sdv_2)$ for the zero-knowledge proof of \cref{thm:shuffle}.}
\label{fig:simulators-shuffle}
\end{figure}

First, we observe that the message produced by $\sdv_1$ is distributed as in the real experiment, since our server is honest-but-curious and $\sdv_1$, follows the exact same steps as $\pcalgostyle{GenRand}_\text{shuffle}$.
Second, we observe that $y_{i,j}$ is fixed.
Third, we observe that the values in $\vec{\phi}$ are either public or fixed, and therefore are indistinguishable between the real world and the simulator.
Since $\nizkpk$ is a secure scheme for $\relation_\text{shuffle}$ with the zero-knowledge property, and $\trap$ is a valid trapdoor, $\sdv_2$ can use the $\nizkpk$ simulator $\simulator$ to generate a proof $\pi$ with the correct distribution, i.e., such that $\pi$ passes verification for $\vec{\phi}$.
From these observations it follows that the joint distribution of the real scheme and its output, is indistinguishable from that generated by \sdv, and therefore the scheme is zero-knowledge.

\subsubsection*{(4) Shuffle Indistinguishability}
        To prove shuffle indistinguishability, we describe a series of hybrid experiments $\textbf{\textsf{Exp}}_0$ -- $\textbf{\textsf{Exp}}_3$, where $\textbf{\textsf{Exp}}_0$ is equal to $\textbf{\textsf{Exp}}^\text{Sh-Ind}_\adv$, and $\textbf{\textsf{Exp}}_3$ leaves $\mathcal{A}$ essentially random guessing.
        We will show that all these experiments are indistinguishable, and therefore $\mathcal{VLDP}_\text{shuffle}$ satisfies shuffle indistinguishability.

        \begin{itemize}
            \item $\textbf{\textsf{Exp}}_0$: Equal to $\textbf{\textsf{Exp}}^\text{Sh-Ind}_\adv$ in \cref{def:shuffle-indistinguishability}.
            
            \item $\textbf{\textsf{Exp}}_1$: This is the same experiment as $\textbf{\textsf{Exp}}_0$, except that now we also give $\trap$ as input to $\pcalgostyle{Randomize}_\text{shuffle}$ and replace the computation of $\pi^b$ (step 6 in $\pcalgostyle{Randomize}_\text{shuffle}$), by a simulated proof using the $\nizkpk$ simulator $\simulator$ using $\trap$.
            By the zero-knowledge property of $\nizkpk$ we conclude that both experiments are indistinguishable: \[|\Pr[\textbf{\textsf{Exp}}_0] - \Pr[\textbf{\textsf{Exp}}_1] | \leq \negl.\]
            
            \item $\textbf{\textsf{Exp}}_2$: This is the same experiment as $\textbf{\textsf{Exp}}_1$, except that in step 1 of $\pcalgostyle{Randomize}_\text{shuffle}$ we replace $k_c$ by a random bit-string of the same length.
            We will still pass verification, since $\pi$ has been replaced by a simulated proof.
            Finally, by the hiding property of $\comm$, \adv{} cannot distinguish between the usage of $k_c$ or some other random bitstring of the length, except with negligible probability.
            Therefore, we conclude that both experiments are indistinguishable: \[|\Pr[\textbf{\textsf{Exp}}_1] - \Pr[\textbf{\textsf{Exp}}_2] | \leq \negl.\]
            
            \item $\textbf{\textsf{Exp}}_3$: This is the same experiment as $\textbf{\textsf{Exp}}_2$, except that we replace $\rho$ in step 2 of $\pcalgostyle{Randomize}_\text{shuffle}$ by a random bit-string of the same length.
            We will still pass verification, since $\pi$ has been replaced by a simulated proof.
            Finally, since $k_c$ was already replaced by a uniform random bitstring, we know that $k$ is a uniform random bitstring, and by the fact that $\prf$ is secure, $\prf(k, s_j)$ is indistinguishable from a random bit-string of the same length.
            Therefore, we conclude that both experiments are indistinguishable: \[|\Pr[\textbf{\textsf{Exp}}_2] - \Pr[\textbf{\textsf{Exp}}_3] | \leq \negl.\]
        \end{itemize}
        We observe that $\pi^b$ is replaced by a simulated proof in $\textbf{\textsf{Exp}}_3$, i.e., $\pi^0$ has the same distribution as $\pi^1$.
        Moreover, $\tilde{x}^b = \ldp.\apply(x; r)$, where $x$ is the same for both values of $b$ and $r$ is chosen uniformly at random, i.e., $\tilde{x}^0$ has the same distribution as $\tilde{x}^1$.
        Finally, $\tau_x^b$ is empty.
        Therefore, $(\tilde{x}^b, \pi^b, \tau_x^b)$ has the same distribution for either value of $b$, meaning the advantage of $\adv$ in $\textbf{\textsf{Exp}}_3$ is essentially the same as if $\adv$ were random guessing.
        Thus, we conclude that \[|\Pr[\textbf{\textsf{Exp}}_3] - \frac{1}{2}| \leq \negl. \qedhere \]
\end{proof}

\begin{theorem}\label{thm:base}
    $\mathcal{VLDP}_\text{base}$ satisfies completeness, soundness, and zero-knowledgeness, given that $\nizkpk$ is secure for $\relation_\text{Base}$, $\comm$ a secure commitment scheme, $\prf$ a secure pseudo-random function, and $\sig$ an $\textsf{EUF-CMA}$ secure digital signature scheme.
\end{theorem}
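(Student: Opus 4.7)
The plan is to follow the same four-part structure as the proof of \cref{thm:shuffle}, adapted to the simpler structure of $\mathcal{VLDP}_\text{base}$: the commitment $\cm_{\rho_c}^{i,j}$ is published as part of $\tau_{i,j}$ rather than hidden inside the witness, and \pcalgostyle{GenRand} is re-executed for every time interval so that the per-client state $\out_c^{i,j}$ is refreshed at every $j$ (and indexed accordingly in $L$). The \pcalgostyle{Shuffle} proof therefore provides a template that can be specialised to \pcalgostyle{Base} largely by re-indexing and by moving a few values between the public statement and the witness.

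For completeness, I would fix arbitrary $i \in [n]$ and $j \in [T]$ and verify each clause of $\relation_\text{base}$ in turn: clause~1 follows from the admissibility condition on $t_x^{i,j}$, clause~2 from honest signing under $\sk_i$, clauses~3 and~4 from the honest computations inside $\pcalgostyle{GenRand}_\text{base}$ and $\pcalgostyle{Randomize}_\text{base}$, and clause~5 from honest evaluation of $\ldp.\apply$. Completeness of \nizkpk then yields acceptance except with negligible probability, and a union bound over the $\poly$ many pairs $(i,j)$ preserves negligibility; the separate $\sigma_s^{i,j}$ check in $\pcalgostyle{Verify}_\text{base}$ passes by correctness of $\sig$.

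For soundness I would mirror the hybrid argument of \cref{thm:shuffle}. $\textbf{\textsf{Exp}}_0$ is the real game; in $\textbf{\textsf{Exp}}_1$ the server runs the NIZK-PK extractor $\edv_\adv$ on each accepted proof to recover $\vec{w}_{i,j} = (t_x^{i,j}, x_{i,j}, \sigma_x^{i,j}, \rho_c^{i,j}, r_{\rho_c}^{i,j})$, with the gap bounded by $\negl$ via knowledge soundness. $\textbf{\textsf{Exp}}_2$ additionally asserts that the extracted $(x, t_x)$ equal the challenge inputs, with the gap bounded by \textsf{EUF-CMA} security of $\sig$ under the unknown $\sk_i$. $\textbf{\textsf{Exp}}_3$ maintains a server-side list $R$ of tuples $(i, j, \pk_i, \cm_{\rho_c}^{i,j}, k_s^{i,j})$ recorded during $\pcalgostyle{GenRand}_\text{base}$, and $\pcalgostyle{Verify}_\text{base}$ asserts that the published $\tau_{i,j}$ matches an entry of $R$; distinguishing this from $\textbf{\textsf{Exp}}_2$ requires a forgery of $\sigma_s^{i,j}$ under $\sk_s$, again bounded by \textsf{EUF-CMA}. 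Once all failures are excluded, binding of $\comm$ pins down $\rho_c^{i,j}$ before $k_s^{i,j}$ is sampled uniformly, so $\rho_{i,j} = \rho_c^{i,j} \oplus \prf(k_s^{i,j}, 0)$ is computationally indistinguishable from uniform by \prf security, bringing us within $\negl$ of the ideal distribution required by \cref{def:soundness}. For zero-knowledge I would build $\sdv = (\sdv_1, \sdv_2)$ analogous to \cref{fig:simulators-shuffle}: $\sdv_1$ executes $\pcalgostyle{GenRand}_\text{base}$ honestly (the honest-but-curious server follows the protocol, so the transcript is identically distributed), and $\sdv_2$ on claimed output $y_{i,j}$ invokes the \nizkpk{} simulator with trapdoor $\trap$ on $\vec{\phi}_{i,j} = (t_{j-1}, t_j, \pk_i, \cm_{\rho_c}^{i,j}, \rho_s^{i,j}, y_{i,j})$, outputting $(\pi_{i,j}, \tau_{i,j})$; indistinguishability follows because every component of $\vec{\phi}_{i,j}$ and $\tau_{i,j}$ is public or honestly generated in \pcalgostyle{GenRand}, and $\pi_{i,j}$ matches a real proof by the NIZK zero-knowledge property.

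The main obstacle I anticipate is $\textbf{\textsf{Exp}}_3$ of soundness: although $\cm_{\rho_c}^{i,j}$ is now public inside $\tau_{i,j}$, a malicious client could still try to present a freshly invented $\cm_{\rho_c}^{\prime}$, or reuse a previously-signed commitment across intervals, to satisfy $\relation_\text{base}$; only the signature $\sigma_s^{i,j}$ binding $\pk_i$, $\cm_{\rho_c}^{i,j}$, $k_s^{i,j}$, and $t_j$ together, combined with the $(i, t_j) \not\in L$ check that prevents multiple legitimate server signatures for the same interval, rules these attacks out. Arguing carefully that the per-interval fresh $k_s^{i,j}$ suffices to make every $\rho_{i,j}$ independently uniform, without invoking the randomness-expansion mechanics of \pcalgostyle{Expand} or \pcalgostyle{Shuffle}, is where the bookkeeping is most delicate.
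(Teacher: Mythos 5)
Your proposal is correct and follows essentially the same route as the paper's own (sketched) proof: clause-by-clause verification of $\relation_\text{base}$ plus \nizkpk{} completeness, the same hybrid chain $\textbf{\textsf{Exp}}_0$--$\textbf{\textsf{Exp}}_3$ (extractor, \textsf{EUF-CMA} on $\sigma_x$, per-interval list $R$ keyed by $(i,t_j)$ backed by \textsf{EUF-CMA} on $\sigma_s$, then binding of $\comm$ and \prf{} security to make $\rho_{i,j}$ pseudorandom), and the same two-part simulator $(\sdv_1,\sdv_2)$ with $\sdv_1$ running $\pcalgostyle{GenRand}_\text{base}$ honestly and $\sdv_2$ invoking the \nizkpk{} simulator on the public statement. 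Your closing remarks on the role of $\sigma_s^{i,j}$ binding $t_j$ and the $(i,t_j)\not\in L$ check are exactly the bookkeeping the paper's Exp$_3$ adaptation relies on, so no gap.
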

\begin{proof}[Proof (Sketch)] We prove the properties one by one:
\subsubsection*{(1) Completeness} By inspection of the protocol and completeness of \nizkpk.
\subsubsection*{(2) Soundness} Also here, we define a series of hybrid experiments, where $\textbf{\textsf{Exp}}_0$ is equal to $\textbf{\textsf{Exp}}^\text{Snd-Real}_{\adv, S^*}$ and $\textbf{\textsf{Exp}}_3$ is close to the ideal.
        In $\textbf{\textsf{Exp}}_1$, we again use the knowledge extractor $\edv_\adv$ to obtain the witness $\vec{w}$ and verify the statements in $\relation_\text{base}$ directly.
        $\textbf{\textsf{Exp}}_2$ is analogous to that in the proof for \cref{thm:shuffle}.

        $\textbf{\textsf{Exp}}_3$ is similar to that in the proof for \cref{thm:shuffle}, however, $R$ will now contain entries $((i, t_j), (\pk_i, \cm_{\rho_c}^{i,j}, k_s^{i,j}))$.
        I.e., each user $i$ now has one entry for each $t_j$, rather than using the same entry for all $t_j$.
        Analogously to before, the server now verifies, in $\pcalgostyle{Verify}_\text{base}$ that $((\star, t_j), (\pk_i, \cm_{k_c}^{i,j}, k_s^{i,j})$ is on $R$.

        Finally, analogous to the proof for \cref{thm:shuffle}, by the binding property of $\comm$, we know that $\rho$ is sampled independently and uniformly at random and, irrespective of \adv.
        Therefore, we can conclude soundness.

\subsubsection*{(3) Zero-Knowledge}
        Just like in the proof for \cref{thm:shuffle}, $\sdv_1$ is identical to $\pcalgostyle{GenRand}_\text{base}$, i.e., for all $t_j$ it computes $\cm_{\rho_c}$, and receives $(k_s, \sigma_s)$ from $\adv$.

        Just like in $\sdv_2$, we also create a simulated proof from the statement vector $\vec{\phi}$ alone.
        $\vec{\phi}$ consist of $\sdv_2$'s inputs, values from $\sdv_1$ and $\rho_s$.
        $\sdv_2$ simply computes $\rho_s$ as in the real case.
        Note, that also here we use $y_{i,j}$ for $\tilde{x}$, rather than computing it from some signed input $x$ and the randomness $\rho$.

        Additionally, unlike the proof for \cref{thm:shuffle}, $\sdv_2$ outputs the values $(\pk_i, \cm_{\rho_c}, k_s, \sigma_s)$, where $\pk_i$ is known and fixed, and the other values come from $\sdv_1$.

        Following arguments analogous to the proof for \cref{thm:shuffle} and due to zero-knowledgeness of $\nizkpk$ we conclude that $\textsf{Base}$ is zero-knowledge.
\end{proof}

\begin{theorem}\label{thm:expand}
    $\mathcal{VLDP}_\text{expand}$ satisfies completeness, soundness, and zero-knowledgeness, given that $\nizkpk$ is secure for $\relation_\text{Base}$, $\comm$ a secure commitment scheme, $\prf$ a secure pseudo-random function, $\sig$ an $\textsf{EUF-CMA}$ secure digital signature scheme, and $\crh$ (use to construct $\merkletree$) a collision-resistant hash function.
\end{theorem}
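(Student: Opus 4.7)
The plan is to follow the template already used for Theorems~\ref{thm:shuffle} and~\ref{thm:base}, adapting the hybrids to the two distinctive features of the \pcalgostyle{Expand} scheme: the Merkle-tree commitment to the vector of per-interval randomness, and the fact that $\pcalgostyle{GenRand}_\text{expand}$ is run only once per client (so the server's bookkeeping list $L$ and the signed tuple $\sigma_s^i$ are indexed by $i$ alone, not by $(i,t_j)$).

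For completeness, I would argue by direct inspection. Fix arbitrary $i \in [n]$ and $j \in [T]$. Statements 1--3 and 5--6 of $\relation_\text{expand}$ follow exactly as in the \pcalgostyle{Base} proof, because $\pcalgostyle{Randomize}_\text{expand}$ and the signing of $x_{i,j}$ are identical up to how $\rho_s^{i,j}$ is derived. The one new conjunct, statement~4, asserting that $\cm_{\rho_c}^{i,j}$ is leaf $j$ of $\merkletree$ with root $\rt_i$, holds by construction in step~5 of $\pcalgostyle{GenRand}_\text{expand}$. Completeness of $\nizkpk$ and a union bound over the polynomially many $(i,j)$ pairs close the argument.

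For soundness, I mirror the hybrid chain in the proof of Theorem~\ref{thm:shuffle}: $\textbf{\textsf{Exp}}_0 = \textbf{\textsf{Exp}}^\text{Snd-Real}_{\adv,S^*}$; $\textbf{\textsf{Exp}}_1$ replaces proof verification by invoking the knowledge extractor $\edv_\adv$ and checking the clauses of $\relation_\text{expand}$ on the extracted witness (indistinguishability by knowledge soundness of $\nizkpk$); $\textbf{\textsf{Exp}}_2$ asserts that the extracted $(x,t_x)$ equal the given $(x_{i,j},t_x^{i,j})$ (indistinguishability by \textsf{EUF-CMA} security of $\sig$ under $\sk_i$); $\textbf{\textsf{Exp}}_3$ has $S^*$ maintain a list $R$ of tuples $(i,(\pk_i,\rt_i,k_s^i))$ from each $\pcalgostyle{GenRand}_\text{expand}$ session and verifies that the extracted $(\pk_i,\rt_i,k_s^i)$ lies on $R$ (indistinguishability by \textsf{EUF-CMA} security of $\sig$ under $\sk_s$, noting that each client appears at most once on $R$ by step~3 of $\pcalgostyle{GenRand}_\text{expand}$). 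The genuinely new step is $\textbf{\textsf{Exp}}_4$, which additionally requires that the extracted leaf commitment $\cm_{\rho_c}^{i,j}$ equals the honest $j$-th leaf of the Merkle tree whose root is $\rt_i$; any deviation yields a collision along the authentication path, so indistinguishability reduces to collision-resistance of $\crh$. I expect this Merkle-path argument to be the main obstacle, as it is the sole place where the new $\crh$ assumption is invoked and it requires a careful reduction from a discrepancy at position $j$ to an explicit collision between internal tree nodes. Once $\textbf{\textsf{Exp}}_4$ is reached, the binding of $\comm$ fixes $\rho_c^{i,j}$ before $k_s^i$ is sampled, and PRF security of $\prf(k_s^i\|s_j)$ ensures that $\rho_{i,j} = \rho_c^{i,j} \oplus \rho_s^{i,j}$ is computationally indistinguishable from a fresh uniform bitstring, matching the ideal distribution in Definition~\ref{def:soundness}.

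For zero-knowledge, I would define $\sdv=(\sdv_1,\sdv_2)$ analogously to Figure~\ref{fig:simulators-shuffle}. $\sdv_1$ runs $\pcalgostyle{GenRand}_\text{expand}$ honestly: it samples a random seed $k_c^i$, constructs the vector of commitments $\vec{\cm}_{\rho_c}^i$, builds $\rt_i$, sends $(\pk_i,\rt_i)$ to $\adv$, and receives $(k_s^i,\sigma_s^i)$. Since $\sdv_1$ reproduces the real protocol transcript verbatim, its output is identically distributed. $\sdv_2$, on input $y_{i,j}$, computes $\rho_s^{i,j} = \prf(k_s^i\|s_j)$ and $\vec{\phi}_{i,j} = (t_{j-1},t_j,\pk_i,\rt_i,\rho_s^{i,j},y_{i,j})$, then invokes the $\nizkpk$ simulator $\simulator_\text{nizk}(\relation_\text{expand},\trap,\vec{\phi}_{i,j})$ to produce $\pi_{i,j}$, and finally outputs $\tau_{i,j} = (\pk_i,\rt_i,k_s^i,\sigma_s^i)$ with the known/public components. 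Because every component of $\vec{\phi}_{i,j}$ is either public or generated by $\sdv_1$ with the real distribution, and the proof is simulated with trapdoor $\trap$, zero-knowledge of $\nizkpk$ yields computational indistinguishability between $\textbf{\textsf{Exp}}^\text{Zk-Real}_\adv$ and $\textbf{\textsf{Exp}}^\text{Zk-Sim}_{\adv,\sdv}$. No extra subtleties arise relative to the \pcalgostyle{Base} proof, because $\rt_i$ is a function only of the independently sampled commitment randomness and hence leaks nothing about any candidate input.
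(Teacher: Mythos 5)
Your proposal is correct and follows essentially the same route as the paper's (sketched) proof: completeness by inspection plus $\nizkpk$ completeness, soundness via the same hybrid chain as in \cref{thm:shuffle} with $\cm_{\rho_c}$ replaced by $\rt$ and the final uniformity argument resting on binding of $\comm$, collision resistance of $\crh$, and $\prf$ security, and zero-knowledge via the simulator $\sdv=(\sdv_1,\sdv_2)$ that runs $\pcalgostyle{GenRand}_\text{expand}$ honestly and simulates the proof from $\vec{\phi}$ with the trapdoor. Your only deviation is cosmetic: you isolate the Merkle-path/collision-resistance step as an explicit extra hybrid, whereas the paper folds it into the concluding argument of its $\textbf{\textsf{Exp}}_3$.
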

\begin{proof}[Proof (Sketch)] We prove the properties one by one:

\subsubsection*{(1) Completeness} By inspection of the protocol and completeness of \nizkpk.

\subsubsection*{(2) Soundness} Also here, we define a series of hybrid experiments, where $\textbf{\textsf{Exp}}_0$ is equal to $\textbf{\textsf{Exp}}^\text{Snd-Real}_{\adv, S^*}$ and $\textbf{\textsf{Exp}}_3$ is close to the ideal.
        In $\textbf{\textsf{Exp}}_1$, we again use the knowledge extractor $\edv_\adv$ to obtain the witness $\vec{w}$ and verify the statements in $\relation_\text{expand}$ directly.
        $\textbf{\textsf{Exp}}_2$ is analogous to the proof for \cref{thm:shuffle}.

        $\textbf{\textsf{Exp}}_3$ is analogous to that in the proof for \cref{thm:shuffle}, however, $\cm_{\rho_c}$ is replaced by $\rt$.

        Finally, analogous to the proof for \cref{thm:shuffle}, by the binding property of $\comm$, and by collision resistance of the hash function $\crh$ used to construct the $\merkletree$, we know that $\rho$ is uniformly at random, irrespective of $\adv$.
        Thus, we conclude that $\textsf{Expand}$ is sound.

\subsubsection*{(3) Zero-Knowledge}
        Just like in the proof for \cref{thm:shuffle}, $\sdv_1$ acts identical to $\pcalgostyle{GenRand}_\text{base}$, i.e., for all $t_j$ it computes $\rt$, and receives $(k_s, \sigma_s)$ from $\adv$.

        Just like in $\sdv_2$, we also create a simulated proof from the statement vector $\vec{\phi}$ alone.
        $\vec{\phi}$ consist of $\sdv_2$'s inputs, values from $\sdv_1$ and $\rho_s$.
        $\sdv_2$ simply computes $\rho_s$ as in the real case.
        Note, that also here we use $y_{i,j}$ for $\tilde{x}$, rather than computing it from some signed input $x$ and the randomness $\rho$.
        Additionally, unlike the proof for \cref{thm:shuffle}, $\sdv_2$ outputs the values $(\pk_i, \rt, k_s, \sigma_s)$, where $\pk_i$ is known and fixed, and the other values come from $\sdv_1$.

        Following arguments analogous to the proof for \cref{thm:shuffle} and due to zero-knowledgeness of \nizkpk we obtain the zero-knowledge property for $\textsf{Base}$.
\end{proof}

\section{Appendix to Section 7}
\label[appendix]{sec:appendix-to-section-7}

\subsection{Dataset Description}
\label[appendix]{subsec:dataset-description}
\subsubsection*{Dataset 1: Geolife GPS Trajectory} This is a location dataset of 182 users, collected as part of Microsoft Research Asia's GeoLife project over the period of 2007 to 2012.\footnote{The original dataset can be found at \url{https://www.microsoft.com/en-us/research/publication/geolife-gps-trajectory-dataset-user-guide/}} Each data point contains latitude, longitude (GPS coordinates) and altitude information of a user on a given day.
Upon inspecting the data, we found that it was highly sparse.
In particular, only a small subset of users had GPS coordinates recorded for any given day.
We therefore decided to extract five readings from each user from five different days, assuming that the corresponding readings were taken from the same day.
For each day, we only took the first GPS coordinates.
We then used the Nominatim API\footnote{See \url{https://nominatim.org/}} to obtain the address corresponding to each GPS coordinate using reverse geocoding.
From the address thus returned, we retained only the postcode of each location.
Most of the postcodes were only visited by a very small amount of users.
We therefore took the 7 top postcodes and included the rest into a single postcode named \verb+all_others+.
Thus, in total we have 8 postcodes per day.
This dataset is used for the LDP algorithm for histograms where the goal is to estimate the true histogram of users in each postcode per day.
Note that we have $k = 8$, where $\{1,\ldots,k\}$ represent the respective postcodes in the algorithm for histograms (see \cref{fig:ldp-algos}).

\subsubsection*{Dataset 2: Smart Meter}
This dataset is extracted from the smart meters in London dataset.\footnote{See
\url{https://www.kaggle.com/datasets/jeanmidev/smart-meters-in-london}} Which was originally extracted from energy readings dataset of 5,567 London households, which were obtained during the Low Carbon London project led by UK Power Networks between 2011 and 2014.
More specifically, we took the \verb+daily_dataset.csv+ file and take the mean energy reading from each household for the last five days of this dataset with the last date 25/02/2014.
Households that did not have a mean energy reading for a particular day are assigned mean energy of 0 for that day.
Finally, we normalized the readings by dividing each mean energy value by the maximum mean energy in \verb+daily_dataset.csv+ which was 6.928 kWh. This dataset is used for the LDP algorithms for reals where the goal is to estimate the average energy reading of a household per day, by estimating the sum of mean energy consumption across all households and then dividing it by the number of households, i.e., 5,567.
We use a precision level of $k = 10$ (see the algorithm for reals in \cref{fig:ldp-algos}) for our experiments.

\subsection{Possible Optimizations and Alternatives}
\label[appendix]{subsec:optimizations-alternatives}
While the experiments in \cref{sec:evaluation} show the practicality of our schemes, certain optimizations and/or different choices could be made with respect to our implementation.
Specifically, we discuss how different choices would influence the trade-off between security, efficiency, and practicality.

Our current choices were based on primitives that provide a strong level of security (targeting 128 bits) within practical times.
Given that our results show that performance is very practical, efficiency improvements are not a requirement for practical adoption, however may still be considered depending on the specific requirements of the use case.

\subsubsection*{SNARK-friendly CRH}
In our current implementation, we rely on the Blake2s CRH inside our \prf, \sig scheme, and \pcalgostyle{MerkleTree}.
Blake2s is a standardized scheme and its security level has been well vetted and confirmed.
Moreover, it is more efficient to encode inside a zk-SNARK circuit than alternatives with a similar security level, e.g., SHA256.
However, in some recent works we observe an increased interest in so-called SNARK-friendly hash functions, often algebraic hash functions that can be more efficiently computed inside a circuit.
Examples of such schemes are Poseidon~\cite{grassiPoseidonNewHash2021}, MiMC~\cite{albrechtMiMCEfficientEncryption2016}, and Pedersen~\cite{hopwoodZcashProtocolSpecification2023} hashes.
These schemes can reduce prover times by as much as a factor of 10~\cite{steidtmannBenchmarkingZKCircuitsCircom2023}.
However, this often comes at the cost of reduced security.
Poseidon and MiMC are novel constructs and have not yet been properly vetted by the community.
This novelty, in combination with their algebraic construction, might give rise to unforeseen attacks, such as the algebraic attacks shown against MiMC~\cite{eichlsederAlgebraicAttackCiphers2020}.
The Pedersen hash on the other hand is known to be secure, but has the downside that it relies on the discrete log assumption, whereas Blake2s requires no such assumption.
We can safely use the Pedersen hash inside our Merkle tree, since breaking the discrete log assumption would require efforts similar to breaking the zk-SNARK scheme we use.

However, we have chosen to not use Pedersen hashes for our \prf and \sig schemes.
The \prf is evaluated out-of-circuit in the \textsf{Base} and \textsf{Expand} schemes anyhow, and we chose to use the same primitive in the \textsf{Shuffle} scheme for comparability.

For the \sig schemes, we chose to use commonly available primitives, as in many use cases the trusted environment will not provide more novel or custom primitives, such as Pedersen or Poseidon hash functions.
However, in cases where such primitives are available, they could be used to improve efficiency at the cost of (slightly) decreased security and general applicability.

\subsubsection*{SNARK-friendly signatures}
Next to using a different hash function to transform the input message to a fixed digest, we could have also used a different signature scheme than Schnorr's.
However, it should be noted that Schnorr is commonly available and very efficient.
E.g., it is around 2-2.5x times faster than EdDSA inside a zk-SNARK circuit~\cite{steidtmannBenchmarkingZKCircuitsCircom2023}.
Although, due to the small number of constraints, the impact on the total performance of our scheme will be small.

\subsubsection*{zk-SNARK}
The predominant component for our computation times is determined by the zk-SNARK scheme and the way our constraints are encoded inside the zk-SNARK circuit.

First, we note that we implemented our constraint circuit using readily available `gadgets' from the Arkworks library.
While these gadgets are of good quality and are implemented efficiently, we do get some more constraints than are strictly required in hand-optimized constraint systems.
These constraints might possibly be reduced by manual inspection, however due to the large number of constraints we expect this to be a tedious task, for which we only get an (almost) negligible increase in performance.
Moreover, manual optimization of these constraints would reduce modularity of our software implementation, which may be a significant practical drawback.

Next to this, one could also consider using an alternative scheme to Groth16.
For example, schemes with a \emph{universal setup} (e.g., Marlin~\cite{chiesaMarlinPreprocessingZkSNARKs2020}), \emph{post-quantum security} (e.g., Fractal~\cite{chiesaFractalPostquantumTransparent2020}) or a \emph{transparent setup} (e.g., Fractal~\cite{chiesaFractalPostquantumTransparent2020} or SuperSonic~\cite{bunzTransparentSNARKsDARK2020}) could be employed.
Generally speaking, these alternatives are less efficient than Groth16, with respect to proof generation and verification time.
However, in some use cases, universal setups can be used to prevent the server from having to run one setup per LDP algorithm.
In practice, however we do not expect this trade-off to be beneficial.
Moreover, the removal of a trusted setup is not necessary in our case, since we assume the server to behave semi-honestly and not collude with any of the clients, i.e., it will not give the trapdoor to any of the clients.

\end{document}